\title{Operational Meanings of Orders of Observables Defined through 
Quantum Set Theories with Different Conditionals}
\author{Masanao Ozawa\thanks{Supported by JSPS KAKENHI, No.~26247016,
No.~15K13456.}
\institute{Graduate School of Information Science, Nagoya University\\ 
Nagoya, Japan}
\email{ozawa@is.nagoya-u.ac.jp}
}
 \renewcommand{\And}{\wedge}
\newcommand{\cut}[1]{}
\newcommand{\ignore}[1]{}
  \newcommand{\beq}{\begin{equation}}
  \newcommand{\eeq}{\end{equation}}
  \newcommand{\beql}[1]{\begin{equation}\label{eq:#1}}
  \newcommand{\beqa}{\begin{eqnarray}}
  \newcommand{\eeqa}{\end{eqnarray}}
  \newcommand{\beqas}{\begin{eqnarray*}}
  \newcommand{\eeqas}{\end{eqnarray*}}
  \newtheorem{Theorem}{Theorem}[section]
  \newtheorem{Proposition}[Theorem]{Proposition}
  \newtheorem{Lemma}[Theorem]{Lemma}
 \newenvironment{Proof}{\begin{trivlist}
    \item[\hskip \labelsep {\em \indent Proof.}]}{\qed\end{trivlist}}
  \newcommand{\N}{{\bf N}}
  \newcommand{\Q}{{\bf Q}}
  \newcommand{\R}{{\bf R}}
  \newcommand{\V}{{\bf V}}
  \newcommand{\al}{\alpha}
  \newcommand{\be}{\beta}
  \newcommand{\ch}{\chi}
  \newcommand{\et}{\eta}
  \newcommand{\la}{\lambda}
  \newcommand{\mb}{\mbox}
  \newcommand{\nn}{\nonumber}
  \newcommand{\om}{\omega}
  \newcommand{\ph}{\phi}
 \newcommand{\ps}{\psi}
  \newcommand{\si}{\sigma}
  \newcommand{\De}{\Delta}
  \newcommand{\Ga}{\Gamma}
  \newcommand{\Eq}[1]{Eq.~(\ref{eq:#1})}
  \newcommand{\IFF}{\Leftrightarrow}
  \newcommand{\Iff}{\Leftrightarrow}
  \newcommand{\Inf}{\bigwedge}
  \newcommand{\Not}{\neg}
  \newcommand{\Or}{\vee}
  \newcommand{\Sup}{\bigvee}
  \newcommand{\THEN}{\Rightarrow}
  \newcommand{\Then}{\Rightarrow}
  \newcommand{\VB}{\V^{(\B)}}
\renewcommand{\V}{V}
  \newcommand{\beqan}{\begin{eqnarray*}}
  \newcommand{\beqar}[1]{\begin{equation}\label{#1}\begin{array}{l}}
  \DeclareMathOperator{\dom}{dom}
  \newcommand{\eeqar}{\end{array}\end{equation}}
  \newcommand{\eq}[1]{(\ref{eq:#1})}
\newcommand{\bracket}[1]{\langle#1\rangle}
  \newcommand{\rank}{\mbox{\rm rank}}
\newcommand{\bmat}{\left[\begin{array}{rr}}
\newcommand{\emat}{\end{array}\right]}
\newcommand{\bvec}{\left[\begin{array}{r}}
\newcommand{\evec}{\end{array}\right]}
\newcommand{\btmat}{\left[\begin{array}{rrr}}
\newcommand{\etmat}{\end{array}\right]}
  \newcommand{\cA}{{\cal A}}
  \newcommand{\cB}{{\cal B}}
  \newcommand{\cC}{{\cal C}}
  \newcommand{\cD}{{\dom}}
  \newcommand{\cF}{{\cal F}}
  \newcommand{\cH}{{\cal H}}
  \newcommand{\cL}{{\cal L}}
  \newcommand{\cM}{{\cal M}}
  \newcommand{\cP}{{\cal P}}
  \newcommand{\cQ}{{\cal Q}}
  \newcommand{\cR}{{\cal R}}
  \newcommand{\tA}{\tilde{A}}
  \newcommand{\tP}{\tilde{P}}
  \newcommand{\tQ}{\tilde{Q}}
  \newcommand{\tX}{\tilde{X}}
  \newcommand{\tY}{\tilde{Y}}
\newcommand{\VQH}{\V^{(\cQ(\cH))}}
\newcommand{\VQ}{\V^{(\cQ)}}
\newcommand{\VL}{\V^{(\cQ)}}
\renewcommand{\VB}{\V^{(\cB)}}
\newcommand{\RQ}{\R^{(\cQ)}}
\newcommand{\LL}{\cL}
\renewcommand{\L}{L}
\renewcommand{\inf}{\bigwedge}
\renewcommand{\sup}{\bigvee}
\renewcommand{\Then}{\rightarrow}
\newcommand{\Thenj}{\rightarrow}
\newcommand{\Thenjj}{\rightarrow_j}
\renewcommand{\Iff}{\leftrightarrow}
\newcommand{\iin}{\in\cR}
\newcommand{\val}[1]{[\![#1]\!]}
\newcommand{\vval}[1]{[\![#1]\!]_{j,\cQ}}
\newcommand{\vvall}[1]{[\![#1]\!]_{j,\cQ(\cH)}}
\newcommand{\valj}[1]{[\![#1]\!]}
\newcommand{\tcom}{{}\!\!\perp\!\!\!\!\!\perp\!\!} 
\newcommand{\commutes}{\ {}^{|}\!\!{}_{\circ}\ }
\newcommand{\cuniv}{\underline{\Or}}
\newcommand{\cm}{\underline{\Or}}
\newcommand{\com}{{\tcom}}
\newcommand{\p}{{}^{\perp}}
\newcommand{\hu}{\hat{u}}
\newcommand{\ck}[1]{\check{#1}}
\newcommand{\cx}{\check{x}}
\newcommand{\cy}{\check{y}}
\newcommand{\af}{\,\et\,}
\newcommand{\id}{{\rm id}}
\newcommand{\bitem}{\begin{itemize}\itemsep=0in}
\newcommand{\benum}[1]{\begin{enumerate}[#1]\itemsep=0in}
\newcommand{\eenum}{\end{enumerate}}
\newcommand{\eitem}{\end{itemize}}
\begin{document}
\maketitle
\begin{abstract}
In quantum logic there is well-known arbitrariness in choosing a binary 
operation for conditional. Currently, we have at least three candidates,
called the Sasaki conditional, the contrapositive Sasaki conditional,
and the relevance conditional.  A fundamental problem is to show how the form 
of the conditional follows from an analysis of operational concepts in 
quantum theory.  Here, we attempt such an analysis through quantum set theory (QST).
In this paper, we develop quantum set theory based on quantum logics with those three 
conditionals, each of which defines different quantum logical truth
value assignment.  We show that those three models 
satisfy the transfer principle of the same form to determine the quantum 
logical truth values of theorems of the ZFC set theory.
We also show that the reals in the model and the truth values of their equality 
are the same for those models.
Interestingly, however, the order relation between quantum reals significantly 
depends on the underlying conditionals.  
We characterize the operational meanings of those order relations in terms of 
joint probability obtained by the successive projective measurements 
of arbitrary two observables.
Those characterizations clearly show their individual features 
and will play a fundamental role in future applications to quantum physics.
\end{abstract}

\section{Introduction}
Quantum set theory crosses over two different fields of mathematics, namely, 
foundations of mathematics and foundations of quantum mechanics, 
and originated from the methods of forcing 
introduced by Cohen \cite{Coh63,Coh66}  for 
the independence proof of the continuum hypothesis and quantum logic
introduced by Birkhoff and von Neumann \cite{BvN36}.
After Cohen's work, the forcing subsequently became a central method in 
set theory and also incorporated with various notions in 
mathematics, in particular,  the notion of sheaves \cite{FS79}
and notions of sets in nonstandard logics
such as Boolean-valued set theory \cite{Bel85},
by which Scott and Solovay \cite{SS67} reformulated the method of forcing,
topos \cite{Joh77}, and intuitionistic set theory \cite{Gra79}. 
As a successor of those attempts, quantum set theory, 
a set theory based on the Birkhoff-von Neumann quantum logic,
was introduced by Takeuti \cite{Ta81}, who established the one-to-one correspondence 
between reals in the model (quantum reals)  and quantum observables.
Quantum set theory was recently developed by the present author \cite{07TPQ,16A2}
to obtain the transfer principle to determine quantum truth values of theorems 
of the ZFC set theory,  and clarify the operational meaning 
of the equality between quantum reals, which extends the probabilistic 
interpretation of quantum theory,

In quantum logic there is well-known arbitrariness in choosing a binary 
operation for conditional. Hardegree \cite{Har81} defined a material conditional on 
an orthomodular lattice as a polynomially definable binary operation 
satisfying three fundamental requirements, and showed that there are 
exactly three binary operations satisfying those conditions: the Sasaki 
conditional, the contrapositive Sasaki conditional, and the relevance 
conditional. Naturally, a fundamental problem is to show how the form 
of the conditional follows from an analysis of the operational concept in 
quantum theory testable by experiments.  Here, we attempt such an 
analysis through quantum set theory. In quantum set theory (QST), 
the quantum logical truth values of two atomic formulas, equality and 
membership relations, depend crucially on the choice of conditional. In 
the previous investigations, we have adopted only the Sasaki 
conditional, proved the transfer principle to determine quantum truth 
values of theorems of the ZFC set theory, established the one-to-one 
correspondence between reals in the model, or ``quantum reals'',  and 
quantum observables, and clarified the operational meaning of the 
equality between quantum reals. In this paper, we study QST based on 
the above three material conditional together.
We construct the universal QST model based on the logic of the projection lattice of a 
von Neumann algebra with each conditional.
Then, we show that this new model satisfies the transfer principle of the same form as 
the old model based on the Sasaki conditional. We also show that the 
reals in the model and the truth values of their equality are the same for those three
models. Up to this point, those models behave indistinguishably. 
However, we reveal that the order relation between quantum reals
depend crucially on the underlying conditionals.
We characterize the operational meanings of those order relations,
which turn out closely related to the spectral order introduced by Olson \cite{Ols71}
playing a significant role in the topos approach to quantum theory \cite{DW12},
in terms of joint probability of the outcomes of the successive projective 
measurements of two observables.
Those characterizations clarify their individual features 
and will play a fundamental role in future applications to quantum physics.

\section{Preliminaries}
\subsection{Complete orthomodular lattices}

A {\em complete orthomodular lattice}  is a complete
lattice $\cQ$ with an {\em orthocomplementation},
a unary operation $\perp$ on $\cQ$ satisfying
(C1)  if $P \le Q$ then $Q^{\perp}\le P^{\perp}$,
(C2) $P^{\perp\perp}=P$,
(C3) $P\Or P^{\perp}=1$ and $P\And P^{\perp}=0$,
where $0=\Inf\cQ$ and $1=\Sup\cQ$,
that  satisfies the {\em orthomodular law}
(OM) if $P\le Q$ then $P\Or(P^{\perp}\And Q)=Q$.
In this paper, any complete orthomodular lattice is called a {\em logic}.
A non-empty subset of a logic $\cQ$ is called a {\em subalgebra} iff
it is closed under $\And $, $\Or$, and $\perp$.
A subalgebra $\cA$ of $\cQ$ is said to be {\em complete} iff it has
the supremum and the infimum in $\cQ$ of an arbitrary subset of $\cA$.
For any subset $\cA$ of $\cQ$, 
the subalgebra generated by $\cA$ is denoted by
$\Ga_0\cA$.
We refer the reader to Kalmbach \cite{Kal83} for a standard text on
orthomodular lattices.

We say that $P$ and $Q$ in a logic $\cQ$ {\em commute}, in  symbols
$P\commutes Q$, iff  $P=(P\And Q)\Or(P\And
Q^{\perp})$.  A logic $\cQ$ is a Boolean
algebra if and only if $P\commutes Q$  for all $P,Q\in\cQ$ \cite[pp.~24--25]{Kal83}.
For any subset $\cA\subseteq\cQ$,
we denote by $\cA^{!}$ the {\em commutant} 
of $\cA$ in $\cQ$ \cite[p.~23]{Kal83}, i.e., 
\[
\cA^{!}=
\{P\in\cQ\mid P\commutes Q \mbox{ for all }
Q\in\cA\}.
\]
Then, $\cA^{!}$ is a complete subalgebra of $\cQ$.
A {\em sublogic} of $\cQ$ is a subset $\cA$ of
$\cQ$ satisfying $\cA=\cA^{!!}$. 
For any subset $\cA\subseteq\cQ$, the smallest 
logic including $\cA$ is 
$\cA^{!!}$ called the  {\em sublogic generated by
$\cA$}.
Then, it is easy to see that a subset 
 $\cA$ is a Boolean sublogic, or equivalently 
 a distributive sublogic, if and only if 
$\cA=\cA^{!!}\subseteq\cA^{!}$.

The following proposition is useful in later discussions.

\begin{Proposition}\label{th:logic}
Let $\cQ$ be a  logic on $\cH$.
If $P_{\al}\in\cQ$ and 
$P_{\al}\commutes Q$ for all $\al$, then
$(\Sup_{\al}P_{\al})\commutes Q$, 
$\Inf_{\al}P_{\al}\commutes Q$,
$Q \And (\Sup_{\al}P_{\al})=\Sup_{\al}(Q\And
P_{\al})$C
$Q \And (\Inf_{\al}P_{\al})=\Inf_{\al}(Q\And
P_{\al})$.
\end{Proposition}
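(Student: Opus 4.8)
The plan is to split the four assertions into two groups: the two commutation claims, which follow almost immediately from facts already on record, and the meet--sup distributivity, which is the only place the orthomodular law does any real work.

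For the commutation claims, recall that $\cA^{!}$ is a complete subalgebra of $\cQ$ for every $\cA\subseteq\cQ$. Taking $\cA=\{Q\}$, the subalgebra $\{Q\}^{!}$ contains the supremum and infimum, formed in $\cQ$, of each of its subsets; since $\{P_{\al}\}\subseteq\{Q\}^{!}$ by hypothesis, we conclude $\Sup_{\al}P_{\al}\in\{Q\}^{!}$ and $\Inf_{\al}P_{\al}\in\{Q\}^{!}$, i.e.\ $(\Sup_{\al}P_{\al})\commutes Q$ and $(\Inf_{\al}P_{\al})\commutes Q$. If one prefers a self-contained argument: writing $P:=\Sup_{\al}P_{\al}$ and using $P_{\al}=(P_{\al}\And Q)\Or(P_{\al}\And Q^{\perp})$ together with monotonicity gives $P\le(P\And Q)\Or(P\And Q^{\perp})$, and the reverse inequality is trivial, so $P\commutes Q$; the infimum case then follows by orthocomplementation via $P\commutes Q\IFF P^{\perp}\commutes Q$.

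For the distributive law over suprema, put $P:=\Sup_{\al}P_{\al}$ and $R:=\Sup_{\al}(Q\And P_{\al})$; one inequality, $R\le Q\And P$, is immediate. For the converse I would use $P_{\al}\commutes Q$ to write $P_{\al}=(P_{\al}\And Q)\Or(P_{\al}\And Q^{\perp})\le R\Or Q^{\perp}$ for each $\al$, hence $P\le R\Or Q^{\perp}$ and so $Q\And P\le Q\And(R\Or Q^{\perp})$. The crux is then the identity $Q\And(R\Or Q^{\perp})=R$: since $R\le Q$ we have $R\le y:=(R\Or Q^{\perp})\And Q$, so by (OM) $y=R\Or(R^{\perp}\And y)$, and since $R^{\perp}\And y\le R^{\perp}\And Q\And(R\Or Q^{\perp})=(R\Or Q^{\perp})^{\perp}\And(R\Or Q^{\perp})=0$ by De~Morgan and (C3), we get $y=R$. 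Combining, $Q\And P\le R$, which yields $Q\And(\Sup_{\al}P_{\al})=\Sup_{\al}(Q\And P_{\al})$.

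The remaining identity $Q\And(\Inf_{\al}P_{\al})=\Inf_{\al}(Q\And P_{\al})$ requires no commutation at all: $\Inf_{\al}(Q\And P_{\al})\le Q$ and $\Inf_{\al}(Q\And P_{\al})\le P_{\be}$ for every $\be$, hence $\Inf_{\al}(Q\And P_{\al})\le Q\And\Inf_{\al}P_{\al}$, while the opposite inequality is clear by monotonicity; this holds in any complete lattice. Thus the only genuine obstacle is the supremum distributivity, and inside it the single nontrivial point is the reduction $Q\And(R\Or Q^{\perp})=R$ — precisely the step where orthomodularity, rather than mere orthocomplementation, is indispensable; everything else is monotonicity and De~Morgan bookkeeping.
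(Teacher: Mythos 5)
Your proof is correct, and it takes a genuinely different route from the paper's. The paper's own argument decomposes $\Sup_{\al}P_{\al}$ as the join of $A=\Sup_{\al}(P_{\al}\And Q)\le Q$ and $B=\Sup_{\al}(P_{\al}\And Q^{\perp})\le Q^{\perp}$, reads off $(\Sup_{\al}P_{\al})\commutes Q$ from that decomposition, obtains $Q\And(A\Or B)=\Sup_{\al}(Q\And P_{\al})$ by invoking the distributive law valid for elements commuting with $Q$ (Foulis--Holland), and then dispatches both infimum assertions by De Morgan. You instead (a) get the two commutation claims either from the recorded fact that $\{Q\}^{!}$ is a complete subalgebra or by the direct monotonicity argument $P\le(P\And Q)\Or(P\And Q^{\perp})\le P$, which is if anything slicker than the paper's; (b) prove the sup-distributivity by reducing it, via $P_{\al}\le R\Or Q^{\perp}$, to the single orthomodular identity $Q\And(R\Or Q^{\perp})=R$ for $R\le Q$, which you verify from (OM), De Morgan and (C3), thereby avoiding any appeal to commuting-triple distributivity; and (c) observe that $Q\And\Inf_{\al}P_{\al}=\Inf_{\al}(Q\And P_{\al})$ is a triviality of complete lattices needing no commutation at all, rather than routing it through De Morgan. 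What each approach buys: yours is more self-contained and isolates exactly where orthomodularity is indispensable, while the paper's is shorter given the standard machinery and handles suprema and infima symmetrically. Two small remarks: your equivalence $P\commutes Q\IFF P^{\perp}\commutes Q$ is a standard orthomodular-lattice fact (and is implicitly used in the paper's De Morgan step as well), so citing it is fine but it deserves a word; and your complete-lattice argument for the infimum identity, like the proposition itself and the paper's proof, tacitly assumes the index family is nonempty (for the empty family the identity reads $Q=1$ and fails), a caveat worth a clause but not a defect of your argument relative to the paper's.
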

\begin{Proof}
Suppose that $P_{\al}\in\cQ$ and $P_{\al}\commutes Q$ hold for every $\al$.
From
\[
\Sup_{\al}P_\al\And  Q\le Q,\quad
\Sup_{\al}P_\al\And  Q^\perp\le Q^{\perp},
\]
we have
\beqa\label{eq:sup-com}
\Sup_{\al}P_\al\And  Q\commutes Q,\quad
\Sup_{\al}P_\al\And  Q^\perp\commutes Q.
\eeqa
By the assumption, we have
$P_{\al}=(P_\al\And  Q)\Or(P_\al\And  Q^\perp)$ for every $\al$.
Since 
\beqas
\Sup_{\al}P_{\al}&=&
\Sup_{\al}(P_\al\And  Q)\Or(P_\al\And  Q^\perp)\\
&=&
(\Sup_{\al}P_\al\And  Q)\Or(\Sup_{\al}P_\al\And  Q^\perp),
\eeqas
by \Eq{sup-com} we have $\Sup_{\al}P_{\al}\commutes Q$.
By \Eq{sup-com} the distributive law holds and we have
\beqas
Q\And\Sup_{\al}P_{\al}
&=&
Q\And[(\Sup_{\al}P_\al\And Q)\Or(\Sup_{\al}P_\al\And  Q^\perp)]\\
&=&
\Sup_{\al}(P_\al\And  Q).
\eeqas
Thus, we have
$Q\And\Sup_{\al}P_{\al}=
\Sup_{\al}(Q\And P_\al)$.
The rest of the assertions follows from 
the De Morgan law.
\end{Proof}

\subsection{Logics on Hilbert spaces}
Let $\cH$ be a Hilbert space.
For any subset $S\subseteq\cH$,
we denote by $S^{\perp}$ the orthogonal complement
of $S$.
Then, $S^{\perp\perp}$ is the closed linear span of $S$.
Let $\cC(\cH)$ be the set of all closed linear subspaces in
$\cH$. 
With the set inclusion ordering, 
the set $\cC(\cH)$ is a complete
lattice. 
The operation $M\mapsto M^\perp$ 
is  an orthocomplementation
on the lattice $\cC(\cH)$, with which $\cC(\cH)$ is a logic.

Denote by $\cB(\cH)$ the algebra of bounded linear
operators on $\cH$ and $\cQ(\cH)$ the set of projections on $\cH$.
We define the {\em operator ordering} on $\cB(\cH)$ by
$A\le B$ iff $(\ps,A\ps)\le (\ps,B\ps)$ for
all $\ps\in\cH$. 
For any $A\in\cB(\cH)$, denote by $\cR (A)\in\cC(\cH)$
the closure of the range of $A$, {\em i.e.,} 
$\cR(A)=(A\cH)^{\perp\perp}$.
For any $M\in\cC(\cH)$,
denote by $\cP (M)\in\cQ(\cH)$ the projection operator 
of $\cH$
onto $M$.
Then, $\cR\cP (M)=M$ for all $M\in\cC(\cH)$
and $\cP\cR (P)=P$ for all $P\in\cQ(\cH)$,
and we have $P\le Q$ if and only if $\cR (P)\subseteq\cR (Q)$
for all $P,Q\in\cQ(\cH)$,
so that $\cQ(\cH)$ with the operator ordering is also a logic
isomorphic to $\cC(\cH)$.
Any sublogic of $\cQ(\cH)$ will be called a {\em logic on $\cH$}. 
The lattice operations are characterized by 
$P\And Q={\mb{weak-lim}}_{n\to\infty}(PQ)^{n}$, 
$P^\perp=1-P$ for all $P,Q\in\cQ(\cH)$.

Let $\cA\subseteq\cB(\cH)$.
We denote by $\cA'$ the {\em commutant of 
$\cA$ in $\cB(\cH)$}.
A self-adjoint subalgebra $\cM$ of $\cB(\cH)$ is called a
{\em von Neumann algebra} on $\cH$ iff 
$\cM''=\cM$.
For any self-adjoint subset $\cA\subseteq\cB(\cH)$,
$\cA''$ is the von Neumann algebra generated by $\cA$.
We denote by $\cP(\cM)$ the set of projections in
a von Neumann algebra $\cM$.
For any $P,Q\in\cQ(\cH)$, we have 
$P\commutes Q$ iff $[P,Q]=0$, where $[P,Q]=PQ-QP$.
For any subset $\cA\subseteq\cQ(\cH)$,
we denote by $\cA^{!}$ the {\em commutant} 
of $\cA$ in $\cQ(\cH)$.
For any subset $\cA\subseteq\cQ(\cH)$, the smallest 
logic including $\cA$ is the logic
$\cA^{!!}$ called the  {\em logic generated by
$\cA$}.  
Then, a subset $\cQ \subseteq\cQ(\cH)$ is a logic on $\cH$ if
and only if $\cQ=\cP(\cM)$ for some von Neumann algebra
$\cM$ on $\cH$ \cite[Proposition 2.1]{07TPQ}.

\subsection{Commutators}

Marsden \cite{Mar70} has introduced the commutator $\com(P,Q)$ 
of two elements $P$ and $Q$ of a logic $\cQ$ by 
\[
\com(P,Q)=(P\And Q)\Or(P\And Q\p)\Or(P\p\And Q)\Or(P\p\And Q\p).
\]
Bruns and Kalmbach \cite{BK73} have generalized this notion 
to finite subsets of $\cQ$ by 
\[
\com(\cF)=\Sup_{\al:\cF\to\{\id,\perp\}}\Inf_{P\in\cF}P^{\al(P)}
\]
for all $\cF\in\cP_{\om}(\cQ)$,
where $\cP_{\om}(\cQ)$ stands for the set of finite subsets of $\cQ$, and
$\{\id,\perp\}$ stands for the set consisting of the identity operation $\id$ and the 
orthocomplementation~$\perp$.
Generalizing this notion to arbitrary subsets $\cA$ of $\cQ$, Takeuti \cite{Ta81} defined
$\com(\cA)$ by
\[
\com(\cA)=\Sup \{E\in\cA^{!} \mid P_{1}\And E\commutes P_{2}\And E
\mb{ for all }P_{1},P_{2}\in\cA\},
\]
of any $\cA\in\cP(\cQ)$, where $\cP(\cQ)$ stands for the power set of $\cQ$.
Takeuti's definition has been reformulated in several more convenient forms  
\cite{Pul85,Che89,16A2}.

We have the following characterizations of commutators 
 in logics on Hilbert spaces \cite[Theorems 2.5, 2.6, Proposition 2.2]{07TPQ}.

\begin{Theorem}\label{th:com}
Let $\cQ$ be a logic on $\cH$
and let $\cA\subseteq\cQ$.
Then, we have the following relations.
\begin{enumerate}[\rm (i)]\itemsep=0in
\item
$\com(\cA)=\cP\{\ps\in\cH\mid 
[P_1,P_2]P_3\psi=0\mb{ for all } P_1,P_2,P_3\in \cA\}$.
\item
$\com(\cA)=\cP\{\ps\in\cH\mid [A,B]\ps=0 \mb{ for all }A,B\in\cA''\}$.
\eenum
\end{Theorem}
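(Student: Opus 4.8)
The plan is to prove (ii) first, in the form $\com(\cA)=E_2$, and then to deduce (i) by showing $M_1=M_2$. Throughout, write $M_1$ and $M_2$ for the subspaces appearing on the right of (i) and (ii) — both are closed, being intersections of kernels of bounded operators — and put $E_i=\cP(M_i)\in\cQ(\cH)$.

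\emph{Towards (ii).} I would first observe that $M_2$ is $\cA''$-invariant: for $\ps\in M_2$, $C\in\cA''$ and $A,B\in\cA''$, the Leibniz rule $[A,B]C=[A,BC]-B[A,C]$ (with $BC,C\in\cA''$) gives $[A,B]C\ps=0$. Since $\cA''$ is self-adjoint, $M_2$ is reducing, so $E_2\in(\cA'')'=\cA'$, and in particular $E_2\in\cA^!$. Next, $E_2$ lies in the family in Takeuti's definition of $\com(\cA)$: because $E_2\in\cA'$ we have $P_i\And E_2=P_iE_2$ and $(P_1E_2)(P_2E_2)-(P_2E_2)(P_1E_2)=[P_1,P_2]E_2$, which is $0$ on $M_2$ (as $P_1,P_2\in\cA''$) and $0$ on $M_2\p$; hence $P_1\And E_2\commutes P_2\And E_2$ and $E_2\le\com(\cA)$. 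For the reverse inequality, take any $E\in\cA^!$ with $P_1\And E\commutes P_2\And E$ for all $P_1,P_2\in\cA$; since $E$ commutes with the self-adjoint set $\cA$ it commutes with the von Neumann algebra $\cA''$, so $E\in\cA'$ and $[P_1,P_2]E=0$. The restriction of $A\mapsto AE$ to the reducing subspace $\ran E$ is then a normal $*$-homomorphism of $\cA''$ whose image is the von Neumann algebra generated on $\ran E$ by the commuting projections $\{PE:P\in\cA\}$, hence is abelian; therefore $[A,B]E=0$ for all $A,B\in\cA''$, i.e.\ $\ran E\subseteq M_2$ and $E\le E_2$. Taking the supremum, $\com(\cA)\le E_2$.

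\emph{Towards (i).} Here $M_2\subseteq M_1$ is immediate from $[P_1,P_2]P_3=[P_1,P_2P_3]-P_2[P_1,P_3]$, whose two summands both annihilate $M_2$. For $M_1\subseteq M_2$ the essential tool — the one that escapes an otherwise circular use of Leibniz's rule — is the weak-limit formula $P\And Q={\mbox{weak-lim}}_{n}(PQ)^{n}$, through the following sub-lemma: \emph{if $P,Q\in\cQ(\cH)$ satisfy $[P,Q]P\ps=[P,Q]Q\ps=0$, then $PQ\ps=QP\ps=(P\And Q)\ps$.} To prove it, write $x=[P,Q]\ps$; the two hypotheses give $x=PQP\p\ps=-QPQ\p\ps$, and since the first form puts $x\in\ran P$, applying $P$ to the second form yields $PQ(\ps-PQ\ps)=0$, i.e.\ $PQ\ps=(PQ)^2\ps$, hence $(PQ)^n\ps=PQ\ps$ for all $n$ and $PQ\ps=(P\And Q)\ps$ by the weak limit; the symmetric computation gives $QP\ps=(P\And Q)\ps$. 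Now for $\ps\in M_1$ and $P_1,P_2\in\cA$ the sub-lemma with $P_3=P_1$ and with $P_3=P_2$ gives $[P_1,P_2]\ps=0$, and applied also to $(P\p,Q)$ in place of $(P,Q)$ it gives, for each finite $\cF=\{P_1,\dots,P_n\}\subseteq\cA$, the ``Boolean'' decomposition $\ps=\sum_{\al}\big(\Inf_{P\in\cF}P^{\al(P)}\big)\ps$ summed over $\al\colon\cF\to\{\id,\perp\}$, with the summand projections pairwise orthogonal. This realizes $\overline{\cA''\ps}$ as a reducing subspace for $\cA''$ on which $\cA$, hence $\cA''$, acts abelianly, so $[A,B]\ps=0$ for all $A,B\in\cA''$, i.e.\ $\ps\in M_2$. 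Thus $M_1=M_2$ and $E_1=E_2=\com(\cA)$, which is (i).

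\emph{Main obstacle.} The delicate point is upgrading the pairwise sub-lemma to the Boolean decomposition for arbitrary finite $\cF$. I expect this to go by induction on $|\cF|$, but at the inductive step one must apply the sub-lemma to a pair $(F,P)$ with $F=\Inf_{i\le k}P_i^{\al(P_i)}$ no longer in $\cA$, which forces one to carry along the inductively available identity $F\ps=P_1^{\al(P_1)}\cdots P_k^{\al(P_k)}\ps$ in order to verify the hypotheses $[F,P]F\ps=[F,P]P\ps=0$; arranging the bookkeeping so that the induction genuinely closes — rather than re-introducing the length-$(k{+}1)$ case one is trying to control — is the real work. An alternative organisation isolating the same difficulty: since $\com(\cA)$, $E_1$ and $E_2$ all equal the infimum of their analogues taken over finite subsets (using Proposition~\ref{th:logic} for the behaviour of meets), one may assume $\cA\in\cP_{\om}(\cQ)$, but the finite case still requires precisely the Boolean-decomposition argument above.
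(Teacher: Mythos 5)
Much of what you write is sound, and it is worth noting that the paper itself does not prove this theorem at all --- it imports it from \cite{07TPQ} --- so a self-contained argument along your lines would be a genuinely independent route. Your treatment of (ii) works: $M_2$ is reducing for $\cA''$, so $E_2\in\cA'$; $E_2$ lies in the family in Takeuti's definition because $[P_1,P_2]E_2=0$; and any admissible $E$ satisfies $[A,B]E=0$ for all $A,B\in\cA''$ by restricting $\cA''$ to $\ran E$. (One small loose end: the supremum ranges over $E\in\cA^{!}$ \emph{inside} $\cQ$, so you should also check $E_2\in\cQ$; this is one line, since $M_2$ is invariant under $\cA'\supseteq\cM'$ as well, whence $E_2\in\cM$ where $\cQ=\cP(\cM)$.) The inclusion $M_2\subseteq M_1$ and your pairwise sub-lemma --- a nice elementary use of $P\And Q=\mbox{weak-lim}_n(PQ)^n$, and I checked the computation, it is correct --- are also fine.

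The genuine gap is exactly the one you flag: the inclusion $M_1\subseteq M_2$, i.e.\ the Boolean decomposition for finite $\cF$ with $|\cF|\ge 3$, is where the whole content of (i) lies, and it is not proved. The hypothesis $\ps\in M_1$ only controls the vectors $[P_1,P_2]P_3\ps$ with a \emph{single} generator $P_3\in\cA$, whereas to conclude that $\cA''$ acts abelianly on $\overline{\cA''\ps}$ you need $[P_1,P_2]$ to annihilate all monomial vectors $P_{i_1}^{\al_1}\cdots P_{i_k}^{\al_k}\ps$ (and your sentence ``this realizes $\overline{\cA''\ps}$ as a reducing subspace on which $\cA''$ acts abelianly'' silently uses a density-over-monomials argument that presupposes precisely this). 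The induction as you set it up does not close: to apply the sub-lemma to $(F_k,P_{k+1})$ with $F_k=\Inf_{i\le k}P_i^{\al_i}$ you must verify $[F_k,P_{k+1}]F_k\ps=[F_k,P_{k+1}]P_{k+1}\ps=0$, and evaluating $F_kP_{k+1}\ps$ already requires the inductive identity applied to the vector $P_{k+1}\ps$ --- whose membership in (the analogue of) $M_1$ is not given --- so the hypotheses of the sub-lemma at step $k+1$ involve exactly the length-$(k+1)$ products being controlled. Note also that $M_1$ is not obviously invariant under $\cA$, which is why this cannot be waved through. The fallback reduction to finite $\cA$ via $\com(\cA)=\Inf_{\cF}\com(\cF)$ is likewise not a consequence of Proposition~\ref{th:logic} and would itself need proof, and in any case leaves the same finite combinatorial problem open. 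As it stands, then, the proposal establishes (ii) and the easy half of (i); the decisive step of (i) remains an acknowledged but unfilled gap, and closing it needs a new idea (for instance a stronger inductive statement controlling the action on all monomial vectors simultaneously), not just bookkeeping.
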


\section{Conditionals}
\label{se:GIIQL}

In classical logic, the conditional operation 
$\Then$ is defined by negation $\perp$ and
disjunction $\Or$ as $P\Then Q=P^{\perp}\Or Q$.
In quantum logic there is a well-known arbitrariness in choosing 
a binary operation for conditional.
Hardegree \cite{Har81} defined a {\em material conditional} 
on an orthomodular lattice $\cQ$ as a polynomially definable binary operation
$\Then$ on $\cQ$ satisfying the following ``minimum implicative conditions'':
\begin{enumerate}[(i)]\itemsep=0in
\item[(LB)] If $P\commutes Q$, then 
$P\Then Q=P^{\perp}\Or Q$ for all $P,Q\in\cQ$.
\item[(E)]  $P\Then Q=1$ if and only if $P\le Q$. 
\item[(MP)] ({\it modus ponens}) $P\And(P\Then Q)\le Q$.
\item[(MT)] ({\it modus tollens}) $Q^{\perp}\And (P\Then Q) \le P^{\perp}$.
\eenum
Then, he proved that there are exactly three material conditionals:
\begin{enumerate}[(i)]\itemsep=0in
\item[(S)] (Sasaki conditional) $P\Then{}_{S}Q: =P^{\perp}\Or(P\And Q)$, 
\item[(C)] (Contrapositive Sasaki conditional) $P\Then{}_{C}Q: =(P\Or Q)^{\perp}\Or Q$,
\item[(R)] (Relevance conditional) $P\Then{}_{R}Q: =(P\And Q)\Or(P^{\perp}\And Q)\Or(P^{\perp}\And Q^{\perp})$.
\eenum

We shall denote by $\Thenjj$ with $j=S,C,R$ any one of the above material conditionals.
Once the conditional $\Then_{j}$ is specified, the logical equivalence $\Iff_{j}$ is defined
by
\[
P\Iff_{j} Q := (P\Then_{j} Q)\And(Q\Then_{j} P).
\]
Then, it is easy to see that we have
\[
P\Iff_{S} Q = P\Iff_{C} Q= P\Iff_{R} Q=(P\And Q)\Or (P^{\perp}\And Q^{\perp}).
\]
Thus, we write $\Iff$ for $\Iff_{j}$ for all $j=S,C,R$.

In the previous investigations \cite{Ta81,07TPQ,16A2} on quantum set theory 
only the Sasaki arrow was adopted as the conditional. 
In this paper, we develop a quantum set theory based on the above three 
conditionals together and show that they equally 
ensure the transfer principle for quantum set theory.
We shall also show that the notions of equality defined through those three 
are the same,  but that the order relations defined through them are different.

\newcommand{\ThenR}{\rightarrow_{R}}
\newcommand{\ThenS}{\rightarrow_{S}}

We have the following characterizations of conditionals in logics on Hilbert spaces.

\begin{Theorem}\label{th:com}
Let $\cQ$ be a logic on $\cH$
and let $P,Q\in\cQ$.
Then, we have the following relations.
\begin{enumerate}[\rm (i)]\itemsep=0in
\item $P{\Then}_S Q=\cP\{\ps\in\cH\mid Q^\perp P\ps=0\}$.
\item $P{\Then}_C Q=\cP\{\ps\in\cH\mid PQ^\perp\ps=0\}$.
\item $P{\Then}_R Q=\cP\{\ps\in\cH\mid Q^\perp P\ps=PQ^\perp\ps=0\}$.
\item $P{\Iff} Q=\cP\{\ps\in\cH\mid P\ps=Q\ps\}$.
\eenum
\end{Theorem}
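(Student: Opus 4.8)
The plan is to reduce all four assertions to elementary computations with projections, using that in $\cQ(\cH)$ the lattice operations are concrete: the isomorphism $\cQ(\cH)\cong\cC(\cH)$ recalled in Section~2 turns $\And$ into intersection of ranges, so $\cR(P\And Q)=\cR(P)\cap\cR(Q)$, and whenever two projections $A,B$ have orthogonal ranges (equivalently $AB=0$) their join is their sum, $A\Or B=A+B$, the right-hand side being the projection onto the closed subspace $\cR(A)\oplus\cR(B)$. I also note that every set displayed on the right of (i)--(iv) is the kernel of a bounded operator --- $Q^{\perp}P$, $PQ^{\perp}$, the pair $Q^{\perp}P,PQ^{\perp}$, and $P-Q$ respectively --- hence a closed subspace, so the symbol $\cP\{\,\cdots\}$ is unambiguous.

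For (i): since $\cR(P\And Q)\subseteq\cR(P)=\cR(P^{\perp})^{\perp}$, the projections $P^{\perp}$ and $P\And Q$ have orthogonal ranges, so $P\Then_{S}Q=P^{\perp}\Or(P\And Q)=P^{\perp}+(P\And Q)$. Splitting $\ps=P\ps+P^{\perp}\ps$, one sees $[P^{\perp}+(P\And Q)]\ps=\ps$ iff $(P\And Q)\ps=P\ps$, iff $P\ps\in\cR(P\And Q)=\cR(P)\cap\cR(Q)$; since $P\ps\in\cR(P)$ is automatic, this is equivalent to $P\ps\in\cR(Q)$, i.e. to $Q^{\perp}P\ps=0$, which is (i). For (ii) I would invoke the algebraic identity $P\Then_{C}Q=Q^{\perp}\Then_{S}P^{\perp}$ --- indeed $Q^{\perp}\Then_{S}P^{\perp}=Q\Or(Q^{\perp}\And P^{\perp})=Q\Or(P\Or Q)^{\perp}$ by De Morgan, which equals $(P\Or Q)^{\perp}\Or Q=P\Then_{C}Q$ --- and then apply (i) with $(P,Q)$ replaced by $(Q^{\perp},P^{\perp})$, getting $\cR(Q^{\perp}\Then_{S}P^{\perp})=\{\ps\mid(P^{\perp})^{\perp}Q^{\perp}\ps=0\}=\{\ps\mid PQ^{\perp}\ps=0\}$. (Alternatively, (ii) can be done directly as in (i), using $(P\Or Q)^{\perp}\le Q^{\perp}$ to write $P\Then_{C}Q=(P\Or Q)^{\perp}+Q$.)

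For (iii), I would argue directly with ranges. The three projections $E_{1}=P\And Q$, $E_{2}=P^{\perp}\And Q$, $E_{3}=P^{\perp}\And Q^{\perp}$ pairwise have orthogonal ranges (any two of them have ranges lying in complementary ranges of $P$ or of $Q$), so $P\Then_{R}Q=E_{1}+E_{2}+E_{3}$. For ``$\subseteq$'': if $\ps=E_{1}\ps+E_{2}\ps+E_{3}\ps$, applying $P$ annihilates $E_{2}\ps$ and $E_{3}\ps$ and fixes $E_{1}\ps$, so $P\ps=E_{1}\ps\in\cR(Q)$ and hence $Q^{\perp}P\ps=0$; applying $Q^{\perp}$ gives, symmetrically, $Q^{\perp}\ps=E_{3}\ps\in\cR(P^{\perp})$ and hence $PQ^{\perp}\ps=0$. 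For ``$\supseteq$'': assume $Q^{\perp}P\ps=0$ and $PQ^{\perp}\ps=0$. Since $P\ps\in\cR(P)\cap\cR(Q)=\cR(E_{1})$ and $E_{1}\le P$, we get $E_{1}\ps=E_{1}P\ps=P\ps$; since $Q^{\perp}\ps\in\cR(P^{\perp})\cap\cR(Q^{\perp})=\cR(E_{3})$ and $E_{3}\le Q^{\perp}$, we get $E_{3}\ps=E_{3}Q^{\perp}\ps=Q^{\perp}\ps$. Put $\chi=\ps-P\ps-Q^{\perp}\ps=P^{\perp}\ps-Q^{\perp}\ps$; using $Q^{\perp}\ps=Q^{\perp}P^{\perp}\ps$ (which follows from $Q^{\perp}P\ps=0$) one checks $P\chi=0$ and $Q^{\perp}\chi=0$, so $\chi\in\cR(P^{\perp})\cap\cR(Q)=\cR(E_{2})$; applying $E_{2}$ to $\ps=P\ps+Q^{\perp}\ps+\chi$ (and using pairwise orthogonality) yields $E_{2}\ps=\chi$, whence $(E_{1}+E_{2}+E_{3})\ps=P\ps+Q^{\perp}\ps+\chi=\ps$. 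This proves (iii).

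For (iv) I would deduce it from (i): since $P\Iff Q=(P\Then_{S}Q)\And(Q\Then_{S}P)$ and $\cR$ sends $\And$ to intersection, (i) gives $\cR(P\Iff Q)=\{\ps\in\cH\mid QP\ps=P\ps\ \text{and}\ PQ\ps=Q\ps\}$; on this set $(P\ps,Q\ps)=(\ps,PQ\ps)=(\ps,Q\ps)=\|Q\ps\|^{2}$ and, likewise, $(P\ps,Q\ps)=(QP\ps,\ps)=(P\ps,\ps)=\|P\ps\|^{2}$, so $\|P\ps-Q\ps\|^{2}=\|P\ps\|^{2}-2(P\ps,Q\ps)+\|Q\ps\|^{2}=0$; conversely $P\ps=Q\ps$ trivially satisfies both equations, which gives (iv). The one genuinely delicate point is the ``$\supseteq$'' half of (iii): one must be careful that the three summands of $P\Then_{R}Q$ really are pairwise range-orthogonal, and then correctly identify which summand absorbs which component of $\ps$ --- the middle component $\chi$ being the only nonobvious one. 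Everything else is bookkeeping with the two structural facts $\cR(P\And Q)=\cR(P)\cap\cR(Q)$ and ``$A\Or B=A+B$ for projections with orthogonal ranges'', both immediate from the $\cQ(\cH)\cong\cC(\cH)$ correspondence of Section~2.
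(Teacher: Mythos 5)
Your proposal is correct, and on (i) and (ii) it coincides with the paper's argument: the same splitting $\ps=P\ps+P^{\perp}\ps$ together with $\cR(P)\cap\cR(Q)=\cR(P\And Q)$ for (i), and the same contraposition identity $P\rightarrow_{C}Q=Q^{\perp}\rightarrow_{S}P^{\perp}$ for (ii) (which you, unlike the paper, actually verify via De Morgan). Where you diverge is (iii) and (iv). The paper disposes of (iii) in one line by invoking the lattice identity $P\rightarrow_{R}Q=(P\rightarrow_{S}Q)\And(P\rightarrow_{C}Q)$, so that the range is just the intersection of the ranges computed in (i) and (ii); you instead work directly from the definition, writing $P\rightarrow_{R}Q$ as the sum of the three pairwise range-orthogonal projections $P\And Q$, $P^{\perp}\And Q$, $P^{\perp}\And Q^{\perp}$ and exhibiting the decomposition $\ps=P\ps+\chi+Q^{\perp}\ps$ with $\chi=P^{\perp}\ps-Q^{\perp}\ps$. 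Your route is longer but self-contained: it never uses the unproved identity, and in fact it proves it implicitly (both sides are projections with the same range). For (iv) the paper reads off $PQ^{\perp}\ps=0$ and $P^{\perp}Q\ps=0$ from the relevance conditionals and concludes $P\ps=PQ\ps=Q\ps$ by pure algebra, whereas you use the Sasaki form of the biconditional, getting $QP\ps=P\ps$ and $PQ\ps=Q\ps$, and then close with the norm computation $\|P\ps-Q\ps\|^{2}=0$; this costs a few more lines but only depends on (i), not on (ii) or (iii). Both treatments rest on the same two structural facts (meet of projections has range the intersection of ranges; the join of range-orthogonal projections is their sum), so the difference is one of economy versus self-containedness rather than of substance.
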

\begin{Proof}
To show (i) suppose $\ps\iin (P\ThenS Q)$.  Then, we have
$\ps=P^{\perp}\ps+(P\And Q)\ps$,
so that we have $Q^\perp P\ps=0$.
Conversely, suppose $Q^\perp P\ps=0$.
Then, we have $P\ps=QP\ps+Q^{\perp}P\ps=QP\ps\in\cR(Q).$
Since $P\ps\in\cR(P)$, we have $P\ps\in\cR(P)\cap\cR(Q)=\cR(P\And Q)$.
It follows that $\ps=P^{\perp}\ps+P\ps=P^{\perp}\ps+(P\And Q)\ps\in\cR(P{\Then}_S Q)$. Thus, relation (i) holds.
Relation (ii) follows from the relation $P{\Then}_C Q=Q^{\perp}\ThenS P^{\perp}$.
Relation (iii) follows from the relation 
$P{\Then}_R Q=(P{\Then}_S Q)\And (P{\Then}_C Q)$.
To show relation (iv), suppose $\ps\in \cR(P{\Iff} Q)$.
Then, $\ps\in\cR(P{\Then}_R Q)\cap\cR(Q{\Then}_R P)$, and hence
$ PQ^\perp\ps=0$ and $P^\perp Q\ps=0$, so that $P\ps=PQ\ps=Q\ps$.  
Conversely, if $P\ps=Q\ps$, we have $Q^\perp P\ps=0$ and $P^\perp Q\ps=0$,
so that $\ps\in\cR(P{\Iff} Q)$.  Thus, relation (iv) follows.
\end{Proof}

The following theorem shows important properties of material conditionals
in establishing the transfer principle for quantum set theory.

\begin{Proposition}\label{th:GC}
The material conditionals $\Thenjj$ with $j=S,C,R$ satisfy the following properties.
\begin{enumerate}[\rm (i)]\itemsep=0in
\item $P\Thenjj Q\in\{P,Q\}^{!!}$ for all $P,Q\in\cQ$.
\item $(P\Thenjj Q)\And E
=[(P\And E)\Then(Q\And E)]\And E$ if $P, Q\commutes E$ for all $P,Q,E\in\cQ$.
\end{enumerate}
\end{Proposition}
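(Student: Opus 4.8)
For part (i), I would prove the stronger statement that each conditional $P \Thenjj Q$ lies in the subalgebra $\Ga_0\{P,Q\}$ generated by $P$ and $Q$, from which membership in the sublogic $\{P,Q\}^{!!}$ follows since $\Ga_0\{P,Q\} \subseteq \{P,Q\}^{!!}$. Looking at the three defining formulas in Section~\ref{se:GIIQL}, this is immediate: $P\ThenS Q = P^\perp \Or (P\And Q)$, $P\ThenS[C] Q = (P\Or Q)^\perp \Or Q$, and $P\ThenS[R] Q = (P\And Q)\Or(P^\perp\And Q)\Or(P^\perp\And Q^\perp)$ are all built from $P$, $Q$ by finitely many applications of $\And$, $\Or$, $\perp$. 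So (i) is essentially a bookkeeping observation, not an obstacle.

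For part (ii), assume $P \commutes E$ and $Q \commutes E$; I want $(P\Thenjj Q)\And E = [(P\And E)\Then(Q\And E)]\And E$. The strategy is to reduce everything to a distributivity computation inside the Boolean algebra $\{P,Q,E\}^{!!}$. The first thing I would check is that $P\commutes E$ and $Q\commutes E$ force $P\Thenjj Q$ to commute with $E$ as well: by part (i), $P\Thenjj Q \in \{P,Q\}^{!!} \subseteq \{E\}^{!}$, since $E$ commutes with both $P$ and $Q$ and $\{E\}^{!}$ is a sublogic. Hence all the relevant elements commute pairwise, and Proposition~\ref{th:logic} lets me distribute $\And E$ freely across $\And$ and $\Or$. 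Then I simply expand: for the Sasaki case, $(P\ThenS Q)\And E = (P^\perp \Or (P\And Q))\And E = (P^\perp\And E)\Or(P\And Q\And E)$, while $(P\And E)\ThenS(Q\And E) = (P\And E)^\perp \Or ((P\And E)\And(Q\And E)) = (P^\perp\Or E^\perp)\Or(P\And Q\And E)$; taking $\And E$ of the latter kills the $E^\perp$ term and I need $(P^\perp \Or E^\perp)\And E = P^\perp \And E$, which holds since $E^\perp \And E = 0$ and distribution is legal. The two sides match. The contrapositive and relevance cases are handled the same way, using the De Morgan laws and $(P\Or Q)\And E = (P\And E)\Or(Q\And E)$ to push $\And E$ inward; in the relevance case one also checks that $(P\And E)^\perp \And E = P^\perp\And E$ and similarly for $Q$.

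The only point requiring a little care — and what I would flag as the main obstacle — is making sure the identity $X^\perp \And E = (X\And E)^\perp \And E$ is used correctly whenever a complemented generator appears, since $(X\And E)^\perp = X^\perp \Or E^\perp$ rather than $X^\perp \And E^\perp$; the $\And E$ on the outside is exactly what repairs this. Once one observes that in every conditional formula the outermost operation is $\Or$ and that each disjunct, after intersecting with $E$, reduces to an intersection of terms each of the form (generator or its complement)$\And E$, the verification is a finite distributive-lattice computation that I would carry out case by case for $j=S,C,R$, invoking Proposition~\ref{th:logic} for every distribution step.
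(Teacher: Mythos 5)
Your proof is correct. For part (i) it coincides with the paper's argument: both observe that each of the three conditionals is an ortholattice polynomial in $P$ and $Q$, hence lies in $\Ga_0\{P,Q\}\subseteq\{P,Q\}^{!!}$. For part (ii) you take a genuinely different, more concrete route. The paper proves a general lemma (Lemma~\ref{th:restriction_property}): for \emph{every} two-variable ortholattice polynomial $f$ one has $f(P,Q)\And E=f(P\And E,Q\And E)\And E$ whenever $P\commutes E$ and $Q\commutes E$, established by induction on the complexity of $f$ (atomic, meet, join, and negation-of-atomic/meet/join cases, using De Morgan together with the distributivity supplied by Proposition~\ref{th:logic}); Proposition~\ref{th:GC} is then an instance because the three conditionals are polynomially definable. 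You instead verify the identity separately for $j=S,C,R$ by expanding both sides and pushing $\And E$ inward, using exactly the same two ingredients: distribution of $\And E$ over joins of elements commuting with $E$, and the repair identity $(X\And E)^{\perp}\And E=X^{\perp}\And E$. Your computation is sound, since every distribution you perform is of $\And E$ over terms that you have correctly shown to commute with $E$ (all compound expressions in $P,Q,E$ lie in $\{P,Q,E\}^{!!}\subseteq\{E\}^{!}$ because $\{P,Q,E\}\subseteq\{E\}^{!}$). What the paper's route buys is uniformity: one structural induction covers any polynomially defined conditional and parallels the induction reused for the $\De_0$-restriction principle; what your route buys is an elementary, fully explicit check of the three cases. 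One caveat: your framing remark that the computation happens ``inside the Boolean algebra $\{P,Q,E\}^{!!}$'' with ``all relevant elements commuting pairwise'' is wrong as stated, since $P$ and $Q$ need not commute and $\{P,Q,E\}^{!!}$ need not be Boolean; however, nothing in your actual computation uses pairwise commutativity---only commutation with $E$ is invoked at each step---so the argument stands once that sentence is deleted or rephrased.
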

\begin{Proof}
Assertions follow from the Lemma below.
\end{Proof}

\begin{Lemma}
\label{th:restriction_property}
Let $f$ be a two-variable ortholattice polynomial on a logic $\cQ$ on $\cH$.
Then, we have the following statements.
\begin{enumerate}
\item[\rm (i)] $f(P,Q)\in\{P,Q\}^{!!}$ for all $P,Q\in\cQ$.
\item[\rm (ii)] $
f(P,Q)\And E=f(P\And E,Q\And E)\And E$ if $P, Q\commutes E$ for all $P,Q,E\in\cQ$.
\end{enumerate}
\end{Lemma}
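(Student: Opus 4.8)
The plan is to prove both statements simultaneously by induction on the structure of the two-variable ortholattice polynomial $f$, i.e.\ on a term built from the variables $x,y$, the constants $0,1$, and the operations $\And$, $\Or$, $\perp$. For (i), the base cases $f(x,y)\in\{x,y,0,1\}$ are immediate: $\{P,Q\}^{!!}$ is by definition a sublogic containing $P$ and $Q$, hence also $0$ and $1$. For the inductive step, a sublogic is a complete subalgebra of $\cQ$ and is therefore closed under $\And$, $\Or$, and $\perp$; so if $g(P,Q),h(P,Q)\in\{P,Q\}^{!!}$, then $g(P,Q)\And h(P,Q)$, $g(P,Q)\Or h(P,Q)$, and $g(P,Q)^{\perp}$ all lie in $\{P,Q\}^{!!}$. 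This settles (i).

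The key preliminary for (ii) is that, under the hypothesis $P\commutes E$ and $Q\commutes E$, \emph{every} subterm $g$ of $f$ — evaluated at $(P,Q)$ or at $(P\And E,Q\And E)$ — commutes with $E$. Indeed, $\{E\}^{!}$ is a sublogic of $\cQ$ containing $P$ and $Q$, so by minimality $\{P,Q\}^{!!}\subseteq\{E\}^{!}$; by (i), $g(P,Q)\in\{P,Q\}^{!!}\subseteq\{E\}^{!}$, whence $g(P,Q)\commutes E$. Since every element below $E$ commutes with $E$ (as $(R\And E)\Or(R\And E^{\perp})=R\Or 0=R$ for $R\le E$), the same argument applied with $P\And E$ and $Q\And E$ in place of $P,Q$ gives $g(P\And E,Q\And E)\commutes E$.

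I would then prove (ii) by induction on $f$, comparing $f(P,Q)\And E$ with $f(P\And E,Q\And E)\And E$. The base cases and the case $f=g\And h$ follow from idempotency and associativity of $\And$ together with the induction hypothesis. For $f=g\Or h$, apply Proposition~\ref{th:logic} (distributivity of $\And$ over $\Or$ for commuting elements) twice: first $(g(P,Q)\Or h(P,Q))\And E=(g(P,Q)\And E)\Or(h(P,Q)\And E)$ since $g(P,Q),h(P,Q)\commutes E$; then rewrite each summand by the induction hypothesis; then re-collect using Proposition~\ref{th:logic} again, now because $g(P\And E,Q\And E),h(P\And E,Q\And E)\commutes E$. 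The only genuinely delicate case is $f=g^{\perp}$. Setting $A=g(P,Q)$ and $B=g(P\And E,Q\And E)$, the induction hypothesis gives $A\And E=B\And E$ with $A\commutes E$ and $B\commutes E$, and I must deduce $A^{\perp}\And E=B^{\perp}\And E$. The point is the identity $R^{\perp}\And E=(R\And E)^{\perp}\And E$ valid whenever $R\commutes E$: by De Morgan $(R\And E)^{\perp}=R^{\perp}\Or E^{\perp}$, and since $R^{\perp}\commutes E$ and $E^{\perp}\commutes E$, Proposition~\ref{th:logic} gives $(R^{\perp}\Or E^{\perp})\And E=(R^{\perp}\And E)\Or(E^{\perp}\And E)=R^{\perp}\And E$. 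Applying this to $R=A$ and $R=B$ yields $A^{\perp}\And E=(A\And E)^{\perp}\And E=(B\And E)^{\perp}\And E=B^{\perp}\And E$, completing the induction. I expect this negation step to be the main obstacle: one must route through the ``relative complement'' identity instead of naively distributing $\perp$, and this is precisely where commutation with $E$ is used essentially; the other cases are routine. Finally, Proposition~\ref{th:GC} is obtained by instantiating $f$ as each of the three polynomials defining $\Then_{S}$, $\Then_{C}$, $\Then_{R}$.
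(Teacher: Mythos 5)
Your proof is correct, and its overall skeleton (structural induction on the polynomial, the observation that every subterm evaluated at $(P,Q)$ or at $(P\And E,Q\And E)$ commutes with $E$, and distributivity for commuting elements via Proposition~\ref{th:logic}) matches the paper's. The genuine difference is in the negation step, which you rightly identify as the only delicate case. The paper handles $f=g^{\perp}$ by a sub-case analysis on the shape of $g$: if $g$ is atomic it computes directly, and if $g=g_1\And g_2$ or $g=g_1\Or g_2$ it applies De Morgan to rewrite $g^{\perp}$ as a join/meet of $g_1^{\perp},g_2^{\perp}$ and then invokes the induction hypothesis for those lower-complexity negated terms — in effect pushing negations toward the atoms. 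You instead isolate the ``relative complement'' identity $R^{\perp}\And E=(R\And E)^{\perp}\And E$ for $R\commutes E$ (proved by De Morgan plus Proposition~\ref{th:logic}) and deduce $g(P,Q)^{\perp}\And E=g(P\And E,Q\And E)^{\perp}\And E$ directly from the induction hypothesis for $g$ itself, using that both $g(P,Q)$ and $g(P\And E,Q\And E)$ commute with $E$ (the latter because $P\And E,Q\And E\le E$). This buys a uniform structural induction with no case split under the negation and makes explicit exactly where commutation with $E$ enters, whereas the paper's route keeps the induction purely on term complexity at the cost of the nested case analysis; both are valid, and your justification that $g(P,Q)\in\{P,Q\}^{!!}\subseteq\{E\}^{!}$ (and likewise for the restricted arguments) is exactly the commutation fact the paper also relies on. Part (i) is handled the same way in both arguments, via closure of the sublogic $\{P,Q\}^{!!}$ under the ortholattice operations.
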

\begin{Proof}
Since $f(P,Q)$ is in the ortholattice $\Ga_0\{P,G\}$ generated by $P$ and $Q$
and we have $\Ga_0\{P,G\}\subseteq\{P,Q\}^{!!}$, so that statement (i) follows.
The proof of (ii) is carried out by induction on the complexity of the polynomial $f(P,Q)$.
First, note that from $P,Q\commutes E$ we have $g(P,Q)\commutes E$
for any two-variable polynomial $g$.  If $f(P,Q)=P$ or $f(P,Q)=Q$, 
assertion (ii) holds obviously.  If $f(P,Q)=g_1(P,Q)\And g_2(P,Q)$
with two-variable polynomials $g_1,g_2$, the assertion holds from associativity.
Suppose that  $f(P,Q)=g_1(P,Q)\Or g_2(P,Q)$
with two-variable polynomials $g_1,g_2$.
Since $g_1(P,Q), g_2(P,Q)\commutes E$, the assertion follows from the 
distributive law focusing on $E$.
Suppose $f(P,Q)=g(P,Q)^\perp$ with a two-variable polynomial $g$.  
For the case where $g$ is atomic, the assertion follows; for instance, if $g(P,Q)=P$, 
we have 
$f(P\And E,Q\And E)\And E=
(P\And E)^\perp\And E=
(P^\perp\Or E^\perp)\And E=P^\perp\And E
=f(P,Q)\And E$.
Then, we assume $g(P,Q)=g_1(P,Q)\And g_2(P,Q)$
or $g(P,Q)=g_1(P,Q)\Or g_2(P,Q)$ with two-variable 
polynomials $g_1,g_2$.  If
$g(P,Q)=g_1(P,Q)\And g_2(P,Q)$, 
by the induction hypothesis and the distributivity
we have
\beqas
f(P,Q)\And E
&=&
g(P,Q)^\perp\And E\\
&=&
(g_1(P,Q)^\perp\Or g_2(P,Q)^\perp)\And E\\
&=&
(g_1(P,Q)^\perp\And E)
\Or 
(g_2(P,Q)^\perp\And E)\\
&=&
(g_1(P\And E,Q\And E)^\perp\And E)
\Or 
(g_2(P\And E,Q\And E)^\perp\And E)\\
&=&
(g_1(P\And E,Q\And E)^\perp
\Or 
g_2(P\And E,Q\And E)^\perp)\And E)\\
&=&
(g_1(P\And E,Q\And E)
\And
g_2(P\And E,Q\And E))^\perp\And E\\
&=&
g(P\And E,Q\And E)^\perp\And E\\
&=&
f(P\And E,Q\And E)\And E.
\eeqas
Thus, the assertion follows if 
$g(P,Q)=g_1(P,Q)\And g_2(P,Q)$,
and similarly the assertion follows if 
$g(P,Q)=g_1(P,Q)\Or g_2(P,Q)$.
Thus, the assertion generally follows from the induction on the complexity of
the polynomial $f$.
\end{Proof}

\renewcommand{\Then}{\rightarrow_{j}}

\section{Quantum set theory}
\label{se:UQ}

We denote by $\V$ the universe 
of the Zermelo-Fraenkel set theory
with the axiom of choice (ZFC).
Let $\cL(\in)$ be the language 
for first-order theory with equality 
augmented by a binary relation symbol
$\in$, bounded quantifier symbols $\forall x\in y$,
$\exists x \in y$, and no constant symbols.
For any class $U$, 
the language $\cL(\in,U)$ is the one
obtained by adding a name for each element of $U$.

Let $\cQ$ be a logic on $\cH$.
For each ordinal $ {\al}$, let
\[
\V_{\al}^{(\cQ)} = \{u|\ u:\dom(u)\to \cQ \mbox{ and }
(\exists \be<\al)
\dom(u) \subseteq V_{\be}^{(\cQ)}\}.
\]
The {\em $\cQ$-valued universe} $\VL$ is defined
by 
\[
  \VL= \bigcup _{{\al}{\in}\mbox{On}} V_{{\al}}^{(\cQ)},
\]
where $\mbox{On}$ is the class of all ordinals. 
For every $u\in\VQ$, the rank of $u$, denoted by
$\rank(u)$,  is defined as the least $\al$ such that
$u\in \VQ_{\al+1}$.
It is easy to see that if $u\in\dom(v)$ then 
$\rank(u)<\rank(v)$.

In what follows $\Then$ generally denotes one of 
the Sasaki conditional $\Thenj_S$, the contrapositive Sasaki 
conditional $\Thenj_C$, and the relevance conditional $\Thenj_R$.
For any $u,v\in\VL$, the $\cQ$-valued truth values of
atomic formulas $u=v$ and $u\in v$ are assigned
by the following rules recursive in rank.
\begin{enumerate}[(i)]\itemsep=0in
\item $\vval{u = v}
= \inf_{u' \in  \cD(u)}(u(u') \Then
\vval{u'  \in v})
\And \inf_{v' \in   \cD(v)}(v(v') 
\Then \vval{v'  \in u})$.
\item $ \vval{u \in v} 
= \sup_{v' \in \cD(v)} (v(v')\And \vval{u =v'})$.
\end{enumerate}

To each statement $\ph$ of $\cL(\in,\VL)$ 
we assign the
$\cQ$-valued truth value $ \val{\ph}_{j,\cQ}$ by the following
rules.
\begin{enumerate}[(i)]\itemsep=0in
\setcounter{enumi}{2}
\item $ \vval{\Not\ph} = \vval{\ph}^{\perp}$.
\item $ \vval{\ph_1\And\ph_2} 
= \vval{\ph_{1}} \And \vval{\ph_{2}}$.
\item $ \vval{\ph_1\Or\ph_2} 
= \vval{\ph_{1}} \Or \vval{\ph_{2}}$.
\item $ \vval{\ph_1\rightarrow\ph_2} 
= \vval{\ph_{1}} \Then \vval{\ph_{2}}$.
\item $ \vval{\ph_1\Iff\ph_2} 
= \vval{\ph_{1}} \Iff \vval{\ph_{2}}$.
\item $ \vval{(\forall x\in u)\, {\ph}(x)} 
= \Inf_{u'\in \dom(u)}
(u(u') \Then \vval{\ph(u')})$.
\item $ \vval{(\exists x\in u)\, {\ph}(x)} 
= \Sup_{u'\in \dom(u)}
(u(u') \And \vval{\ph(u')})$.
\item $ \vval{(\forall x)\, {\ph}(x)} 
= \Inf_{u\in \VL}\vval{\ph(u)}$.
\item $ \vval{(\exists x)\, {\ph}(x)} 
= \Sup_{u\in \VL}\vval{\ph(u)}$.
\end{enumerate}

A formula in $\cL(\in)$ is called a {\em
$\De_{0}$-formula}  if it has no unbounded quantifiers
$\forall x$ or $\exists x$.
The following theorem holds.

\sloppy
\begin{Theorem}[$\De_{0}$-Absoluteness Principle]
\label{th:Absoluteness}
\sloppy  
For any $\De_{0}$-formula 
${\ph} (x_{1},{\ldots}, x_{n}) $ 
of $\cL(\in)$ and $u_{1},{\ldots}, u_{n}\in \VQ$, 
we have
\[
\vval{\ph(u_{1},\ldots,u_{n})}=
\val{\ph(u_{1},\ldots,u_{n})}_{j,\cQ(\cH)}.
\]
\end{Theorem}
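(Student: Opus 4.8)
The plan is to argue by a two-level induction: first I would establish the absoluteness of the atomic truth values $\vval{u=v}$ and $\vval{u\in v}$ by transfinite induction on rank, and then extend this to all $\De_0$-formulas by induction on logical complexity. The crucial preliminary observation is that all the lattice-theoretic operations entering the truth-value recursion take the same values whether computed in $\cQ$ or in $\cQ(\cH)$. Indeed, since $\cQ$ is a logic on $\cH$ we have $\cQ=\cQ^{!!}$, so by Proposition~\ref{th:logic} $\cQ$ is closed under the arbitrary suprema and infima of $\cQ(\cH)$: if each $P_\al\in\cQ$ and $R\in\cQ^{!}$, then every $P_\al$ commutes with $R$, hence $\Sup_\al P_\al$ commutes with $R$; as $R$ was arbitrary in $\cQ^{!}$ this gives $\Sup_\al P_\al\in\cQ^{!!}=\cQ$, and dually for infima. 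Consequently the finite operations $\And,\Or,\perp$, arbitrary $\inf$ and $\sup$, and --- since each conditional is an ortholattice polynomial with $P\Then Q\in\{P,Q\}^{!!}\subseteq\cQ$ by Proposition~\ref{th:GC}(i) --- the conditional $\Then$ and the biconditional $\Iff$ all agree on $\cQ$ with their counterparts in $\cQ(\cH)$. One also checks routinely that $\VQ\subseteq\VQH$ with $\rank$ absolute, so that the statement is meaningful in the first place.

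For the atomic step, note that by the defining clauses (i)--(ii) the computation of $\vval{u=v}$ refers only to the values $\vval{u'\in v}$ and $\vval{v'\in u}$ with $\rank(u')<\rank(u)$ or $\rank(v')<\rank(v)$, while the computation of $\vval{u\in v}$ refers only to $\vval{u=v'}$ with $\rank(v')<\rank(v)$; hence this recursion is well founded and absoluteness can be proved along it. In the clause for $\vval{u=v}$ the right-hand side is $\inf_{u'\in\dom(u)}(u(u')\Then\vval{u'\in v})\And\inf_{v'\in\dom(v)}(v(v')\Then\vval{v'\in u})$: here $u(u'),v(v')\in\cQ$, the inner truth values are absolute by the induction hypothesis, the conditional is absolute by the preliminary observation, and the two infima are taken over the fixed sets $\dom(u),\dom(v)$ --- literally the same in both universes --- and are therefore absolute as well. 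The clause $\vval{u\in v}=\sup_{v'\in\dom(v)}(v(v')\And\vval{u=v'})$ is treated identically. Thus $\vval{u=v}=\vvall{u=v}$ and $\vval{u\in v}=\vvall{u\in v}$ for all $u,v\in\VQ$.

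Finally I would induct on the complexity of a $\De_0$-formula $\ph(u_1,\ldots,u_n)$. The atomic case is the previous step. The connectives $\Not,\And,\Or,\rightarrow,\Iff$ are immediate from clauses (iii)--(vii), since the value of the compound formula is obtained by applying $\perp,\And,\Or,\Then,\Iff$ to the (absolute) values of its immediate subformulas. For a bounded universal quantifier, clause (viii) gives $\vval{(\forall x\in u)\,\ph(x)}=\inf_{u'\in\dom(u)}(u(u')\Then\vval{\ph(u')})$, which is absolute because each $\vval{\ph(u')}$ is absolute by the induction hypothesis, the conditional is absolute, and --- this is the one place where $\De_0$-ness is essential --- the infimum ranges over the fixed set $\dom(u)$ rather than over all of $\VQ$ versus all of $\VQH$; the bounded existential case is symmetric via clause (ix). Since a $\De_0$-formula has no unbounded quantifiers, the induction is complete. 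The only genuinely delicate point is the closure of $\cQ$ under the infinitary lattice operations of $\cQ(\cH)$ used in the preliminary observation; without it the suprema and infima in clauses (i), (ii), (viii), (ix) computed inside $\cQ$ could a priori differ from those in $\cQ(\cH)$ and the argument would already fail at the atomic level. Granting that, everything else is bookkeeping.
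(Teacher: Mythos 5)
Your proposal is correct and follows essentially the same route as the paper: a double induction on rank (for the atomic clauses) and on formula complexity, with the key observation that the suprema and infima in the truth-value clauses range over the same domains in both universes and that the lattice operations and conditionals agree. Your explicit verification that $\cQ=\cQ^{!!}$ is closed under the arbitrary suprema and infima of $\cQ(\cH)$ (via Proposition~\ref{th:logic}) merely spells out what the paper leaves implicit in its appeal to common ranges, so there is no substantive difference.
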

\begin{Proof}
The assertion is proved by the induction on the complexity
of formulas and the rank of elements of $\VQ$.
Let $u,v\in\VL$.
\cut{
From Theorem \ref{th:cuniv} and Proposition \ref{th:sublogic}, 
we have
$\vval{\cuniv(u,v)}=\val{\cuniv(u,v)}_{\cQ(\cH)}$.
}
We assume that the assertion holds for all $u'\in\dom(u)$
and $v'\in\dom(v)$.
Then, we have $\vval{u'\in v}=\vvall{u'\in v}$,
$\vval{v'\in u}=\vvall{v'\in u}$,
and $\vval{u=v'}=\vvall{u=v'}$.
Thus, 
\beqas
\vval{u=v}
&=&\Inf_{u'\in\dom(u)}(u(u')\Then\vval{u'\in v})
\And
\Inf_{v'\in\dom(v)}(v(v')\Then\vval{v'\in u})\\
&=&\Inf_{u'\in\dom(u)}(u(u')\Then\vvall{u'\in v})
\And
\Inf_{v'\in\dom(v)}(v(v')\Then\vvall{v'\in u})\\
&=&
\vvall{u=v},
\eeqas
and  we also have
\beqas
\vval{u\in v}
&=&
\Sup_{v'\in\dom(v)}(v(v')\And\vval{u=v'})\\
&=&
\Sup_{v'\in\dom(v)}(v(v')\And\vvall{u=v'})\\
&=&
\vvall{u\in v}.
\eeqas
Thus, the assertion holds for atomic formulas.
Any induction step adding a logical symbol works
easily, even when bounded quantifiers are concerned,
since the ranges of the supremum and the infimum 
are common for evaluating $\vval{\cdots}$ and 
$\vvall{\cdots}$. 
\end{Proof}

Henceforth, 
for any $\De_{0}$-formula 
${\ph} (x_{1},{\ldots}, x_{n}) $
and $u_1,\ldots,u_n\in\VQ$,
we abbreviate $\val{\ph(u_{1},\ldots,u_{n})}_j=
\vval{\ph(u_{1},\ldots,u_{n})}$,
which is the common $\cQ(\cH)$-valued truth value for 
$u_{1},\ldots,u_{n}\in\VQ$.

\renewcommand{\val}[1]{[\![#1]\!]_j}

The universe $\V$  can be embedded in
$\VQ$ by the following operation 
$\vee:v\mapsto\check{v}$ 
defined by the $\in$-recursion: 
for each $v\in\V$, $\check{v} = \{\check{u}|\ u\in v\} 
\times \{1\}$. 
Then we have the following.
\begin{Theorem}[$\De_0$-Elementary Equivalence Principle]
\label{th:2.3.2}
\sloppy  
For any $\De_{0}$-formula 
${\ph} (x_{1},{\ldots}, x_{n}) $ 
of $\cL(\in)$ and $u_{1},{\ldots}, u_{n}\in V$,
we have
$
\bracket{\V,\in}\models {\ph}(u_{1},{\ldots},u_{n})
\mbox{ if and only if }
\val{\ph(\check{u}_{1},\ldots,\check{u}_{n})}=1.
$
\end{Theorem}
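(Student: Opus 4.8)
The plan is to prove the $\De_0$-Elementary Equivalence Principle by $\in$-recursion, reducing the general $\De_0$ case to the atomic case via the usual induction on formula complexity, and then handling the atomic formulas $u \in v$ and $u = v$ simultaneously by induction on $\rank(\check u) + \rank(\check v)$. The key observation making this work is that for the check-embedding all truth values land in $\{0,1\}$: one shows by $\in$-recursion that $\val{\check u \in \check v}, \val{\check u = \check v} \in \{0,1\}$ for all $u,v \in \V$, using that $\check v(\check u') = 1$ always, that the conditionals $\Then_j$ restricted to $\{0,1\}$ all agree with the classical one, and that $\{0,1\}$ is closed under $\And$, $\Or$, $\perp$ and arbitrary $\Inf$, $\Sup$. (Here I would invoke Theorem~\ref{th:com}(iv) and the explicit forms of the conditionals, or simply note that on the Boolean subalgebra $\{0,1\}$ every conditional is $P^\perp \Or Q$ by condition (LB).)

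Granting the $0/1$-valuedness, the argument is then purely a matching of the recursive clauses (i), (ii) defining $\vval{\cdot}$ against the Tarski truth clauses for $\bracket{\V,\in}$. First I would establish the atomic base case: assuming inductively that $\val{\check u' \in \check v} = 1 \iff u' \in v$ for the relevant lower-rank pairs, clause (ii) gives
\[
\val{\check u \in \check v} = \sup_{u' \in v}\bigl(1 \And \val{\check u = \check v'}\bigr) = \sup_{u' \in v}\val{\check u = \check{u'}},
\]
which equals $1$ iff $\val{\check u = \check{u'}} = 1$ for some $u' \in v$; by the companion induction hypothesis for equality this happens iff $u = u'$ for some $u' \in v$, i.e. iff $u \in v$. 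Symmetrically, clause (i) gives
\[
\val{\check u = \check v} = \inf_{u' \in u}\bigl(1 \Then \val{\check{u'} \in \check v}\bigr) \And \inf_{v' \in v}\bigl(1 \Then \val{\check{v'} \in \check u}\bigr),
\]
and since $1 \Then P = P$ for any conditional (by (MP): $1 \And (1 \Then P) \le P$, and (E) or (LB) gives $\ge$), this is $1$ iff $\val{\check{u'} \in \check v} = 1$ for all $u' \in u$ and $\val{\check{v'} \in \check u} = 1$ for all $v' \in v$; by the $\in$-membership half of the induction this is exactly $u \subseteq v$ and $v \subseteq u$, i.e. $u = v$ by extensionality.

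For the induction step on formula complexity, the propositional connectives are immediate from clauses (iii)--(vii) together with the fact that $\{0,1\}$ is a two-element Boolean algebra on which $\And, \Or, \perp, \Then_j, \Iff$ are the classical operations, so $\val{\cdot}_j$ commutes with these; the bounded quantifiers follow from clauses (viii), (ix) since $\Sup$ over $\dom(\check u) = \{\check{u'} : u' \in u\}$ of $0/1$-values is $1$ iff some disjunct is $1$, matching $\exists x \in u$, and dually for $\forall x \in u$ using again $1 \Then P = P$. One also needs that $\check{\cdot}$ respects the correspondence $\dom(\check u) \leftrightarrow u$, which is immediate from the definition $\check v = \{\check u : u \in v\} \times \{1\}$.

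The main obstacle is not any single step but the need to get the $0/1$-valuedness cleanly and to run the equality/membership inductions in the correct mutual order — i.e. choosing the right well-founded measure (rank sum, or a careful nested induction) so that each appeal to the induction hypothesis is to a genuinely smaller instance; this is the standard subtlety in Boolean-valued model arguments (cf. \cite{Bel85}) and carries over here verbatim because the three conditionals, being material conditionals in Hardegree's sense, all collapse to the classical conditional on $\{0,1\}$. Once that bookkeeping is set up, everything else is routine clause-matching, and in particular the proof is uniform in $j \in \{S,C,R\}$.
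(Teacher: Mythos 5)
Your proposal is correct, but it takes a more self-contained route than the paper. The paper's proof is a two-step reduction: it first observes (by a routine induction) that $\bracket{\V,\in}\models\ph(u_{1},\ldots,u_{n})$ iff the truth value of $\ph(\check{u}_{1},\ldots,\check{u}_{n})$ computed in the two-element sublogic ${\bf 2}=\{0,1\}$ equals $1$ — i.e., it works inside the classical Boolean-valued universe over ${\bf 2}$, where $0/1$-valuedness is automatic — and then invokes the already-proved $\De_{0}$-Absoluteness Principle (Theorem \ref{th:Absoluteness}) to conclude that this ${\bf 2}$-valued truth value coincides with the one computed in the ambient logic, uniformly in $j=S,C,R$. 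You instead run a single direct induction inside $\VQ$: you carry along the claim that all truth values of checked formulas lie in $\{0,1\}$, handle the atomic clauses by a mutual induction on a rank-sum measure, and use that every Hardegree material conditional restricts to the classical conditional on $\{0,1\}$ (in particular $1\rightarrow_j P=P$; note your appeal to (E)/(MP) for this is slightly off — (LB) gives it immediately since $1$ commutes with everything). What the paper's factorization buys is brevity and reuse: the sublogic ${\bf 2}$ plus absoluteness outsources exactly the bookkeeping ($0/1$-valuedness, the well-founded measure for the $\in$/$=$ recursion, the collapse of the connectives) that you verify by hand, and reduces the statement to the standard Boolean-valued argument of Bell. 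What your version buys is independence from Theorem \ref{th:Absoluteness} and an explicit display of why the three conditionals play no role; both arguments are sound.
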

\begin{Proof}
Let ${\bf 2}$ be the sublogic such that ${\bf 2}=\{0,1\}$.
Then, by induction it is easy to see that 
$
\bracket{\V,\in}\models  {\ph}(u_{1},{\ldots},u_{n})
\mbox{ if and only if }
\valj{\ph(\check{u}_{1},\ldots,\check{u}_{n})}_{j,\bf 2}=1
$
for any ${\ph} (x_{1},{\ldots}, x_{n})$ in $\LL(\in)$, 
and this is
equivalent  to $\val{\ph(\check{u}_{1},\ldots,\check{u}_{n})}=1$
for any $\De_{0}$-formula ${\ph} (x_{1},{\ldots}, x_{n})$ 
by the $\De_0$-absoluteness principle.
\end{Proof}

\section{Transfer principle}
\label{se:ZFC}\label{se:TPQ}

In this section, we investigate the transfer principle that transfers any $\De_0$-formula
provable in ZFC to a true statement about elements of $\VQ$.

The results in this section have been obtained for $j=S$ in Ref.~\cite{07TPQ}.
Here, we generalize them to the case $j=C,R$.
For $u\in\VQ$, we define the {\em support} 
of $u$, denoted by $L(u)$, by transfinite recursion on the 
rank of $u$ by the relation
\[
L(u)=\bigcup_{x\in\dom(u)}L(x)\cup\{u(x)\mid x\in\dom(u)\}.
\]
For $\cA\subseteq\VQ$ we write 
$L(\cA)=\bigcup_{u\in\cA}L(u)$ and
for $u_1,\ldots,u_n\in\VQ$ we write 
$L(u_1,\ldots,u_n)=L(\{u_1,\ldots,u_n\})$.

For $u\in\VQ$, we define the {\em support} 
of $u$, denoted by $L(u)$, by transfinite recursion on the 
rank of $u$ by the relation
\[
L(u)=\bigcup_{x\in\dom(u)}L(x)\cup\{u(x)\mid x\in\dom(u)\}.
\]
For $\cA\subseteq\VQ$ we write 
$L(\cA)=\bigcup_{u\in\cA}L(u)$ and
for $u_1,\ldots,u_n\in\VQ$ we write 
$L(u_1,\ldots,u_n)=L(\{u_1,\ldots,u_n\})$.
Then, we obtain the following characterization of
subuniverses of $V^{(\cQ(\cH))}$.

\begin{Proposition}\label{th:sublogic}
Let $\cQ$ be a logic on $\cH$ and $\al$ an
ordinal. For any $u\in V^{(\cQ(\cH))}$, we have
$u\in\VL_{\al}$  if and only if
$u\in V^{(\cQ(\cH))}_{\al}$ and $L(u)\subseteq\cQ$.  
In particular, $u\in\VL$ if and only if
$u\in\VQH$ and $L(u)\subseteq\cQ$. 
Moreover, $\rank(u)$ is the least $\al$ such 
that $u\in \VQH_{\al}$ for  any $u\in\VL$.
\end{Proposition}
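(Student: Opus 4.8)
The plan is to establish, by transfinite induction on the ordinal $\al$, the equivalence ``$u\in\VL_{\al}$ if and only if $u\in\VQH_{\al}$ and $L(u)\subseteq\cQ$'', and then to read off the ``in particular'' and ``moreover'' clauses as immediate corollaries. I first note that this proposition involves no choice of conditional: the universes $\VL$ and $\VQH$ are built by one and the same rank recursion, differing only in whether the assigning functions take values in $\cQ$ or in $\cQ(\cH)$, so the argument is literally the one given for the case $j=S$ in \cite{07TPQ}. Throughout I use that, directly from the recursive definition of the support, $u(x)\in L(u)$ and $L(x)\subseteq L(u)$ for every $x\in\dom(u)$.

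For the induction step, fix $\al$ and assume the equivalence holds for every $\be<\al$. Suppose first $u\in\VL_{\al}$; then $\ran(u)\subseteq\cQ$ and $\dom(u)\subseteq\VL_{\be}$ for some $\be<\al$. Applying the induction hypothesis to each $x\in\dom(u)$ gives $\dom(u)\subseteq\VQH_{\be}$, hence $u\in\VQH_{\al}$ (as $\cQ\subseteq\cQ(\cH)$), and also gives $L(x)\subseteq\cQ$, so that $L(u)=\bigcup_{x\in\dom(u)}L(x)\cup\{u(x)\mid x\in\dom(u)\}\subseteq\cQ$. Conversely, suppose $u\in\VQH_{\al}$ and $L(u)\subseteq\cQ$, and pick $\be<\al$ with $\dom(u)\subseteq\VQH_{\be}$. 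For each $x\in\dom(u)$ we have $L(x)\subseteq L(u)\subseteq\cQ$, so the induction hypothesis gives $x\in\VL_{\be}$; moreover $u(x)\in L(u)\subseteq\cQ$, so $\ran(u)\subseteq\cQ$. Hence $\dom(u)\subseteq\VL_{\be}$ with $\ran(u)\subseteq\cQ$, i.e.\ $u\in\VL_{\al}$.

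The ``in particular'' clause follows by taking the union over all ordinals $\al$. For the ``moreover'' clause, the level-wise identity just established shows that $\VL_{\al}$ coincides with $\{v\in\VQH_{\al}\mid L(v)\subseteq\cQ\}$ for every $\al$, so the least level at which a given $u\in\VL$ first appears is the same in the two hierarchies; this yields the stated characterization of $\rank(u)$.

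There is no substantial obstacle here; the content is essentially bookkeeping. The one point requiring care is the set-up of the induction: when $u\in\VQH_{\al}$, every $x\in\dom(u)$ already lies at a strictly lower level $\VQH_{\be}$, and the inclusions $u(x)\in L(u)$ and $L(x)\subseteq L(u)$ are precisely what let one transport the hypothesis $L(u)\subseteq\cQ$ down to the hypotheses needed at level $\be$; once this is noticed, both directions are forced.
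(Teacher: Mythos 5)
Your proposal is correct and takes essentially the same route as the paper, whose entire proof is ``Immediate from transfinite induction on $\al$''; you have simply filled in the bookkeeping of that induction (both directions of the level-wise equivalence, plus the ``in particular'' and ``moreover'' clauses) in a way that matches the intended argument.
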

\begin{Proof}Immediate from transfinite induction on
$\al$.
\end{Proof}

Let $\cA\subseteq\VQ$.  The {\em commutator
of $\cA$}, denoted by $\cm(\cA)$, is defined by 
\[
\cuniv(\cA)=\com (L(\cA)).
\]
For any $u_1,\ldots,u_n\in\VQ$, we write
$\cuniv(u_1,\ldots,u_n)=\cuniv(\{u_1,\ldots,u_n\})$.

Let $u\in\VQ$ and $p\in\cQ$.
The {\em restriction} $u|_p$ of $u$ to $p$ is defined by
the following transfinite recursion:
\beqas
\dom(u|_p)&=&\{x|_p\mid x\in\dom(u)\},\\
u|_p(x|_p)&=&u(x)\And p
\eeqas
for any $x\in\dom(u)$.
By induction, it is easy to see that if $q, p\in\cQ$, then $(u|_p)|_q=u|_{p\And q}$ for 
all $u\in\VQ$.

\begin{Proposition}\label{th:L-restriction}
For any $\cA\subseteq \VQ$ and $p\in\cQ$, 
we have 
\[
L(\{u|_p\mid u\in\cA\})=L(\cA)\And p.
\]
\end{Proposition}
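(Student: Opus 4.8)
The plan is to prove the identity $L(\{u|_p\mid u\in\cA\})=L(\cA)\And p$ by transfinite induction on ranks, reducing first to the case of a single $u\in\VQ$. Indeed, since both $L(-)$ and the operation $(-)\And p$ distribute over unions, we have $L(\{u|_p\mid u\in\cA\})=\bigcup_{u\in\cA}L(u|_p)$ and $L(\cA)\And p=\bigl(\bigcup_{u\in\cA}L(u)\bigr)\And p=\bigcup_{u\in\cA}(L(u)\And p)$, so it suffices to show $L(u|_p)=L(u)\And p$ for every $u\in\VQ$.

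For the single-element statement, I would argue by transfinite recursion on $\rank(u)$. Recall the defining relations
\[
L(u)=\bigcup_{x\in\dom(u)}L(x)\cup\{u(x)\mid x\in\dom(u)\},
\qquad
\dom(u|_p)=\{x|_p\mid x\in\dom(u)\},\quad (u|_p)(x|_p)=u(x)\And p.
\]
Plugging the second line into the first applied to $u|_p$ gives
\[
L(u|_p)=\bigcup_{x\in\dom(u)}L(x|_p)\;\cup\;\{u(x)\And p\mid x\in\dom(u)\}.
\]
By the induction hypothesis $L(x|_p)=L(x)\And p$ for each $x\in\dom(u)$ (since $\rank(x)<\rank(u)$), and $\{u(x)\And p\mid x\in\dom(u)\}=\{u(x)\mid x\in\dom(u)\}\And p$, so the right-hand side equals $\bigl(\bigcup_{x\in\dom(u)}L(x)\bigr)\And p\;\cup\;\bigl(\{u(x)\mid x\in\dom(u)\}\And p\bigr)$, which is exactly $\bigl(\bigcup_{x\in\dom(u)}L(x)\cup\{u(x)\mid x\in\dom(u)\}\bigr)\And p=L(u)\And p$. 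This closes the induction.

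The only point requiring any care — and the closest thing to an obstacle — is that the manipulation uses "$S\And p$" as shorthand for $\{q\And p\mid q\in S\}$ applied to arbitrary (possibly infinite) subsets $S\subseteq\cQ$, and one must check that this set-level operation genuinely commutes with arbitrary unions and with the union-of-supports recursion; but this is purely a set-theoretic bookkeeping fact about the operation $q\mapsto q\And p$ on elements, with no lattice-theoretic subtlety (no suprema or commutators are involved, only elementwise meets). One should also note the degenerate base case where $\dom(u)=\emptyset$, in which $L(u)=\emptyset=L(u|_p)$ and the identity holds trivially. Hence the induction goes through without difficulty.
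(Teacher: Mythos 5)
Your argument is correct and follows exactly the route the paper takes: the paper's proof is the one-line remark that $L(u|_p)=L(u)\And p$ is seen by induction (on rank), from which the assertion for a set $\cA$ follows by taking unions. You have simply spelled out the induction step and the elementwise reading of $S\And p$, which the paper leaves implicit.
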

\begin{Proof}
By induction, it is  easy to see the relation
$
L(u|_p)=L(u)\And p,
$
so that the assertion follows easily.
\end{Proof}

Let $\cA\subseteq\VQ$.  The {\em logic
generated by $\cA$}, denoted by $\cQ(\cA)$, is  define by 
\[
\cQ(\cA)=L(\cA)^{!!}.
\]
For $u_1,\ldots,u_n\in\VQ$, we write
$\cQ(u_1,\ldots,u_n)=\cQ(\{u_1,\ldots,u_n\})$.

\begin{Proposition}\label{th:range}
For any $\De_0$-formula $\ph(x_1,\ldots,x_n)$ in
$\LL(\in)$ and $u_1,\cdots,u_n\in\VQ$,
we have $\val{\ph(u_1,\ldots,u_n)}\in\cQ(u_1,\ldots,u_n)$.
\end{Proposition}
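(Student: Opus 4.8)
The statement to prove is Proposition~\ref{th:range}: for any $\De_0$-formula $\ph(x_1,\ldots,x_n)$ and $u_1,\ldots,u_n\in\VQ$, the truth value $\val{\ph(u_1,\ldots,u_n)}$ lies in the sublogic $\cQ(u_1,\ldots,u_n)=L(u_1,\ldots,u_n)^{!!}$. The plan is to proceed by a double induction: the outer induction on the complexity of the $\De_0$-formula $\ph$, and, for the atomic cases, an inner transfinite induction on the ranks of $u_1,\ldots,u_n$. The key structural facts we will lean on are: (a) $L(u)^{!!}$ is always a (complete) sublogic of $\cQ$, hence closed under $\And$, $\Or$, $\perp$ and arbitrary suprema/infima; (b) Proposition~\ref{th:GC}(i), namely $P\Thenjj Q\in\{P,Q\}^{!!}$, which ensures the conditional does not take us outside the sublogic generated by its arguments; and (c) the monotonicity $L(x)\subseteq L(u)$ whenever $x\in\dom(u)$, so that $\cQ(x)\subseteq\cQ(u)$, and more generally $L$ and hence $\cQ(\cdot)$ are monotone under inclusion of the argument sets.

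First I would treat the atomic formulas $u=v$ and $u\in v$ together by transfinite induction on $\max(\rank(u),\rank(v))$. For $u\in v$: by rule (ii), $\val{u\in v}=\sup_{v'\in\dom(v)}(v(v')\And\val{u=v'})$. Each $v(v')\in L(v)\subseteq L(u,v)$ by definition of $L$, and $\rank(v')<\rank(v)$, so by the induction hypothesis $\val{u=v'}\in\cQ(u,v')\subseteq\cQ(u,v)$ (using $L(v')\subseteq L(v)$). Since $\cQ(u,v)$ is a complete sublogic, it is closed under $\And$ and arbitrary $\sup$, so the whole expression lies in $\cQ(u,v)$. For $u=v$: by rule (i), $\val{u=v}$ is an infimum over $u'\in\dom(u)$ of terms $u(u')\Thenjj\val{u'\in v}$, conjoined with the symmetric terms. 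Here $u(u')\in L(u)\subseteq\cQ(u,v)$, and $\val{u'\in v}\in\cQ(u',v)\subseteq\cQ(u,v)$ by the $u\in v$ case just handled together with $L(u')\subseteq L(u)$. By Proposition~\ref{th:GC}(i), $u(u')\Thenjj\val{u'\in v}\in\{u(u'),\val{u'\in v}\}^{!!}\subseteq\cQ(u,v)$, since a sublogic contains the sublogic generated by any of its subsets. Closure of $\cQ(u,v)$ under $\inf$ and $\And$ then finishes the atomic case.

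Next, the induction step for logical connectives and bounded quantifiers is routine: $\val{\Not\ph}=\val{\ph}^\perp$, $\val{\ph_1\And\ph_2}=\val{\ph_1}\And\val{\ph_2}$, similarly for $\Or$, and $\val{\ph_1\rightarrow\ph_2}=\val{\ph_1}\Thenjj\val{\ph_2}$ — the last again by Proposition~\ref{th:GC}(i) stays inside the sublogic generated by the two truth values. For the bounded quantifier $(\forall x\in u)\,\psi(x)$ we have $\val{(\forall x\in u)\,\psi(x)}=\inf_{u'\in\dom(u)}(u(u')\Thenjj\val{\psi(u')})$; each $u(u')\in L(u)$ and, by the outer induction hypothesis applied to $\psi$, $\val{\psi(u',\ldots)}\in\cQ(u',\ldots)\subseteq\cQ(u,\ldots)$, so the conditional and then the infimum remain inside $\cQ(u_1,\ldots,u_n)$. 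The case $(\exists x\in u)$ is analogous with $\And$ and $\sup$. Since $\De_0$-formulas have no unbounded quantifiers, these are all the cases.

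The main obstacle — and really the only place requiring care rather than bookkeeping — is making sure the conditional $\Thenjj$ never escapes the relevant sublogic; this is precisely what Proposition~\ref{th:GC}(i) delivers, and it is the reason the proof works uniformly for $j=S,C,R$ despite the three conditionals being genuinely different operations. The remaining subtlety is the careful tracking of which sublogic contains which truth value across the nested inductions (ensuring $L(x)\subseteq L(u)$ for $x\in\dom(u)$ and hence the monotonicity $\cQ(\ldots,u',\ldots)\subseteq\cQ(\ldots,u,\ldots)$), but this is a direct consequence of the recursive definition of $L$ and involves no genuine difficulty.
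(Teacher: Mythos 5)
Your argument is correct, but it takes a different route from the paper. The paper disposes of this proposition in three lines: setting $\cA=\{u_1,\ldots,u_n\}$, it notes $L(\cA)\subseteq\cQ(\cA)$, invokes Proposition~\ref{th:sublogic} to conclude $u_1,\ldots,u_n\in V^{(\cQ(\cA))}$, and then applies the $\De_0$-Absoluteness Principle, which says the truth value computed in the smaller universe $V^{(\cQ(\cA))}$ coincides with the one computed in $\VQ$ and is therefore automatically an element of $\cQ(\cA)$. You instead re-run the underlying induction directly: a double induction on formula complexity and on ranks for the atomic cases, using that $\cQ(u_1,\ldots,u_n)=L(u_1,\ldots,u_n)^{!!}$ is a complete subalgebra (hence closed under $\perp$, $\And$, $\Or$ and arbitrary suprema and infima taken in $\cQ$), monotonicity of $L$ under $x\in\dom(u)$, and Proposition~\ref{th:GC}(i) to keep the three conditionals inside the generated sublogic via $\{P,Q\}^{!!}\subseteq\cQ(u_1,\ldots,u_n)^{!!}=\cQ(u_1,\ldots,u_n)$. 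In effect you are inlining the induction that proves the absoluteness principle, which makes your proof self-contained and makes explicit why the argument is uniform in $j=S,C,R$, at the cost of redoing work the paper already has available; the paper's route buys brevity and reuse of Propositions~\ref{th:sublogic} and \ref{th:Absoluteness}. One small bookkeeping point: transfinite induction on $\max(\rank(u),\rank(v))$ is not quite a legitimate measure, since the reduction from $u\in v$ to $u=v'$ (and from $u=v$ to $u'\in v$) need not decrease the maximum; use instead the sum of the ranks, or a lexicographic pair, under which every reduction step strictly decreases. With that adjustment the proof goes through as you describe.
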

\begin{Proof}
Let $\cA=\{u_1,\ldots,u_n\}$.
Since $L(\cA)\subseteq\cQ(\cA)$, it follows from
Proposition \ref{th:sublogic} that $u_1,\ldots,u_n\in
V^{(\cQ(\cA))}$.
By the $\De_0$-absoluteness
principle, we have 
$\val{\ph(u_1,\ldots,u_n)}=
\val{\ph(u_1,\ldots,u_n)}{}_{\cQ(\cA)}\in \cQ(\cA)$.
\end{Proof}

\begin{Proposition}\label{th:commutativity}
For any 
$\De_{0}$-formula ${\ph} (x_{1},{\ldots}, x_{n})$ 
in $\LL(\in)$ and $u_{1},{\ldots}, u_{n}\in\VQ$, if 
$p\in L(u_1,\ldots,u_n)^{!}$, then 
$p\commutes \val{\ph(u_1,\ldots,u_n)}$
and $p\commutes \val{\ph(u_1|_p,\ldots,u_n|_p)}$.
\end{Proposition}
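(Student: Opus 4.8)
The plan is to prove both assertions simultaneously by induction on the complexity of the $\De_0$-formula $\ph$, with a nested transfinite induction on the ranks of $u_1,\ldots,u_n$ for the atomic cases. Write $\cA=\{u_1,\ldots,u_n\}$ throughout and note that $p\in L(\cA)^{!}$ is the standing hypothesis. The key tool is Proposition~\ref{th:GC}(ii): for the conditional $\Then$ in use, $(P\Then Q)\And E=[(P\And E)\Then(Q\And E)]\And E$ whenever $P,Q\commutes E$, together with Proposition~\ref{th:logic}, which lets $p$ pass through arbitrary suprema and infima of elements that commute with it, and Proposition~\ref{th:range}, which tells us that each truth value $\val{\psi(\ldots)}$ already lies in the sublogic generated by its arguments.

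First I would handle the atomic formulas $u=v$ and $u\in v$. For the commutativity claim $p\commutes\val{u=v}$: by the recursion, $\val{u=v}$ is an infimum over $u'\in\dom(u)$ and $v'\in\dom(v)$ of terms $u(u')\Then\val{u'\in v}$ and $v(v')\Then\val{v'\in u}$. Each coefficient $u(u')$ lies in $L(u)\subseteq L(\cA)$, hence commutes with $p$; each $\val{u'\in v}$ lies in $\cQ(u',v)=L(u',v)^{!!}\subseteq L(\cA)^{!!}$ by Proposition~\ref{th:range} (since $L(u')\cup L(v)\subseteq L(\cA)$), hence also commutes with $p$; and then $u(u')\Then\val{u'\in v}\in\{u(u'),\val{u'\in v}\}^{!!}\subseteq L(\cA)^{!!}$ by Proposition~\ref{th:GC}(i), so it commutes with $p$; finally Proposition~\ref{th:logic} gives that the infimum commutes with $p$. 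The case $u\in v$ is analogous, using a supremum of $v(v')\And\val{u=v'}$. For the second claim, $p\commutes\val{\ph(u_1|_p,\ldots,u_n|_p)}$: by Proposition~\ref{th:L-restriction}, $L(\{u_i|_p\})=L(\cA)\And p$, every element of which is $\le p$ and hence commutes with $p$, so $L(\{u_i|_p\})\subseteq\{p\}^{!}$; then by the same argument just given (applied with $\cA$ replaced by $\{u_i|_p\}$, whose support lies in $\{p\}^!$) we get $p\commutes\val{\ph(u_1|_p,\ldots,u_n|_p)}$ for atomic $\ph$, and the induction below extends this.

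Then the induction step adding a logical connective is routine: for $\neg$, $\And$, $\Or$ use that commutation is preserved under $\perp$, $\And$, $\Or$ (the latter two via Proposition~\ref{th:logic} for the binary case, trivially for finite meets/joins); for $\rightarrow$ and $\Iff$ use Proposition~\ref{th:GC}(i) again — $\val{\ph_1}\Then\val{\ph_2}\in\{\val{\ph_1},\val{\ph_2}\}^{!!}$, and both arguments commute with $p$ by the induction hypothesis, so the whole lies in $\{p\}^{!}$. For the bounded quantifiers $(\forall x\in u)$ and $(\exists x\in u)$, the truth value is $\Inf_{u'\in\dom(u)}(u(u')\Then\val{\ph(u')})$ or $\Sup_{u'\in\dom(u)}(u(u')\And\val{\ph(u')})$; each $u(u')\in L(u)\subseteq L(\cA)$ commutes with $p$, each $\val{\ph(u')}$ commutes with $p$ by the induction hypothesis (noting $L(u',u_1,\ldots,u_n)\subseteq L(\cA)$), and then Proposition~\ref{th:logic} closes the infimum/supremum. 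The same pattern applies verbatim to $\val{\ph(u_1|_p,\ldots,u_n|_p)}$ with $L(\cA)$ replaced by $L(\cA)\And p\subseteq\{p\}^{!}$, using that $\dom(u_i|_p)=\{x|_p\mid x\in\dom(u_i)\}$ so the bounded-quantifier ranges track the restriction correctly.

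The main obstacle I anticipate is purely bookkeeping: making sure that in the atomic and bounded-quantifier cases the arguments appearing under the infimum/supremum (namely $u'$, $v'$, restrictions thereof) still have support inside $L(\cA)$ — this follows because $x\in\dom(u_i)$ implies $L(x)\subseteq L(u_i)\subseteq L(\cA)$ by the definition of $L$, but it must be invoked to legitimately apply Proposition~\ref{th:range} to the sub-truth-values. There is also the subtlety that $\Then$ here denotes \emph{any one} of the three conditionals $\Thenj_S,\Thenj_C,\Thenj_R$, so one must only use properties uniform across all three; but that is exactly what Proposition~\ref{th:GC} provides, so no case split on $j$ is needed. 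No genuinely hard step remains once Propositions~\ref{th:logic}, \ref{th:range}, \ref{th:L-restriction}, and \ref{th:GC} are in hand.
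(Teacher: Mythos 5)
Your argument is correct, but it takes a longer road than the paper's. The paper's proof is essentially two lines: since $L(u_1,\ldots,u_n)^{!}=\bigl(L(u_1,\ldots,u_n)^{!!}\bigr)^{!}=\cQ(u_1,\ldots,u_n)^{!}$, the hypothesis $p\in L(u_1,\ldots,u_n)^{!}$ gives $p\in\cQ(u_1,\ldots,u_n)^{!}$, and Proposition \ref{th:range} already places the truth value of the \emph{entire} $\De_0$-formula inside $\cQ(u_1,\ldots,u_n)$, so the first commutation is immediate; the second then follows, exactly as in your last step, from Proposition \ref{th:L-restriction}, since $L(u_1|_p,\ldots,u_n|_p)=L(u_1,\ldots,u_n)\And p\subseteq\{p\}^{!}$, so the first part applies verbatim to the restricted arguments. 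Your induction on the complexity of $\ph$ (Proposition \ref{th:GC}(i) for the conditionals, Proposition \ref{th:logic} for the suprema and infima, Proposition \ref{th:range} only for atomic subterms such as $\val{u'\in v}$) re-derives by hand the closure fact that Proposition \ref{th:range} delivers wholesale; once you permit yourself \ref{th:range} for those subterms, nothing prevents applying it to the full formula, which collapses the induction entirely. What your version buys is a more self-contained verification that leans on the $\De_0$-absoluteness principle (hidden inside Proposition \ref{th:range}) only at the atomic level, and could avoid it altogether via the rank induction you sketch; what it costs is length and the support-bookkeeping you yourself flag. Both proofs are sound, and your treatment of the restricted case coincides with the paper's.
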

\begin{Proof}
Let $u_{1},{\ldots}, u_{n}\in\VQ$.
If $p\in L(u_1,\ldots,u_n)^{!}$, then
$p\in \cQ(u_1,\ldots,u_n)^{!}$.  From Proposition 
\ref{th:range},
$\val{\ph(u_1,\ldots,u_n)}\in\cQ(u_1,\ldots,u_n)$,
so that $p\commutes \val{\ph(u_1,\ldots,u_n)}$.
From Proposition \ref{th:L-restriction},
$L(u_1|_p,\ldots,u_n|_p)=L(u_1,\ldots,u_n)\And p$,
and hence $p\in L(u_1|_p,\ldots,u_n|_p)^{!}$, so that
$p\commutes \val{\ph(u_1|_p,\ldots,u_n|_p)}$.
\end{Proof}

We define the binary relation $x_1\subseteq x_2$ by
$\forall x\in x_1(x\in x_2)$.
Then, by definition for  any $u,v\in\VQ$ we have
\[
\val{u\subseteq v}=
\Inf_{u'\in\dom(u)}
u(u')\Then \val{u'\in v},
\]
and we have $\val{u=v}=\val{u\subseteq v}
\And\val{v\subseteq u}$.

\begin{Proposition}\label{th:restriction-atom}
For any $u,v\in\VQ$ and $p\in L(u,v)^{!}$, we have
the following relations.

(i) $\val{u|_p\in v|_p}=\val{u\in v}\And p$.

(ii) $\val{u|_p\subseteq v|_p}\And p
=\val{u\subseteq p}\And p$.

(iii) $\val{u|_p= v|_p}\And p =\val{u= p}\And p$
\end{Proposition}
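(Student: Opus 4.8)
The plan is to prove the three relations simultaneously by transfinite induction on $\rank(u)+\rank(v)$, exploiting Proposition~\ref{th:GC}(ii), which is the key tool that lets a conditional commute with ``restriction to $p$'' when the arguments commute with $p$. First I would observe that since $p\in L(u,v)^{!}$, Proposition~\ref{th:L-restriction} gives $L(u|_p,v|_p)=L(u,v)\And p$, so every truth value appearing below commutes with $p$ by Proposition~\ref{th:commutativity}; this legitimizes all the distributive manipulations ``focusing on $p$'' that the argument will use.

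For (i), I would expand the definition:
\[
\val{u|_p\in v|_p}=\Sup_{v'\in\dom(v)}\bigl(v|_p(v'|_p)\And\val{u|_p=v'|_p}\bigr)
=\Sup_{v'\in\dom(v)}\bigl(v(v')\And p\And\val{u|_p=v'|_p}\bigr).
\]
Here I use that $\dom(v|_p)=\{v'|_p\mid v'\in\dom(v)\}$ and $v|_p(v'|_p)=v(v')\And p$. Now $p\in L(v)^{!}$ implies $v(v')\commutes p$, and I would apply the induction hypothesis (iii) to the pair $u,v'$ (note $p\in L(u,v')^{!}$ since $v'\in\dom(v)$ gives $L(v')\subseteq L(v)$): $\val{u|_p=v'|_p}\And p=\val{u=v'}\And p$. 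Substituting and pulling $p$ out of the supremum (justified since each term commutes with $p$, via Proposition~\ref{th:logic}) yields $\Sup_{v'}\bigl(v(v')\And\val{u=v'}\bigr)\And p=\val{u\in v}\And p$.

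For (ii), I would expand $\val{u|_p\subseteq v|_p}=\Inf_{u'\in\dom(u)}\bigl(u(u')\And p\Then\val{u'|_p\in v|_p}\bigr)$, then apply part (i) already proven for the pair $u',v$ to rewrite $\val{u'|_p\in v|_p}=\val{u'\in v}\And p$, and then invoke Proposition~\ref{th:GC}(ii) with $P=u(u')$, $Q=\val{u'\in v}$, $E=p$ (both commute with $p$) to get $(u(u')\And p\Then\val{u'\in v}\And p)\And p=(u(u')\Then\val{u'\in v})\And p$. Meeting with $p$ and pulling $p$ through the infimum then gives $\val{u|_p\subseteq v|_p}\And p=\val{u\subseteq v}\And p$ — note the statement as printed has a typo ($\val{u\subseteq p}$ should read $\val{u\subseteq v}$), and similarly in (iii). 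Finally (iii) follows by combining (ii) for the pairs $(u,v)$ and $(v,u)$ with the identity $\val{u|_p=v|_p}=\val{u|_p\subseteq v|_p}\And\val{v|_p\subseteq u|_p}$ and distributing $p$.

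The main obstacle is the bookkeeping around the commutation hypotheses: at each recursive call one must verify that the relevant truth values still commute with $p$ so that the distributive law and Proposition~\ref{th:GC}(ii) apply, and one must be careful that $\val{u|_p=v'|_p}$ itself need not equal $\val{u=v'}\And p$ \emph{without} meeting by $p$ — only the restricted identity (iii) holds. Organizing the induction so that parts (i), (ii), (iii) are proved in this cyclic order at each rank level (using (iii) at lower ranks to establish (i) at the current rank, then (i) at the current rank to establish (ii), then (ii) to establish (iii)) is what makes the argument go through cleanly.
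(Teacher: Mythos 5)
Your proposal is correct and follows essentially the same route as the paper's own proof: rank induction with the cyclic use of (iii) at lower rank for (i), then (i) for pairs $(u',v)$ for (ii), then (ii) for (iii), with Propositions \ref{th:L-restriction}, \ref{th:commutativity}, \ref{th:logic} and \ref{th:GC}(ii) doing exactly the work you assign them, and with the same care that only $\val{u|_p=v'|_p}\And p=\val{u=v'}\And p$ is available. You are also right that the statement's ``$\val{u\subseteq p}$'' and ``$\val{u=p}$'' are typos for $\val{u\subseteq v}$ and $\val{u=v}$.
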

\begin{Proof}
We prove the relations by induction on the ranks of 
$u,v$.  If $\rank(u)=\rank(v)=0$, then $\dom(u)=\dom(v)
=\emptyset$, so that the relations trivially hold.
Let $u,v\in\VQ$ and $p\in L(u,v)^{!}$.
To prove (i),
let $v'\in\dom(v)$. 
Then, we have $p\commutes v(v')$ by the assumption on $p$.
By induction hypothesis, we have also 
$\val{u|_p=v'|_p}\And p=\val{u=v'}\And p$.
By Proposition \ref{th:commutativity}, we have 
$p\commutes \val{u=v'}$, so that
$v(v'), \val{u=v'}\in\{p\}^{!}$, and hence
$v(v')\And\val{u=v'}\in\{p\}^{!}$. 
Thus,  we  have
\beqas
\val{u|_p\in v|_p}
&=&\Sup_{v'\in\dom(v|_p)}
v|_p(v')\And\val{u|_p=v'}\\
&=&
\Sup_{v'\in\dom(v)}
v|_p(v'|_p)\And\val{u|_p=v'|_p}\\
&=&
\Sup_{v'\in\dom(v)}
(v(v')\And p)\And(\val{u=v'}\And p)\\
&=&
\left(\Sup_{v'\in\dom(v)}
(v(v')\And \val{u=v'})\And p\right)\\
&=&
\left(\Sup_{v'\in\dom(v)}
v(v')\And \val{u=v'}\right)\And p,
\eeqas  
where the last equality follows from Proposition \ref{th:logic}.
Thus, by definition of $\val{u=v}$ we obtain
the relation $\val{u|_p\in v|_p}=\val{u=v}\And p$,
and relation (i) has been proved.
To prove (ii), let $u'\in\dom(u)$.
Then, we have $\val{u'|_p\in v|_p}=\val{u'\in v}\And p$
by induction hypothesis.
Thus, we have
\beqas
\val{u|_p\subseteq v|_p}
&=&
\Inf_{u'\in\dom(u|_p)}(u|_p(u')\Then\val{u'\in v|_p})\\
&=&
\Inf_{u'\in\dom(u)}(u|_p(u'|_p)\Then\val{u'|_p\in v|_p})\\
&=&
\Inf_{u'\in\dom(u)}
(u(u')\And p)\Then(\val{u'\in v}\And p).
\eeqas
We  have $p\commutes u(u')$ by
assumption on $p$, and $p\commutes\val{u'\in v}$
by Proposition \ref{th:commutativity},
so that
$p\commutes u(u')\Then\val{u'\in v}$ and
$p\commutes (u(u')\And p)\Then(\val{u'\in v}\And p)$.
Thus, by Proposition \ref{th:logic} and Proposition \ref{th:GC} (ii) we have
\beqas
p\And\val{u|_p\subseteq v|_p}
&=&
p\And\Inf_{u'\in\dom(u)}
(u(u')\And p)\Then(\val{u'\in v}\And p)\\
&=&
p\And\Inf_{u'\in\dom(u)}(u(u')\Then\val{u'\in v})\\
&=&
p\And\val{u\subseteq v}.
\eeqas
Thus, we have proved  relation (ii).  
Relation (iii) follows easily from relation (ii).
\end{Proof}

We have the following theorem.
\begin{Theorem}[$\De_0$-Restriction Principle]
\label{th:V-restriction}
For any $\De_{0}$-formula ${\ph} (x_{1},{\ldots}, x_{n})$ 
in $\LL(\in)$ and $u_{1},{\ldots}, u_{n}\in\VQ$, if 
$p\in L(u_1,\ldots,u_n)^{!}$, then 
$\val{\ph(u_1,\ldots,u_n)}\And p=
\val{\ph(u_1|_p,\ldots,u_n|_p)}\And p$.
\end{Theorem}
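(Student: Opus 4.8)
The plan is to argue by induction on the complexity of the $\Delta_{0}$-formula $\ph$, carrying the tuple $u_{1},\ldots,u_{n}$ and a fixed $p\in L(u_{1},\ldots,u_{n})^{!}$. For the atomic cases $\ph\equiv(x_{i}=x_{j})$ and $\ph\equiv(x_{i}\in x_{j})$ there is nothing to prove: parts (i) and (iii) of Proposition~\ref{th:restriction-atom} are precisely the desired identity $\val{\ph(u_{1},\ldots,u_{n})}\And p=\val{\ph(u_{1}|_{p},\ldots,u_{n}|_{p})}\And p$, since $p\in L(u_{i},u_{j})^{!}$. The fact used repeatedly in the inductive steps is that, by Proposition~\ref{th:commutativity}, $p$ commutes both with $\val{\ps(u_{1},\ldots,u_{n})}$ and with $\val{\ps(u_{1}|_{p},\ldots,u_{n}|_{p})}$ for every $\Delta_{0}$-subformula $\ps$, and that $p$ commutes with each value $u_{i}(u')$ entering a bounded quantifier, because $u_{i}(u')\in L(u_{i})\subseteq L(u_{1},\ldots,u_{n})$.

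For a propositional step, $\ph\equiv\Not\ps_{1}$ or $\ph\equiv\ps_{1}\star\ps_{2}$ with $\star\in\{\And,\Or,\Then,\Iff\}$, the value $\val{\ph(u_{1},\ldots,u_{n})}$ is obtained by applying to the values of the immediate subformulas a one- or two-variable ortholattice polynomial $f$; here one uses that $\to_{S}$, $\to_{C}$, $\to_{R}$ and $\Iff$ are all such polynomials. Since those subformula values commute with $p$, Lemma~\ref{th:restriction_property}(ii) (for the conditional this is Proposition~\ref{th:GC}(ii)) gives $f(\cdots)\And p=f(\cdots\And p)\And p$; replacing each $\val{\ps_{k}(u_{1},\ldots,u_{n})}\And p$ by $\val{\ps_{k}(u_{1}|_{p},\ldots,u_{n}|_{p})}\And p$ via the induction hypothesis and then running Lemma~\ref{th:restriction_property}(ii) backwards (its commutativity hypothesis again supplied by Proposition~\ref{th:commutativity}) produces $\val{\ph(u_{1}|_{p},\ldots,u_{n}|_{p})}\And p$. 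For $\And$ the identity is immediate from commutativity and idempotence of $\wedge$; for $\Or$ one may instead invoke the distributive law of Proposition~\ref{th:logic}.

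For a bounded universal quantifier $\ph\equiv(\forall x\in u_{i})\,\ps(x,u_{1},\ldots,u_{n})$, note first that $L(u')\subseteq L(u_{i})$ for each $u'\in\dom(u_{i})$, so $p\in L(u',u_{1},\ldots,u_{n})^{!}$ and the induction hypothesis applies to $\ps(u',u_{1},\ldots,u_{n})$. Two applications of Proposition~\ref{th:GC}(ii), together with $u_{i}(u')\And p=u_{i}|_{p}(u'|_{p})$ and the induction hypothesis, give the term-wise identity
\[
\bigl(u_{i}(u')\Then\val{\ps(u',u_{1},\ldots,u_{n})}\bigr)\And p
=\bigl(u_{i}|_{p}(u'|_{p})\Then\val{\ps(u'|_{p},u_{1}|_{p},\ldots,u_{n}|_{p})}\bigr)\And p .
\]
Each term on both sides commutes with $p$ (Proposition~\ref{th:commutativity}, plus the remark from the proof of Lemma~\ref{th:restriction_property} that ortholattice polynomials of elements commuting with $p$ again commute with $p$), so by Proposition~\ref{th:logic} the operation $\And p$ passes through the two infima; since $\dom(u_{i}|_{p})=\{u'|_{p}\mid u'\in\dom(u_{i})\}$ these are indexed compatibly, and the quantifier step follows. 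The bounded existential case is the same computation with $\Sup$ in place of $\Inf$ and $\And$ in place of $\Then$, using idempotence of $\wedge$. Since a $\Delta_{0}$-formula has no unbounded quantifier, this completes the induction.

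The whole argument is bookkeeping once the preliminaries are assembled, and the only point needing care is exactly that bookkeeping: at each step one must verify that $p$ commutes with the truth values of the relevant subformulas, with their restricted counterparts, and with the labels $u_{i}(u')$, so that Proposition~\ref{th:logic} (distributivity) and Proposition~\ref{th:GC}(ii) (behaviour of the conditional under $\And p$) are legitimately applicable --- which is precisely what Proposition~\ref{th:commutativity} is designed to supply. I expect no extra difficulty in moving from the known case $j=S$ to $j=C,R$, since the one property that really drives the induction, Proposition~\ref{th:GC}(ii), has already been proved uniformly for all three conditionals.
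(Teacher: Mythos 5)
Your proof is correct and follows essentially the same route as the paper: induction on the complexity of the $\De_0$-formula, with Proposition~\ref{th:restriction-atom} handling the atomic case and the inductive steps driven by Proposition~\ref{th:commutativity} (commutation of $p$ with subformula values), Proposition~\ref{th:logic} (passing $\And p$ through suprema and infima), and Proposition~\ref{th:GC}(ii) (behaviour of $\Thenjj$ under $\And p$). Your write-up merely spells out in more detail the bounded-quantifier bookkeeping (e.g.\ that $p\in L(u',u_1,\ldots,u_n)^{!}$ for $u'\in\dom(u_i)$ and that $\dom(u_i|_p)=\{u'|_p\mid u'\in\dom(u_i)\}$) that the paper's proof leaves implicit.
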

\begin{Proof}
We prove the assertion by induction on 
the complexity of  ${\ph} (x_{1},{\ldots},x_{n})$.
From Proposition \ref{th:restriction-atom}, the assertion
holds for atomic formulas.
Then, the verification of every induction step follows 
from the fact that (i) the function $a\mapsto a\And p$ of all $a\in
\{p\}^{!}$ preserves the supremum and the infimum
as shown in Proposition \ref{th:logic},
 (ii) it satisfies $(a\Then b)\And p=[(a\And p)\Then (b\And p)] 
\And p$ for  all $a,b\in\{p\}^{!}$
from the defining property of generalized implications,
(iii) it satisfies relation (ii) of Theorem \ref{th:implication},
and that (iv) it satisfies the relation
$a^{\perp}\And p=(a\And p)^{\perp}\And p$
for  all $a,b\in\{p\}^{!}$.
\end{Proof}

Now,  we obtain the following transfer principle for bounded theorems of ZFC
with respect to any material conditionals $\Then$,
which is of the same form as Theorem 4.6 of Ref.~\cite{07TPQ} obtained for 
the Sasaki conditional $\ThenS$.

\begin{Theorem}[$\De_{0}$-ZFC  Transfer Principle]
For any $\De_{0}$-formula ${\ph} (x_{1},{\ldots}, x_{n})$ 
of $\cL(\in)$ and $u_{1},{\ldots}, u_{n}\in\VQ$, if 
${\ph} (x_{1},{\ldots}, x_{n})$ is provable in ZFC, then
we have
\[
\cuniv(u_{1},\ldots,u_{n})\le
\val{\ph({u}_{1},\ldots,{u}_{n})}.
\]
\end{Theorem}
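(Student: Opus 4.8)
The plan is to restrict every name below the commutator $E:=\cuniv(u_1,\ldots,u_n)=\com(L(u_1,\ldots,u_n))$, where the quantum logic becomes Boolean, and then invoke the classical Scott--Solovay transfer theorem for Boolean-valued models. Since $\com(\cA)$ is by definition a supremum of elements of $\cA^{!}$ and $\cA^{!}$ is a complete subalgebra, we have $E\in L(u_1,\ldots,u_n)^{!}$; hence the $\De_0$-Restriction Principle (Theorem~\ref{th:V-restriction}) applies with $p=E$ and gives $\val{\ph(u_1,\ldots,u_n)}\And E=\val{\ph(u_1|_E,\ldots,u_n|_E)}\And E$. So it is enough to prove $E\le\val{\ph(u_1|_E,\ldots,u_n|_E)}$, and for this it suffices to show $\val{\ph(u_1|_E,\ldots,u_n|_E)}=1$.

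First I would verify that the restricted names lie over a \emph{Boolean} logic. By the operator characterization of commutators (Theorem~\ref{th:com}), $[P_1,P_2]E=0$ for all $P_1,P_2\in L(u_1,\ldots,u_n)$; combining this with $E\commutes P_i$ one gets $(P_1\And E)(P_2\And E)=(P_2\And E)(P_1\And E)$, i.e.\ $P_1\And E\commutes P_2\And E$. Hence $L(u_1,\ldots,u_n)\And E$ is a commuting family of projections, and the logic $\cB:=(L(u_1,\ldots,u_n)\And E)^{!!}$ it generates is distributive, i.e.\ a complete Boolean algebra; moreover $\cB\subseteq\cQ$ because $L(u_1,\ldots,u_n)\And E\subseteq\cQ$, so $\cB$ is a Boolean logic on $\cH$. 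By Proposition~\ref{th:L-restriction}, $L(u_i|_E)=L(u_i)\And E\subseteq\cB$, and then Proposition~\ref{th:sublogic} yields $u_i|_E\in\V^{(\cB)}$ for every $i$.

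Next I would collapse the three conditionals. On the Boolean algebra $\cB$ any two elements commute, so the defining clause (LB) of a material conditional forces $P\Thenjj Q=P^{\perp}\Or Q$ for all $P,Q\in\cB$ and all $j=S,C,R$; since $\cB$ also carries the orthocomplement $P\mapsto 1-P$ inherited from $\cQ(\cH)$, the $\cB$-valued universe equipped with $\Thenjj$ is exactly the ordinary Boolean-valued model $\V^{(\cB)}$, whose truth-value assignment is independent of $j$. Applying the $\De_0$-Absoluteness Principle (Theorem~\ref{th:Absoluteness}) to $\cB$ and to $\cQ$, each compared with $\cQ(\cH)$, we get $\val{\ph(u_1|_E,\ldots,u_n|_E)}=\valb{\ph(u_1|_E,\ldots,u_n|_E)}$, the right-hand side being the classical Boolean-valued truth value. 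Since $\ph(x_1,\ldots,x_n)$ is provable in ZFC and $\V^{(\cB)}$ is a model of ZFC \cite{Bel85}, this value equals $1$. Therefore $\val{\ph(u_1|_E,\ldots,u_n|_E)}=1$, and combined with the first paragraph $E\And\val{\ph(u_1,\ldots,u_n)}=E$, i.e.\ $\cuniv(u_1,\ldots,u_n)\le\val{\ph(u_1,\ldots,u_n)}$.

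The main obstacle is the middle step: recognizing that the commutator $E$ is precisely the projection that makes all the restricted supports commute pairwise --- this is exactly why Theorem~\ref{th:com} and the commutator $\cuniv$ enter the statement --- together with the bookkeeping that transports truth values among $\V^{(\cQ)}$, $\V^{(\cQ(\cH))}$ and $\V^{(\cB)}$, and the remark that on a Boolean logic all three arrows reduce to classical implication. Once Proposition~\ref{th:GC} (the restriction property of the material conditionals) and the $\De_0$-Restriction Principle are in hand, this argument has no dependence on which conditional is chosen, so it is literally uniform in $j=S,C,R$; everything genuinely beyond the $j=S$ case of Ref.~\cite{07TPQ} has already been discharged in those two preliminary results.
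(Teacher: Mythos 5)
Your proof is correct and follows essentially the same route as the paper: restrict the names at $p=\cuniv(u_1,\ldots,u_n)$, note that $L(u_1,\ldots,u_n)\And p$ is a commuting family contained in a Boolean sublogic $\cB$ so that $u_i|_p\in\V^{(\cB)}$, invoke the Boolean-valued ZFC transfer theorem together with $\De_0$-Absoluteness to get $\val{\ph(u_1|_p,\ldots,u_n|_p)}=1$, and finish with the $\De_0$-Restriction Principle. The only differences are expository: you spell out details the paper leaves implicit, namely that $E\in L(u_1,\ldots,u_n)^{!}$ (needed to apply the Restriction Principle), that $\cB$ can be taken as $(L(u_1,\ldots,u_n)\And E)^{!!}$, and that on $\cB$ all three material conditionals collapse to $P^{\perp}\Or Q$ so the restricted model is literally the classical Boolean-valued universe.
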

\begin{Proof}
Let $p=\cuniv(u_1,\ldots,u_n)$.
Then, we have $a\And p\commutes b\And p$
for any $a,b\in L(u_1,\ldots,u_n)$, and hence 
there is a Boolean sublogic $\cB$ such that 
$L(u_1,\ldots,u_n)\And p\subseteq \cB$.
From Proposition \ref{th:L-restriction},
we have $L(u_1|_p,\ldots,u_n|_p)\subseteq \cB$.
From Proposition \ref{th:sublogic}, we have
$u_1|_p,\ldots,u_n|_p\in \VB$.
By the ZFC transfer principle of the Boolean-valued
universe \cite[Theorem 1.33]{Bel85}, we have
$\val{\ph(u_1|_p,\ldots,u_n|_p)}{}_{\cB}=1$. By the
$\De_0$-absoluteness principle, we have
$\val{\ph(u_1|_p,\ldots,u_n|_p)}=1$.
From Proposition \ref{th:V-restriction}, we have
$\val{\ph(u_1,\ldots,u_n)}\And p
=\val{\ph(u_1|_p,\ldots,u_n|_p)}\And p
=p$, and the assertion follows.
\end{Proof}

\section{Real numbers in quantum set theory}
\label{se:RN}
\renewcommand{\Then}{\rightarrow}
\renewcommand{\vval}[1]{[\![#1]\!]}
\renewcommand{\valj}[1]{[\![#1]\!]_j}
\newcommand{\bproof}{\begin{proof}}
\newcommand{\eproof}{\end{proof}}

Let $\Q$ be the set of rational numbers in $V$.
We define the set of rational numbers in the model $\VQ$
to be $\check{\Q}$.
We define a real number in the model by a Dedekind cut
of the rational numbers. More precisely, we identify
a real number with the upper segment of a Dedekind cut
assuming that the lower segment has no end point.
Therefore, the formal definition of  the predicate $\R(x)$, 
``$x$ is a real number,'' is expressed by
\beqa
\R(x)&:=&
\forall y\in x(y\in\check{\Q})
\And \exists y\in\check{\Q}(y\in x)
\And \exists y\in\check{\Q}(y\not\in x)\nn\\
& &  \And
\forall y\in\check{\Q}(y\in x\Iff\forall z\in\check{\Q}
(y<z \Then z\in x)).
\eeqa
We define $\R_j^{(\cQ)}$ for $j=S,C,R$ to be the interpretation of 
the set $\R$ of real numbers in $\VQ$ under the $j$-conditional as follows.
\[
\R_j^{(\cQ)} = \{u\in\VQ|\ \dom(u)=\dom(\check{\Q})
\ \mb{and }\val{\R(u)}=1\}.
\]
The set $\R_{j,\cQ}$ of real numbers in $\VQ$ under the $j$-conditional
is defined by
\beq
\R_{j,\cQ}=\R_j^{(\cQ)}\times\{1\}.
\eeq
\begin{Proposition}
\begin{enumerate}[{\rm (i)}]\itemsep=0in
\item For any $u\in\VQ$ with $\dom(u)=\dom(\check{\Q})$, we have
\[
\displaystyle \vval{\R(u)}_j
=\Sup_{y\in\Q} u(\ck{y})
\And
\left(\Inf_{y\in\Q}u(\ck{y})\right)^\perp
\And
\Inf_{y\in\Q}\left(u(\ck{y})\Iff\Inf_{y<z} u(\ck{z})\right).
\]
\item 
$
\R_{S,\cQ}=\R_{C,\cQ}=\R_{R,\cQ}.
$
\eenum
\end{Proposition}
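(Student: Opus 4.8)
The plan is to prove part (i) by directly unwinding the four conjuncts of the predicate $\R(u)$ with the truth-value rules (i)--(xi), and then to read off part (ii) at once, since the right-hand side obtained in (i) contains no occurrence of $j$. It is worth noting at the start that $\R(x)$ involves the parameter $\check{\Q}$, and that the $\De_{0}$-absoluteness principle only equates the $\cQ$-valued and $\cQ(\cH)$-valued truth values for a \emph{fixed} $j$; it does not deliver the $j$-independence that (ii) asserts. Hence the explicit evaluation in (i) is genuinely needed.

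First I would record the auxiliary facts that drive the computation. For $y,z\in\Q$ the formulas $y=z$, $y<z$ and their negations are $\De_{0}$-formulas of $\cL(\in)$, so the $\De_{0}$-elementary equivalence principle gives $\val{\ck{y}=\ck{z}}=1$ if $y=z$ and $0$ otherwise, $\val{\ck{y}<\ck{z}}=1$ if $y<z$ and $0$ otherwise, and $\val{\ck{y}\in\ck{\Q}}=1$ for every $y\in\Q$; in particular none of these values depends on $j$. Using $\dom(u)=\dom(\ck{\Q})=\{\ck{y}\mid y\in\Q\}$ and $\ck{\Q}(\ck{z})=1$, the recursion for $\in$ then collapses to $\val{\ck{z}\in u}=\Sup_{y\in\Q}\bigl(u(\ck{y})\And\val{\ck{z}=\ck{y}}\bigr)=u(\ck{z})$, and hence $\val{\ck{z}\notin u}=u(\ck{z})^{\perp}$, for every $z\in\Q$.

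With these in hand I would evaluate the four conjuncts of $\R(u)$ in turn. The conjunct $(\forall y\in u)(y\in\ck{\Q})$ has value $\Inf_{y\in\Q}\bigl(u(\ck{y})\Then\val{\ck{y}\in\ck{\Q}}\bigr)=\Inf_{y\in\Q}\bigl(u(\ck{y})\Then 1\bigr)=1$, using $P\Then_{j}1=1$ from Hardegree's condition (E). The conjuncts $(\exists y\in\ck{\Q})(y\in u)$ and $(\exists y\in\ck{\Q})(y\notin u)$ evaluate, by the bounded-existential rule together with $\ck{\Q}(\ck{y})=1$ and the De Morgan law, to $\Sup_{y\in\Q}u(\ck{y})$ and $\Sup_{y\in\Q}u(\ck{y})^{\perp}=\bigl(\Inf_{y\in\Q}u(\ck{y})\bigr)^{\perp}$. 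For the last conjunct, the inner formula gives $\val{(\forall z\in\ck{\Q})(\ck{y}<z\Then z\in u)}=\Inf_{z\in\Q}\bigl(\val{\ck{y}<\ck{z}}\Then u(\ck{z})\bigr)$, and since $\val{\ck{y}<\ck{z}}\in\{0,1\}$ each term is either $1\Then u(\ck{z})=u(\ck{z})$ when $y<z$ or $0\Then u(\ck{z})=1$ when $y\not<z$, so that infimum equals $\Inf_{y<z}u(\ck{z})$; passing the result through the outer bounded universal, which contributes only $1\Then(\cdot)=(\cdot)$, and through rule (vii) for $\Iff$, the last conjunct becomes $\Inf_{y\in\Q}\bigl(u(\ck{y})\Iff\Inf_{y<z}u(\ck{z})\bigr)$. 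Conjoining the four values gives exactly the expression claimed in (i).

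Finally, for (ii) I would observe that every conditional occurring in the derivation above was applied to a pair one of whose arguments lies in $\{0,1\}$ --- namely $P\Then 1$, $1\Then P$, and $\val{\ck{y}<\ck{z}}\Then u(\ck{z})$ with left argument in $\{0,1\}$ --- and on all such pairs the Sasaki, contrapositive Sasaki, and relevance conditionals coincide ($P\Then_{j}1=1$, $0\Then_{j}P=1$, $1\Then_{j}P=P$ for each $j$), while the only biconditional that appears is the single operation $P\Iff Q=(P\And Q)\Or(P^{\perp}\And Q^{\perp})$, common to all three $j$. Hence the right-hand side of (i) is independent of $j$, so the condition $\val{\R(u)}=1$ that defines $\R_{j}^{(\cQ)}$ is the same for $j=S,C,R$; therefore $\R_{S}^{(\cQ)}=\R_{C}^{(\cQ)}=\R_{R}^{(\cQ)}$, and taking the product with $\{1\}$ gives $\R_{S,\cQ}=\R_{C,\cQ}=\R_{R,\cQ}$. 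The argument is essentially a mechanical unwinding; the one place where care is needed --- and the real content behind (ii) --- is the bookkeeping that tracks each occurrence of $\Then$ through the recursion and verifies that at every one of them some argument is $0$ or $1$, so that no genuinely noncommuting pair of projections is ever fed into a conditional.
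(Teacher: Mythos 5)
Your proof is correct and takes essentially the same route as the paper, whose own proof is just the terse remark that (i) ``follows from the definition'' and that (ii) holds because $\vval{\R(u)}_j$ is independent of the choice of conditional. Your detailed unwinding --- in particular the observation that every occurrence of $\Thenjj$ has an argument in $\{0,1\}$, where all three conditionals coincide, and that $\Iff$ is the common biconditional --- is precisely the computation the paper leaves implicit.
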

\begin{Proof}
Assertion (i) follows from the definition.
Assertion (ii) holds, since  $\vval{\R(u)}_j$ is independent of the
choice of conditional.
\end{Proof}

From the above, in what follows we will write 
$\R^{(\cQ)}=\R_j^{(\cQ)}$ and 
$\R_{\cQ}=\R_{j,\cQ}$.

\begin{Theorem}
\label{th:RQ}
For any $u\in\R^{(\cQ)}$, we have the following.

(i) $u(\check{r})=\vval{\check{r}\in u}_j$ for all $r\in\Q$ and $j=S,R,C$.

(ii)  $\cuniv(u)=1$.
\end{Theorem}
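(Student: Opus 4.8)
The plan is to prove (i) first and then derive (ii) from it. For (i), fix $u\in\R^{(\cQ)}$ and $r\in\Q$. By the evaluation rule for membership, $\vval{\ck r\in u}=\Sup_{r'\in\Q}(u(\ck{r'})\And\vval{\ck r=\ck{r'}})$. Since $\dom(\ck\Q)$ consists of the $\ck{r'}$ with the rationals pairwise distinct in $\V$, the $\De_0$-elementary equivalence principle (Theorem \ref{th:2.3.2}) gives $\vval{\ck r=\ck{r'}}=1$ when $r=r'$ and $=0$ otherwise, so the supremum collapses to the single term $u(\ck r)$; hence $u(\ck r)\le\vval{\ck r\in u}$ trivially and in fact $u(\ck r)=\vval{\ck r\in u}$ already at this level. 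The content of (i) is really the reverse-looking refinement coming from the ``no end point of the lower segment'' clause of $\R(x)$: since $\val{\R(u)}=1$, in particular $\val{\forall y\in\ck\Q(y\in u\Iff\forall z\in\ck\Q(y<z\Then z\in u))}=1$, which by the quantifier rules unwinds to $u(\ck r)=\Inf_{r<r'}u(\ck{r'})$ for every $r\in\Q$ (using that each $u(\ck{r})\Then\vval{\cdots}$ must equal $1$, i.e.\ $u(\ck r)\le\vval{\cdots}$, and symmetrically). Combining these identities with the membership evaluation gives the stated equality $u(\ck r)=\vval{\ck r\in u}_j$, uniformly in $j$ because $\vval{\R(u)}_j$ and all the truth values involved are independent of the conditional (as recorded in the preceding Proposition).

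For (ii), the goal is $\cuniv(u)=\com(L(u))=1$. Compute $L(u)$: since $\dom(u)=\dom(\ck\Q)=\{\ck r\mid r\in\Q\}$ and each $\ck r$ has support $\{1\}$ (indeed $L(\ck v)\subseteq\{0,1\}$ for all $v\in\V$ by transfinite induction), we get $L(u)=\{u(\ck r)\mid r\in\Q\}\cup\{0,1\}$. So it suffices to show that the family $\{u(\ck r)\mid r\in\Q\}$ is a commuting family of projections, i.e.\ lies inside some Boolean sublogic, equivalently $\com(\{u(\ck r)\mid r\in\Q\})=1$. The key point is the Dedekind-cut structure forced by $\val{\R(u)}=1$: the clause $\forall y\in x(y\in\check{\Q})$ is automatic here, and the cut condition together with the ``upper segment'' shape should force monotonicity $u(\ck r)\le u(\ck{r'})$ whenever $r'\le r$ — this is exactly what makes the $u(\ck r)$ a decreasing (hence totally ordered, hence commuting) chain of projections. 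Concretely, from $u(\ck r)=\Inf_{r<z}u(\ck z)$ one reads off $u(\ck r)\le u(\ck z)$ for all $z>r$; extending to all rationals uses the membership monotonicity $\vval{\ck r\in u}\le\vval{\ck{r'}\in u}$ for $r'<r$, which is a $\De_0$-consequence of $\R(u)$ via the transfer principle (any provable-in-ZFC $\De_0$ statement about $u$ holds with value $\ge\cuniv(u)$, but here one wants the unconditional chain property, so one argues directly from the cut axiom). A totally ordered subset of $\cQ(\cH)$ commutes, so $L(u)$ generates a Boolean sublogic and $\com(L(u))=1$.

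The step I expect to be the main obstacle is establishing the monotonicity $u(\ck r)\le u(\ck{r'})$ for $r'<r$ rigorously from $\val{\R(u)}=1$, rather than merely for consecutive-in-the-limit rationals via the ``no end point'' clause. One has to be careful that $\R(x)$ as written only asserts $u(\ck r)\Iff\Inf_{r<z}u(\ck z)$ and the three existence/universal clauses, and extract full monotonicity of the cut from these; the cleanest route is probably to note that $\forall y\in x(y\in\ck\Q)\And\forall y\in\ck\Q(y\in x\Iff\forall z(y<z\Then z\in x))$ is a $\De_0$-formula, so once $\val{\R(u)}=1$ we may invoke $\De_0$-absoluteness and the Boolean-valued transfer inside each Boolean sublogic containing a given pair $u(\ck r),u(\ck{r'})$ — but that is circular with what we want to prove. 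So instead I would argue pointwise on vectors: using Theorem \ref{th:com}(iv)-style characterizations of the truth values of the atomic subformulas, show directly that $\ran(u(\ck r))\subseteq\ran(u(\ck{r'}))$ for $r'<r$ by chasing a vector $\ps$ through the unwound quantifier definitions, which is routine once set up but requires care about which conditional is used (though the answer is conditional-independent). Once monotonicity is in hand, (ii) is immediate and (i) has already been obtained along the way.
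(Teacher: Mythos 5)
Your argument is essentially the paper's own proof: (i) is exactly the computation $\val{\check{r}\in u}=\Sup_{s\in\Q}\bigl(u(\check{s})\And\val{\check{r}=\check{s}}\bigr)=u(\check{r})$ via the $\De_0$-elementary equivalence principle (the cut clause plays no role in (i)), and (ii) proceeds, as in the paper, by computing $L(u)=\{0,1\}\cup\{u(\check{s})\mid s\in\Q\}$ and showing that the projections $u(\check{s})$ form a chain. The ``main obstacle'' you flag at the end is not actually there: once $\val{\R(u)}=1$ gives (using (i) and the fact that $P\Iff Q=1$ forces $P=Q$) the identity $u(\check{r})=\Inf_{r<z,\,z\in\Q}u(\check{z})$, then for \emph{any} pair of rationals $r<r'$ the element $r'$ occurs among the $z$'s in that infimum, so $u(\check{r})\le u(\check{r'})$; this is already full monotonicity for all pairs, and no vector-chasing argument, no appeal to transfer inside a Boolean sublogic, and hence no circularity is needed. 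Note also a direction slip in your second paragraph: the family is increasing in $r$ ($u(\check{r})\le u(\check{r'})$ for $r<r'$, consistent with $u(\check{r})=E^{\hu}(r)$), not decreasing, though this is immaterial since either way a totally ordered family of projections pairwise commutes, giving $\cuniv(u)=\com(L(u))=1$.
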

\begin{Proof}
Let $u\in\R^{(\cQ)}$ and $r\in\Q$.
Then, we have
\[
\val{\check{r}\in u}
= \Sup_{s\in\Q}(\val{\check{r}=\check{s}}\And u(\check{s}))\\
= u(\check{r}),
\]
since $\val{\check{r}=\check{s}}=1$ if $r=s$ and  $\val{\check{r}=\check{s}}=0$
otherwise by the $\De_0$-Elementary Equivalence Principle,
and assertion (i) follows.
We have
\beqas
L(u)
=\bigcup_{s\in\Q}L(\check{s})
\cup
\{u(\check{s})\mid s\in\Q\}
=\{0,1,u(\check{s})\mid s\in\Q\},
\eeqas
so that it suffices to show that each $u(\check{s})$
with $s\in\Q$ is mutually commuting. 
By definition, we have
\[
\val{\forall y\in\check{\Q}(y\in u\Iff\forall z\in\check{\Q}
(y<z \Then z\in u))}
=1.
\]
Hence, we have
\[
u(\check{s})=\Inf_{s<t, t\in\Q}u(\check{t}).
\]
Thus, if $s_1<s_2$, then
$u(\check{s}_1)\le u(\check{s}_2)$,
so that
$u(\check{s}_1)\commutes u(\check{s}_2)$.
\end{Proof}

Let $\cM$ be a von Neumann algebra on a Hilbert
space $\cH$ and let $\cQ=\cP(\cM)$.
A closed operator $A$ (densely defined) on $\cH$ is
said to be {\em affiliated} with $\cM$, in symbols $A\,\et\,\cM$, 
iff $U^{*}AU=A$ for any unitary operator $U\in\cM'$.
Let $A$ be a self-adjoint operator (densely defined) on $\cH$
and let $A=\int_{\R} \la\, dE^A(\la)$
be its spectral decomposition, where $\{E^{A}(\la)\}_{\la\in\R}$ 
is the resolution of  identity belonging to $A$ \cite[p.\ 119]{vN55}.
It is well-known that $A\af\cM$ if and only if $E^A({\la})\in\cQ$ 
for every $\la\in\R$.
Denote by $\overline{\cM}_{SA}$ the set of self-adjoint operators 
affiliated with $\cM$.
Two self-adjoint operators $A$ and $B$ are said to {\em commute},
in symbols $A\commutes B$,
iff $E^A(\la)\commutes E^B(\la')$ for every pair 
$\la,\la'$ of reals.

For any $u\in\R^{(\cQ)}$ and $\la\in\R$, we define $E^{u}(\la)$ by 
\beq
E^{u}(\la)=\Inf_{\la<r\in\Q}u(\check{r}).
\eeq
Then, it can be shown that
$\{E^u(\la)\}_{\la\in\R}$ is a resolution of
identity in $\cQ$ and hence by the spectral theorem there
is a self-ajoint operator $\hat{u}\af\cM$ uniquely
satisfying $\hat{u}=\int_{\R}\la\, dE^u(\la)$.  On the other
hand, let $A\af\cM$ be a self-ajoint operator. We define $\tilde{A}\in\VQ$ by
\beq
\tilde{A}=\{(\check{r},E^{A}(r))\mid r\in\Q\}.
\eeq
Then, $\dom(\tA)=\dom(\check{\Q})$ and $\tA(\check{r})=E^{A}(r)$ for all $r\in\Q$.
It is easy to see that $\tilde{A}\in\RQ$ and we have
$(\hat{u})\tilde{}=u$ for all $u\in\RQ$ and $(\tilde{A})\hat{}=A$
for all $A\in\overline{\cM}_{SA}$.
Therefore, the correspondence
between $\RQ$ and $\overline{\cM}_{SA}$ is a one-to-one correspondence.
We call the above correspondence the {\em Takeuti correspondence}.
Now, we have the following \cite[Theorem 6.1]{07TPQ}.

\begin{Theorem}
Let $\cQ$ be a logic on $\cH$.  The relations 
\bitem
\item[\rm (i)] ${\displaystyle E^{A}(\la)=\Inf_{\la<r\in\Q}u(\check{r})}$ for all $\la\in\Q$,
\item[\rm (ii)] $u(\check{r})=E^{A}(r)$ for all $r\in\Q$,
\eitem
for all $u=\tA\in\RQ$ and $A=\hu\in \overline{\cM}_{SA}$ 
sets up a one-to-one correspondence between $\RQ$ and $ \overline{\cM}_{SA}$.
\end{Theorem}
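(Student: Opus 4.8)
The plan is to reduce everything to a single conditional and then recover the correspondence essentially as in Theorem~6.1 of \cite{07TPQ}. By the Proposition immediately preceding, $\R^{(\cQ)}$ and the truth value $\vval{\R(u)}$ are independent of $j\in\{S,C,R\}$; moreover by Theorem~\ref{th:RQ} (i) the atomic values $u(\check{r})=\vval{\check{r}\in u}$ appearing in relations (i) and (ii) are $\De_0$ and hence also conditional-independent (Theorem~\ref{th:Absoluteness}). Thus it suffices to prove the statement once, and the argument below uses no property of $\Then$ beyond $\De_0$-absoluteness.

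First I would show that, for $u\in\R^{(\cQ)}$, the family $E^u(\la)=\Inf_{\la<r\in\Q}u(\check{r})$ is a resolution of the identity in $\cQ=\cP(\cM)$. Using the explicit lattice expression for $\vval{\R(u)}=1$ given in the Proposition above, the two ``$\exists y\in\check{\Q}$'' conjuncts yield $\Sup_{r\in\Q}u(\check{r})=1$ and $\Inf_{r\in\Q}u(\check{r})=0$, hence $\Sup_{\la}E^u(\la)=1$ and $\Inf_{\la}E^u(\la)=0$; and the Dedekind-cut conjunct, via Theorem~\ref{th:RQ}, yields $u(\check{s})=\Inf_{s<t\in\Q}u(\check{t})$ for every rational $s$, so that $\la\mapsto E^u(\la)$ is monotone, right-continuous, and satisfies $E^u(s)=u(\check{s})$ for $s\in\Q$. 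By Theorem~\ref{th:RQ} (ii), $\cuniv(u)=1$, so the projections $\{u(\check{r})\}_{r\in\Q}$ --- equivalently $\{E^u(\la)\}_{\la\in\R}$ --- mutually commute; they lie in $\cQ=\cP(\cM)$ because $L(u)\subseteq\cQ$. The spectral theorem then furnishes a (densely defined) self-adjoint operator $\hu$ with $\hu=\int_{\R}\la\,dE^u(\la)$, and $E^u(\la)\in\cP(\cM)$ for all $\la$ forces $\hu\af\cM$; this is relation (i), read as $E^{\hu}(\la)=E^u(\la)$.

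Conversely, for $A\in\overline{\cM}_{SA}$ with spectral resolution $\{E^A(\la)\}$, set $\tA=\{(\check{r},E^A(r))\mid r\in\Q\}$, so $\dom(\tA)=\dom(\check{\Q})$, $\tA(\check{r})=E^A(r)$ (relation (ii)), and $L(\tA)\subseteq\cP(\cM)=\cQ$. I would then verify $\vval{\R(\tA)}=1$ using the explicit formula from the Proposition above: since the $E^A(r)$ mutually commute, this reduces to the classical facts $\Sup_{r}E^A(r)=1$, $\Inf_{r}E^A(r)=0$, and $E^A(r)=\Inf_{r<s}E^A(s)$ for a right-continuous resolution of the identity, i.e.\ to the standard correspondence between resolutions of the identity and Dedekind reals in the Boolean-valued universe over the sublogic generated by $\{E^A(r)\}$ (cf.\ \cite{Ta81,Bel85}); hence $\tA\in\R^{(\cQ)}$. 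Finally, the two maps are mutually inverse: $(\hu)\tilde{}(\check{r})=E^{\hu}(r)=E^u(r)=\Inf_{r<s}u(\check{s})=u(\check{r})$ (the last step by right-continuity of $u$ from Theorem~\ref{th:RQ}), so $(\hu)\tilde{}=u$; and $E^{\tA}(\la)=\Inf_{\la<r}\tA(\check{r})=\Inf_{\la<r}E^A(r)=E^A(\la)$, so $(\tA)\hat{}=\int_{\R}\la\,dE^{\tA}(\la)=A$. Thus relations (i) and (ii) set up a bijection between $\RQ$ and $\overline{\cM}_{SA}$.

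I expect the only real obstacle to be the verification that $\tA\in\R^{(\cQ)}$, that is, that the Dedekind-cut formula receives truth value $1$. Although the computation takes place inside a Boolean sublogic of $\cQ$ --- where it is purely classical --- one must route it through the $\De_0$-absoluteness principle and the explicit lattice expression for $\vval{\R(u)}$, and take care to match the normalization of the cut (upper segment, lower segment without endpoint) to the right-continuity convention for spectral resolutions. The point that makes the genuinely quantum situation tractable is the commutativity input $\cuniv(u)=1$ from Theorem~\ref{th:RQ} (ii); without it the family $\{E^u(\la)\}$ would not generate a commutative algebra and the spectral theorem could not be applied.
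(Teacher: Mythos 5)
Your proof is correct and follows essentially the same route as the paper, which likewise defines $E^{u}(\la)=\Inf_{\la<r\in\Q}u(\check{r})$, checks it is a resolution of the identity in $\cQ=\cP(\cM)$, applies the spectral theorem to get $\hat{u}\af\cM$, defines $\tilde{A}$ from the spectral resolution, and verifies the two maps are mutually inverse (the paper delegates the details to Theorem 6.1 of \cite{07TPQ}). One small correction: the independence of the truth value of $\R(u)$ and of $[\![\check{r}\in u]\!]$ from the choice of conditional $j$ comes from the explicit formula in the preceding Proposition and from Theorem \ref{th:RQ}(i) (the conditional only ever receives a $\{0,1\}$-valued antecedent there), not from the $\Delta_0$-absoluteness principle, which compares the ambient logics $\cQ$ and $\cQ(\cH)$ for a fixed $j$ rather than different conditionals.
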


For any $r\in\R$, we shall write $\tilde{r}=(r1)\,\tilde{}$, where $r1$ is the scalar 
operator on $\cH$.
Then, we have $\dom(\tilde{r})=\dom(\check{\Q})$ and $\tilde{r}(\check{t})=
\val{\check{r}\le \check{t}}$, so that we have $\L(\tilde{r})=\{0,1\}$.
Denote by $\cB(\R^n)$ the $\si$-filed of Borel subsets of $\R^n$ and $B(\R^n)$ the space
of bounded Borel functions on $\R^n$.
For any $f\in B(\R)$, the bounded self-adjoint operator $f(X)\in\cM$ is
defined by $f(X)=\int_{\R}f(\la) dE^{X}(\la)$.
For any Borel subset $\De$ in $\R$, we denote by $E^{X}(\De)$ the spectral
projection corresponding to $\De\in\cB(\R)$, {\em i.e., } 
$E^{X}(\De)=\ch_{\De}(X)$, where $\ch_{\De}$ is the characteristic function of $\De$.  
Then, we have $E^{X}(\la)=E^{X}((-\infty,\la])$.  
The following proposition is a straightforward consequence of definitions. 

\begin{Proposition}\label{th:QBorel}
Let $r\in\R$, $s,t\in\R$, and $X\af\cM_{SA}$.
For $j=R,C,S$, we have the following relations.
\bitem
\item[\rm (i)] $\val{\check{r}\in\tilde{s}}=\val{\check{s}\le \check{r}}
=E^{s1}(t)$.
\item[\rm (ii)] $\val{\tilde{s}\le\tilde{t}}=\val{\check{s}\le\check{t}}
=E^{s1}(t)$.
\item[\rm (iii)] $\val{\tilde{X}\le\tilde{t}}=E^{X}(t)=E^{X}((-\infty,t])$.
\item[\rm (iv)] $\val{\tilde{t}<\tilde{X}}=1-E^{X}(t)=E^{X}((t,\infty))$.
\item[\rm (v)] $\val{\tilde{s}<\tilde{X}\le \tilde{t}}=E^X({t})-E^X({s})=
E^{X}((s,t])$.
\item[\rm (vi)] $\val{\tilde{X}=\tilde{t}}
=E^{X}(t)-\Sup_{r<t,r\in\Q} E^{X}(r)=E^{X}(\{t\})$.
\eitem
\end{Proposition}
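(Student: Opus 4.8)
The plan is to prove the six relations by unfolding the relevant $\De_{0}$-formulas together with the truth-value rules, reducing each to elementary lattice arithmetic among the spectral projections of $X$. The key preliminary observation --- which is also why each relation holds verbatim for all three conditionals $j=S,C,R$ --- is that the scalar reals $\tilde{s},\tilde{t}$ and the canonical reals $\check{r}$ are ``classical'': $\tilde{s}(\check{r})=E^{s1}(r)\in\{0,1\}$, $\tilde{t}(\check{r})=E^{t1}(r)\in\{0,1\}$, and $L(\tilde{t})=\{0,1\}$. Hence, whenever a conditional $P\Then_{j}Q$ is evaluated below, at least one of $P,Q$ lies in $\{0,1\}$, so $P\commutes Q$, and by property~(LB) we have $P\Then_{j}Q=P^{\perp}\Or Q$ independently of $j$. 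Combined with Theorem~\ref{th:RQ}(i), which gives $\val{\check{r}\in u}=u(\check{r})$ for $u\in\R^{(\cQ)}$, and with the defining relation $\tilde{A}(\check{r})=E^{A}(r)$, this leaves only bookkeeping.

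For (i): Theorem~\ref{th:RQ}(i) applied to $\tilde{s}$ gives $\val{\check{r}\in\tilde{s}}=\tilde{s}(\check{r})=E^{s1}(r)$, and since $E^{s1}(r)=E^{s1}((-\infty,r])\in\{0,1\}$ is $1$ exactly when $s\le r$, the $\De_{0}$-Elementary Equivalence Principle (Theorem~\ref{th:2.3.2}) identifies it with $\val{\check{s}\le\check{r}}$. For (ii): recalling that, with reals realized as closed upper segments, $x_{1}\le x_{2}$ unfolds to $\forall y\in x_{2}\,(y\in x_{1})$, the rule for bounded universal quantification gives $\val{\tilde{s}\le\tilde{t}}=\Inf_{r\in\Q}\bigl(\tilde{t}(\check{r})\Then_{j}\val{\check{r}\in\tilde{s}}\bigr)$; by the preliminary observation and (i), each factor equals $\tilde{t}(\check{r})^{\perp}\Or E^{s1}(r)\in\{0,1\}$, and it vanishes iff $t\le r<s$ for some $r\in\Q$, which happens iff $t<s$; hence $\val{\tilde{s}\le\tilde{t}}$ is $1$ iff $s\le t$, that is $\val{\tilde{s}\le\tilde{t}}=E^{s1}(t)$, and likewise $\val{\check{s}\le\check{t}}=E^{s1}(t)$ by Theorem~\ref{th:2.3.2}.

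For (iii): the same computation, now with $\val{\check{r}\in\tilde{X}}=\tilde{X}(\check{r})=E^{X}(r)$, gives $\val{\tilde{X}\le\tilde{t}}=\Inf_{r\in\Q}\bigl(E^{t1}(r)^{\perp}\Or E^{X}(r)\bigr)$; the factor is $1$ for $t>r$ and equals $E^{X}(r)$ for $t\le r$, so the infimum is $\Inf_{r\ge t,\ r\in\Q}E^{X}(r)=E^{X}(t)$ by right-continuity of the resolution of identity $\{E^{X}(\la)\}$. Then (iv) follows since $\tilde{t}<\tilde{X}$ is logically $\Not(\tilde{X}\le\tilde{t})$, whence $\val{\tilde{t}<\tilde{X}}=\val{\tilde{X}\le\tilde{t}}^{\perp}=1-E^{X}(t)$ by the rule for negation; (v) follows from $\val{\ph_{1}\And\ph_{2}}=\val{\ph_{1}}\And\val{\ph_{2}}$ applied to (iv) and (iii), since $E^{X}(s)$ and $E^{X}(t)$ commute with $E^{X}(s)\le E^{X}(t)$, whence $(1-E^{X}(s))\And E^{X}(t)=E^{X}(t)-E^{X}(s)=E^{X}((s,t])$; and (vi) follows by computing in the same way $\val{\tilde{X}\subseteq\tilde{t}}=\Inf_{r<t,\ r\in\Q}(1-E^{X}(r))=1-\Sup_{r<t,\ r\in\Q}E^{X}(r)$, intersecting with $\val{\tilde{t}\subseteq\tilde{X}}=E^{X}(t)$ (the same computation as in (iii)), and using $\Sup_{r<t,\ r\in\Q}E^{X}(r)=E^{X}((-\infty,t))$, so that $\val{\tilde{X}=\tilde{t}}=E^{X}(t)-E^{X}((-\infty,t))=E^{X}(\{t\})$.

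I expect no genuine obstacle; the content is routine. Two points do need attention. First, one must check in each item that the conditional is never evaluated on a properly noncommuting pair: this is precisely what makes the relations $j$-independent, and it holds because the $\De_{0}$-formulas for $\le$, $<$, and $=$ always put the classical objects $\check{r},\tilde{s},\tilde{t}$ on the side supplying the operand that lies in $\{0,1\}$. Second, the passage from infima and suprema over $\Q$ to $E^{X}(t)$, $E^{X}((-\infty,t))$ and $E^{X}(\{t\})$ uses right-continuity of the spectral resolution and density of $\Q$, while the final subtractions (such as $E^{X}(t)-E^{X}(s)$ and $E^{X}(\{t\})=E^{X}(t)-E^{X}((-\infty,t))$) rely on the mutual commutativity of the spectral projections of $X$. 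If strict inequality between reals is instead taken as $\le\wedge\ne$, item (iv) follows from (iii) and (vi) and again equals $1-E^{X}(t)$, because $E^{X}((-\infty,t))$ and $E^{X}(\{t\})$ commute and are orthogonal.
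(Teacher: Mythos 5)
Your proposal is correct and matches what the paper intends: the paper gives no written proof beyond the remark that the proposition is ``a straightforward consequence of definitions,'' and your argument is exactly that unfolding --- Theorem~\ref{th:RQ}(i), the definition $u\le v:=(\forall r\in v)[r\in u]$, condition (LB) giving $j$-independence because one operand of every conditional evaluated is $0$ or $1$, and right-continuity of the spectral family together with $\Sup_{r<t,\,r\in\Q}E^{X}(r)=E^{X}((-\infty,t))$. You also, quite reasonably, read the obvious typos in the statement (e.g.\ $E^{s1}(t)$ in (i) should be $E^{s1}(r)$, with $r$ rational) in the intended way, and your handling of both readings of $<$ in (iv) removes the only possible ambiguity.
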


In what follows, we write $r\And s=\min\{r,s\}$
and $r\Or s=\max\{r,s\}$ for any $r,s\in\R$.

The $\cQ$-value of equality  $\val{u=v}$ for $u,v\in\RQ$ is 
independent of the choice of the conditional and characterized
as follows.

\begin{Theorem}
\label{th:equality}
For any $u,v\in\R^{(\cQ)}$ we have
\[
 \val{u=v}=\cP\{\ps\in\cH\mid
 u(\check{x})\ps=v(\check{x})\ps
\mbox{ for all }x\in \Q\}.
\]
\end{Theorem}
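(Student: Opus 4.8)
The plan is to unfold the definition of $\val{u=v}$ until only the values $u(\check{r}),v(\check{r})$ for $r\in\Q$ remain, and then to read off the answer from the Hilbert space description of the biconditional in Theorem~\ref{th:com}~(iv). Since $u,v\in\R^{(\cQ)}$, we have $\dom(u)=\dom(v)=\dom(\check{\Q})=\{\check{r}\mid r\in\Q\}$, and Theorem~\ref{th:RQ}~(i) gives $\val{\check{r}\in v}=v(\check{r})$ and $\val{\check{r}\in u}=u(\check{r})$ for every $r\in\Q$. Substituting into the definition of $\val{\cdot\subseteq\cdot}$ yields
\[
\val{u\subseteq v}=\Inf_{r\in\Q}\bigl(u(\check{r})\Thenjj v(\check{r})\bigr),\qquad
\val{v\subseteq u}=\Inf_{r\in\Q}\bigl(v(\check{r})\Thenjj u(\check{r})\bigr).
\]
Using $\val{u=v}=\val{u\subseteq v}\And\val{v\subseteq u}$, the identity $\Inf_{r}a_{r}\And\Inf_{r}b_{r}=\Inf_{r}(a_{r}\And b_{r})$ valid in any complete lattice, and the definition $P\Iff Q=(P\Thenjj Q)\And(Q\Thenjj P)$, I obtain
\[
\val{u=v}=\Inf_{r\in\Q}\Bigl[\bigl(u(\check{r})\Thenjj v(\check{r})\bigr)\And\bigl(v(\check{r})\Thenjj u(\check{r})\bigr)\Bigr]=\Inf_{r\in\Q}\bigl(u(\check{r})\Iff v(\check{r})\bigr),
\]
which incidentally re-confirms that $\val{u=v}$ does not depend on $j$.

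Next I would apply Theorem~\ref{th:com}~(iv) termwise: for each $r\in\Q$, $u(\check{r})\Iff v(\check{r})=\cP\{\ps\in\cH\mid u(\check{r})\ps=v(\check{r})\ps\}$, the projection onto the closed subspace $\ker\bigl(u(\check{r})-v(\check{r})\bigr)$. It then remains to evaluate the countable infimum of these projections. Each $u(\check{r})\Iff v(\check{r})$ is an ortholattice polynomial in $u(\check{r}),v(\check{r})\in L(u)\cup L(v)\subseteq\cQ$, hence lies in $\cQ$; and since $\cQ$ is a sublogic of $\cQ(\cH)$ it is closed under arbitrary infima and these agree with those computed in $\cQ(\cH)$, where the infimum of a family of projections is the projection onto the intersection of their ranges. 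Therefore
\[
\val{u=v}=\Inf_{r\in\Q}\cP\{\ps\mid u(\check{r})\ps=v(\check{r})\ps\}=\cP\Bigl(\bigcap_{r\in\Q}\{\ps\mid u(\check{r})\ps=v(\check{r})\ps\}\Bigr)=\cP\{\ps\mid u(\check{r})\ps=v(\check{r})\ps\ \mbox{for all }r\in\Q\},
\]
which is the asserted identity.

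The proof is essentially bookkeeping once the preliminary results are available, so I do not anticipate a serious obstacle; the one conceptual point is the final identification, namely that the lattice infimum $\Inf_{r\in\Q}$ computed in $\cQ$ coincides with the set-theoretic intersection $\bigcap_{r\in\Q}\ker\bigl(u(\check{r})-v(\check{r})\bigr)$ in $\cH$. The thing to keep in mind is that this needs no commutation hypothesis: in $\cQ(\cH)$ the meet of any family of projections is the projection onto the intersection of their ranges. Consequently neither the monotonicity of $\{u(\check{r})\}_{r\in\Q}$ (which emerges in the proof of Theorem~\ref{th:RQ}) nor the particular choice of conditional plays any role, and the only inputs actually used are Theorem~\ref{th:RQ}~(i), which needs $u,v$ to be genuine Dedekind cuts, and the inclusions $L(u),L(v)\subseteq\cQ$.
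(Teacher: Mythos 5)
Your proof is correct and follows essentially the same route as the paper: unfold $\val{u=v}$ via Theorem~\ref{th:RQ}~(i) to get $\Inf_{r\in\Q}\bigl(u(\check{r})\Iff v(\check{r})\bigr)$, apply the Hilbert-space characterization of $\Iff$ from Theorem~\ref{th:com}~(iv), and identify the infimum of projections with the projection onto the intersection of the kernels. The only difference is that you spell out this last infimum-to-intersection step (and its independence from any commutation hypothesis), which the paper compresses into ``the assertion follows easily.''
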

\begin{Proof}
From Theorem \ref{th:RQ} (i) we have
\beqas
\val{u=v}
&=&
\Inf_{r\in \Q}(u(\check{r})\Then\val{\check{r}\in v})\And
\Inf_{r\in \Q}(v(\check{r})\Then\val{\check{r}\in u})\\
&=&
\Inf_{r\in \Q}(u(\check{r})\Iff v(\check{r})).
\eeqas
From Proposition \ref{th:com} (iv), we have
\[
u(\check{r})\Iff v(\check{r})
=
\cP\{\ps\in\cH\mid u(\check{r})\ps=
v(\check{r})\ps\}
\]
Thus, the assertion follows easily.
\end{Proof}

\begin{Theorem}\label{th:forcing_real}
For any $u,v\in\R^{(\cQ)}$ 
and $\ps\in\cH$, the following conditions are all
equivalent.

(i)  $\ps\in\cR\val{u=v}$.

(ii) $u(\check{x})\ps=v(\check{x})\ps$ for any $x\in\Q$.

(iii) $u(\check{x})v(\check{y})\ps=v(\check{x}\And\cy)\ps$ for any
$x,y\in\Q$.

(iv)
$\bracket{u(\check{x})\ps,v(\check{y})\ps}
=\|v(\check{x}\And\check{y})\ps\|^{2}$
for any $x,y\in\Q$.

\end{Theorem}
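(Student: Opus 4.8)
The plan is to prove the four equivalences by establishing $(i)\Leftrightarrow(ii)$ first, then the cycle $(ii)\Rightarrow(iii)\Rightarrow(iv)\Rightarrow(ii)$, exploiting heavily the monotonicity $u(\check x_1)\le u(\check x_2)$ for $x_1<x_2$ proved in Theorem~\ref{th:RQ}, together with the fact that each $u(\check r)$, $v(\check s)$ is a projection and that $\{u(\check r)\mid r\in\Q\}\cup\{v(\check s)\mid s\in\Q\}$ is a mutually commuting family (again Theorem~\ref{th:RQ}(ii), via $\cuniv(u)=\cuniv(v)=1$ and the transfer principle applied to $u,v$ jointly --- or more directly, since all these projections are spectral projections of the commuting self-adjoint operators $\hu,\hv$ when $u,v$ range over a common maximal abelian subalgebra; but for a general $\psi$ we only need the relations that actually hold).

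The equivalence $(i)\Leftrightarrow(ii)$ is immediate from Theorem~\ref{th:equality}: by definition $\cR\val{u=v}=\{\psi\mid u(\check x)\psi=v(\check x)\psi\text{ for all }x\in\Q\}$. For $(ii)\Rightarrow(iii)$: assume $u(\check x)\psi=v(\check x)\psi$ for all $x$. Fix $x,y\in\Q$. Applying the hypothesis at $x$ gives $u(\check x)v(\check y)\psi$; I would then use that $v(\check y)$ commutes with $v(\check x)$ and, when needed, rewrite $u(\check x)v(\check y)\psi = v(\check x)v(\check y)\psi$ by first replacing $u(\check x)\psi$ only after moving $v(\check y)$ appropriately --- the clean way is: $(ii)$ says $u(\check x)$ and $v(\check x)$ agree on $\psi$, and since $v(\check y)\psi$ need not equal $\psi$ one must be slightly careful. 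The right move is to note $u(\check x)v(\check y)\psi=v(\check x)v(\check y)\psi$ fails in general; instead use that $v(\check y)\commutes v(\check x)$ and that both are projections, so $v(\check x)v(\check y)=v(\check x\wedge\check y)$ in the sense of the spectral calculus, i.e. $v(\check x)v(\check y)\psi=v(\check x\wedge\check y)\psi$ by monotonicity (the smaller of the two projections absorbs the other). Then it remains to pass from $v(\check x)v(\check y)\psi$ to $u(\check x)v(\check y)\psi$, which needs $(ii)$ applied after $v(\check y)$; this is where one invokes that $v(\check y)\psi\in\cR(v(\check y))$ and the agreement of $u(\check x)$ and $v(\check x)$ on the whole subspace $\cR\val{u=v}$ --- but that is circular. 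The honest route is: $(ii)\Rightarrow(i)$, and $\cR\val{u=v}$ is $u(\check x)$- and $v(\check y)$-invariant because $\val{u=v}$ commutes with $L(u,v)$ (Proposition~\ref{th:commutativity}), so from $\psi\in\cR\val{u=v}$ we get $v(\check y)\psi\in\cR\val{u=v}$, hence $u(\check x)v(\check y)\psi=v(\check x)v(\check y)\psi=v(\check x\wedge\check y)\psi$. This gives $(iii)$.

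For $(iii)\Rightarrow(iv)$: take the inner product of $(iii)$ against $\psi$, or better against $v(\check y)\psi$ --- actually compute $\bracket{u(\check x)\psi,v(\check y)\psi}=\bracket{\psi,u(\check x)v(\check y)\psi}=\bracket{\psi,v(\check x\wedge\check y)\psi}=\|v(\check x\wedge\check y)\psi\|^2$ using that $u(\check x)$ is self-adjoint and $v(\check x\wedge\check y)$ is a projection. For $(iv)\Rightarrow(ii)$: set $y=x$ in $(iv)$ to get $\bracket{u(\check x)\psi,v(\check x)\psi}=\|v(\check x)\psi\|^2$, and also (using $(iv)$ with the roles structurally symmetric, or by taking $x=y$ and expanding $\|u(\check x)\psi-v(\check x)\psi\|^2$) derive $\|u(\check x)\psi\|^2=\bracket{u(\check x)\psi,v(\check x)\psi}$ as well; then $\|u(\check x)\psi-v(\check x)\psi\|^2=\|u(\check x)\psi\|^2-2\,\mathrm{Re}\bracket{u(\check x)\psi,v(\check x)\psi}+\|v(\check x)\psi\|^2=0$, giving $(ii)$. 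The one gap to close here is obtaining $\|u(\check x)\psi\|^2=\bracket{u(\check x)\psi,v(\check x)\psi}$ from $(iv)$; since $(iv)$ as stated only directly yields the $v$-side norm, I would note that $(iv)$ with $x,y$ swapped, combined with $\bracket{v(\check y)\psi,u(\check x)\psi}=\overline{\bracket{u(\check x)\psi,v(\check y)\psi}}$, forces the $u$-side too once one also runs the argument with $u,v$ interchanged (the statement $(iv)$ is not literally symmetric, so strictly one proves $(iv)\Rightarrow(i)$ by first getting $v(\check x\wedge\check y)\psi$ behaves like a common value and then checking $u(\check x)\psi$ has the same norm and inner products, hence equals it in the Hilbert space).

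The main obstacle is the bookkeeping in $(ii)\Leftrightarrow(iii)$: one must consistently use the invariance of $\cR\val{u=v}$ under all the projections in $L(u,v)$ --- which follows from Proposition~\ref{th:commutativity} --- rather than naively substituting $u(\check x)\psi=v(\check x)\psi$ inside expressions where $v(\check y)\psi\ne\psi$. Once that invariance is in hand, everything reduces to the elementary fact that two commuting projections $P,Q$ satisfy $PQ=P\wedge Q$ and that $P\wedge Q$ acts as the smaller projection along the monotone chain $\{u(\check r)\}\cup\{v(\check s)\}$, so $u(\check x)v(\check y)\psi=v(\check x\wedge\check y)\psi$ on the invariant subspace; the inner-product identities $(iv)$ are then just polarization.
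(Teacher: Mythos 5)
Your handling of (i)$\Leftrightarrow$(ii) via Theorem \ref{th:equality} and of (iii)$\Rightarrow$(iv) matches the paper, but the two steps you yourself flag contain genuine gaps. In (ii)$\Rightarrow$(iii), the invariance of $\cR\val{u=v}$ under $v(\check{y})$ cannot be obtained from Proposition \ref{th:commutativity}: that proposition concerns projections $p\in L(u,v)^{!}$, i.e.\ projections commuting with \emph{every} element of $L(u,v)$, whereas $v(\check{y})$ lies in $L(u,v)$ and in general does not commute with the $u(\check{x})$ --- Theorem \ref{th:RQ} gives $\cuniv(u)=1$ and $\cuniv(v)=1$ separately, not $\cuniv(u,v)=1$, and $\hat{u},\hat{v}$ are arbitrary, possibly non-commuting observables, so your framing of $\{u(\check{r})\}\cup\{v(\check{s})\}$ as a mutually commuting (let alone monotone) family is wrong. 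The invariance you want is in fact true, but its only available proof is exactly the computation you discarded as circular, performed on the vector rather than on the operator: by (ii) at $y$ one has $v(\check{y})\psi=u(\check{y})\psi$, hence $u(\check{x})v(\check{y})\psi=u(\check{x})u(\check{y})\psi=u(\check{x}\And\check{y})\psi$, because the chain $\{u(\check{r})\}_{r\in\Q}$ is monotone (Theorem \ref{th:RQ}) so the product of the two projections is the smaller one, and this equals $v(\check{x}\And\check{y})\psi$ by (ii) at $x\And y$. That two-line substitution is the paper's argument; it is not circular, and no invariance lemma is needed. As written, your justification of the invariance is a misapplication, so the step does not stand.

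In (iv)$\Rightarrow$(ii) you correctly observe that (iv) only yields $\bracket{u(\check{x})\psi,v(\check{x})\psi}=\|v(\check{x})\psi\|^{2}$, but your proposed repair does not close the gap: swapping $x$ and $y$ gives nothing new at $x=y$, and ``running the argument with $u,v$ interchanged'' presupposes the $u$--$v$-swapped version of (iv), which is not among the hypotheses. What is missing is the norm identity $\|u(\check{x})\psi\|=\|v(\check{x})\psi\|$. One way to extract it from (iv): since $\val{\R(v)}=1$ we have $\Sup_{y\in\Q}v(\check{y})=1$, so the increasing net $v(\check{y})\psi$ converges to $\psi$ as $y\to\infty$ along $\Q$; fixing $x$ and letting $y\to\infty$ in (iv) (whose right-hand side is constantly $\|v(\check{x})\psi\|^{2}$ for $y\ge x$) gives $\|u(\check{x})\psi\|^{2}=\bracket{u(\check{x})\psi,\psi}=\|v(\check{x})\psi\|^{2}$, after which $\|u(\check{x})\psi-v(\check{x})\psi\|^{2}=0$ as you compute. (The paper itself is terse at this point, asserting $\bracket{v(\check{x})\psi,u(\check{x})\psi}=\|u(\check{x})\psi\|^{2}$ without comment, but the swap-based fix you describe does not supply that identity.)
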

\begin{Proof}
The equivalence (i) $\IFF$ (ii) follows from Theorem \ref{th:equality}.
Suppose (ii) holds.
Then, we have
$
u(\cx)v(\cy)\ps=u(\cx)u(\cy)\ps=u(\cx\And \cy)\ps=v(\cx\And \cy)\ps.
$
Thus, the implication (ii) $\THEN$ (iii) holds.
Suppose (iii) holds.
We have 
$
\bracket{u(\cx)\ps,v(\cy)\ps}
=\bracket{\ps,u(\cx)v(\cy)\ps}
=\bracket{\ps,v(\cx\And\cy)\ps}
=\|v(\cx\And\cy)\ps\|^{2},
$
and hence the implication (iii)$\THEN$(iv) holds.
Suppose (iv) holds.
Then, we have $\bracket{u(\cx)\ps,v(\cx)\ps}=\|v(\cx)\ps\|^{2}$ and
$\bracket{v(\cx)\ps,u(\cx)\ps}=\|u(\cx)\ps\|^2$.  Consequently, we have 
$
\|u(\cx)\ps-v(\cx)\ps\|^{2}
=
\|u(\cx)\ps\|^2+\|v(\cx)\ps\|^{2}-\bracket{u(\cx)\ps,v(\cx)\ps}
-\bracket{v(\cx)\ps,u(\cx)\ps}
=0, 
$
and hence $u(\cx)\ps=v(\cx)\ps$.  Thus, the implication (iv)$\THEN$(ii)
holds, and the proof is completed.
\end{Proof}

The set $\R_\cQ$ of real numbers in $\VQ$ is defined by
\beqas
\R_\cQ=\R^{(\cQ)}\times\{1\}.
\eeqas

Let $A$ be an observable.
For any (complex-valued) bounded Borel function $f$ on $\R$, 
we define the observable
$f(A)$ by
\[
f(A)=\int_{\R}f(\la)\,dE^{A}(\la).
\]
We shall denote by $B(\R)$ the space of bounded Borel functions on $\R$.
For any Borel set $\De$ in $\R$, we define $E^{A}(\De)$ by
$E^{A}(\De)=\ch_{\De}(A)$, where $\ch_{\De}$ is a Borel function
on $\R$ defined by $\ch_{\De}(x)=1$ if $x\in\De$ and $\ch_{\De}(x)=0$ if
$x\not\in\De$. 
For any pair of observables $A$ and $B$, the {\em joint probability distribution}
of $A$ and $B$ in a state $\ps$ is a probability measure $\mu^{A,B}_{\ps}$
on $\R^{2}$ satisfying 
\[
\mu^{A,B}_{\ps}(\De\times\Ga)=
\bracket{\ps,(E^{A}(\De)\And E^{B}(\Ga))\ps}
\]
for any $\De,\Ga\in\cB(\R)$.
Gudder \cite{Gud68} showed that the joint probability distribution $\mu^{A,B}_{\ps}$
exists if and only if the relation $[E^{A}(\De),E^{B}(\Ga)]\ps=0$ holds for
every $\De,\Ga\in\cB(\R)$.

\begin{Theorem}\label{th:qpc}
For any observables
(self-adjoint operators) $A, B$ on $\cH$ and any state
(unit vector) $\ps\in\cH$, the following conditions are all equivalent.

(i)  $\ps\in\cR\val{\tilde{A}=\tilde{B}}$.

(ii) $E^{A}(r)\ps=E^{B}(r)\ps$ for any $r\in\Q$.

(iii) $f(A)\ps=f(B)\ps$ for all $f\in B(\R)$.

(iv) $\bracket{E^{A}(\De)\ps,E^{B}(\Ga)\ps}=0$
for any $\De,\Ga\in\cB(\R)$ with $\De\cap\Ga=\emptyset$.

(v) There is the joint probability distribution $\mu^{A,B}_{\ps}$ of $A$ and $B$
in $\ps$ satisfying 
\[
\mu^{A,B}_{\ps}(\{(a,b)\in\R^{2}\mid a=b\})=1.
\]

\end{Theorem}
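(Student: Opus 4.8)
The plan is to prove the cyclic chain (i) $\Leftrightarrow$ (ii), (ii) $\THEN$ (iii) $\THEN$ (v) $\THEN$ (iv) $\THEN$ (ii). Throughout, $\tilde A$ and $\tilde B$ are taken in $\R^{(\cQ)}$ with $\cQ=\cQ(\cH)$, so that $\val{\tilde A=\tilde B}$ is defined, and we recall $\tilde A(\check r)=E^{A}(r)$, $\tilde B(\check r)=E^{B}(r)$ for $r\in\Q$. The equivalence (i) $\Leftrightarrow$ (ii) is then immediate from Theorem \ref{th:forcing_real} applied to $u=\tilde A$, $v=\tilde B$. For (ii) $\THEN$ (iii): the family of $\De\in\cB(\R)$ satisfying $E^{A}(\De)\ps=E^{B}(\De)\ps$ contains $\R$ and is closed under complementation and countable disjoint unions, hence is a Dynkin system; by hypothesis it contains the $\pi$-system $\{(-\infty,r]\mid r\in\Q\}$, which generates $\cB(\R)$, so it equals $\cB(\R)$, and writing $f\in B(\R)$ as a uniform limit of simple Borel functions gives $f(A)\ps=f(B)\ps$. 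Conversely (iii) $\THEN$ (ii) by taking $f=\ch_{(-\infty,r]}$.

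The steps through (v) hinge on the lemma that \emph{if $P,Q$ are projections on $\cH$ with $[P,Q]\ps=0$, then $(P\And Q)\ps=PQ\ps$}. To see this, set $x=PQ\ps=QP\ps$; then $Px=PPQ\ps=x$ and $Qx=QQP\ps=x$, so $x\in\ran(P)\cap\ran(Q)=\ran(P\And Q)$. Writing $R=P\And Q$ we have $RP=R$ and $RQ=R$, hence $RQ^{\perp}=R-RQ=0$; since $Px=x$ and $P\ps-x=P(1-Q)\ps=PQ^{\perp}\ps$, we obtain $R(\ps-x)=RP(\ps-x)=R(P\ps-x)=RPQ^{\perp}\ps=RQ^{\perp}\ps=0$, so $R\ps=Rx=x$. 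Now assume (iii): then $E^{A}(\De)\ps=E^{B}(\De)\ps$ for all $\De\in\cB(\R)$, so for all $\De,\Ga\in\cB(\R)$ both $E^{A}(\De)E^{B}(\Ga)\ps$ and $E^{B}(\Ga)E^{A}(\De)\ps$ equal $E^{A}(\De\cap\Ga)\ps$; in particular $[E^{A}(\De),E^{B}(\Ga)]\ps=0$, so by Gudder \cite{Gud68} the joint distribution $\mu^{A,B}_{\ps}$ exists, and the lemma gives $\mu^{A,B}_{\ps}(\De\times\Ga)=\bracket{\ps,(E^{A}(\De)\And E^{B}(\Ga))\ps}=\bracket{\ps,E^{A}(\De\cap\Ga)\ps}$, which vanishes when $\De\cap\Ga=\emptyset$. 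Since $\{(a,b)\mid a\ne b\}=\bigcup_{q\in\Q}\bigl((-\infty,q)\times(q,\infty)\cup(q,\infty)\times(-\infty,q)\bigr)$ is a countable union of rectangles with disjoint sides, $\mu^{A,B}_{\ps}(\{(a,b)\mid a=b\})=1$, which is (v).

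For (v) $\THEN$ (iv): the existence of $\mu^{A,B}_{\ps}$ forces $[E^{A}(\De),E^{B}(\Ga)]\ps=0$ for all $\De,\Ga$, so by the lemma, whenever $\De\cap\Ga=\emptyset$, $\bracket{E^{A}(\De)\ps,E^{B}(\Ga)\ps}=\bracket{\ps,E^{A}(\De)E^{B}(\Ga)\ps}=\bracket{\ps,(E^{A}(\De)\And E^{B}(\Ga))\ps}=\mu^{A,B}_{\ps}(\De\times\Ga)\le\mu^{A,B}_{\ps}(\{(a,b)\mid a\ne b\})=0$. For (iv) $\THEN$ (ii): applying (iv) to $\De=(-\infty,r]$, $\Ga=(r,\infty)$ and then to $\De=(r,\infty)$, $\Ga=(-\infty,r]$, and using $E^{A}((r,\infty))=1-E^{A}(r)$, one gets $\|E^{A}(r)\ps\|^{2}=\bracket{E^{A}(r)\ps,E^{B}(r)\ps}=\|E^{B}(r)\ps\|^{2}$, whence $\|E^{A}(r)\ps-E^{B}(r)\ps\|^{2}=0$. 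The hard part is the lemma and, with it, correctly matching the orthomodular meet $E^{A}(\De)\And E^{B}(\Ga)$ with the operator product on $\ps$ and hence with both Gudder's existence criterion and the meet-based definition of $\mu^{A,B}_{\ps}$; the measure-theoretic approximation in (ii) $\THEN$ (iii) and the Hilbert-space computation in (iv) $\THEN$ (ii) are routine.
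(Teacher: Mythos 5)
Your proposal is correct, and its first half runs parallel to the paper's: the equivalence (i) $\Leftrightarrow$ (ii) is obtained exactly as in the paper from Theorem \ref{th:forcing_real}, and for (ii) $\THEN$ (iii) the paper extends $E^{A}(r)\ps=E^{B}(r)\ps$ from rational $r$ to all real $\la$ by right-continuity of the spectral resolutions and then uses the Stieltjes-integral representation of $\bracket{\xi,f(A)\ps}$, whereas you use a $\pi$--$\la$ (Dynkin) argument plus simple-function approximation; both are routine and interchangeable. The genuine divergence is in (iii) $\Leftrightarrow$ (iv) $\Leftrightarrow$ (v): the paper does not prove these implications at all but cites them from Ref.~\cite{06QPC}, while you give a self-contained argument whose engine is the lemma that $[P,Q]\ps=0$ implies $(P\And Q)\ps=PQ\ps$. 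Your proof of that lemma is correct (with $x=PQ\ps$ and $R=P\And Q$ one has $Rx=x$, $RQ^{\perp}=0$, and $R(\ps-x)=RPQ^{\perp}\ps=0$), and combined with Gudder's criterion \cite{Gud68}, which the paper itself quotes, it yields $\mu^{A,B}_{\ps}(\De\times\Ga)=\bracket{\ps,E^{A}(\De\cap\Ga)\ps}$ under (iii), hence (v) via the countable cover of $\{a\neq b\}$ by rectangles with disjoint sides, then (v) $\THEN$ (iv) by the same lemma, and (iv) $\THEN$ (ii) by the norm computation with $\De=(-\infty,r]$, $\Ga=(r,\infty)$ and its mirror image; the cyclic structure does close all equivalences. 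What your route buys is a proof internal to the paper's own definitions (the meet-based definition of $\mu^{A,B}_{\ps}$ and Gudder's existence criterion), making the theorem self-contained and exhibiting explicitly how the lattice meet $E^{A}(\De)\And E^{B}(\Ga)$ reduces to the operator product on the relevant vectors; what the paper's route buys is brevity and alignment with the theory of quantum perfect correlations developed in \cite{06QPC}, where these equivalences (and further properties) are established once and reused.
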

\begin{Proof}
The equivalence (i) $\IFF$ (ii)
follows from Theorem \ref{th:forcing_real}.
Suppose that (ii) holds.  Let $\la\in\R$.  
If $r_1,r_2,\ldots$ be a decreasing sequence of rational numbers
convergent to $\la$, then $E^{A}(r_n)\ps$ and $E^{B}(r_n)\psi$ 
are convergent to $E^{A}(\la)\ps$ and $E^{B}(\la)\ps$, respectively, so that
$E^{A}(\la)\ps=E^{B}(\la)\ps$ for all $\la\in\R$.  Thus, we have 
$$
\bracket{\xi,f(A)\ps}=\int_\R f(\la)\,d\bracket{\xi,E^{A}(\la)\ps}=\int_\R
f(\la)\,d\bracket{\xi,E^{B}(\la)\ps}=\bracket{\xi,f(B)\ps}
$$ 
for all $\xi\in\cH$, and hence we have $f(A)\ps=f(B)\ps$ for all
$f\in B(\R)$.   Thus, the implication (ii) $\THEN$ (iii) holds.
Since condition (ii) is a special case of condition (iii) where $f=\ch_{(-\infty,r]}$,
the implication (iii) $\THEN$ (ii) is trivial, so that the equivalence
(ii) $\IFF$ (iii) follows.  The equivalence of assertions (iii), (iv), and (v) have
been already proved in Ref.~\cite{06QPC}, the proof is completed.
\end{Proof}

Condition (iii) above is adopted as the defining condition for $A$ and $B$ to be
perfectly correlated in $\ps$ ~\cite{06QPC} because of the simplicity of the formulation.
Condition (v) justifies our nomenclature calling $A$ and $B$ ``perfectly 
correlated.''
By condition (i), quantum logic justifies the assertion that ``perfectly 
correlated'' observables actually have the same value in the given state.
For further properties and applications of the notion of perfect correlation,
we refer the reader to Ref.~\cite{06QPC}.

The following converse statement of the $\De_{0}$-ZFC  Transfer Principle
can be proved using the interpretation of real numbers in $\VQ$.

\begin{Theorem}[Converse of the $\De_{0}$-ZFC  Transfer Principle]
Let $j=S,C,R$.
If 
\[
\val{\ph({u}_{1},\ldots,{u}_{n})}=1
\]
holds for any $\De_{0}$-formula ${\ph} (x_{1},{\ldots}, x_{n})$ 
of $\cL(\in)$ provable in ZFC
and  $u_{1},{\ldots}, u_{n}\in\VQ$,  then the logic $\cQ$ is Boolean.
\end{Theorem}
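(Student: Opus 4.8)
The plan is to argue by contraposition. Suppose $\cQ$ is not Boolean. Since a complete orthomodular lattice is Boolean iff it is distributive, there are projections $P,Q,R\in\cQ$ with $P\And(Q\Or R)\neq(P\And Q)\Or(P\And R)$ (in a lattice the two distributive laws are equivalent, so one of them must fail at some triple). I will plug these into the tautologous distributive law for sets so that its truth value in $\VQ$ records exactly this failed identity, thereby producing a single $\De_{0}$-formula of $\cL(\in)$ provable in ZFC whose $\cQ$-value at suitable parameters is $\neq 1$.

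Let $u,v,w\in\VQ$ be $u=\{(\check{\emptyset},P)\}$, $v=\{(\check{\emptyset},Q)\}$, $w=\{(\check{\emptyset},R)\}$; a direct application of the truth-value rules gives $\val{\check{\emptyset}=\check{\emptyset}}=1$ and hence $\val{\check{\emptyset}\in u}=P$, $\val{\check{\emptyset}\in v}=Q$, $\val{\check{\emptyset}\in w}=R$. Consider the $\De_{0}$-formula
\[
\ph(x_1,x_2,x_3):=(\forall t\in x_1)\Big(\big(t\in x_1\And(t\in x_2\Or t\in x_3)\big)\Iff\big((t\in x_1\And t\in x_2)\Or(t\in x_1\And t\in x_3)\big)\Big),
\]
whose matrix is the propositional tautology $p\And(q\Or r)\Iff(p\And q)\Or(p\And r)$; hence $\ph$ is logically valid and a fortiori provable in ZFC.

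Unwinding $\val{\ph(u,v,w)}$: since $\dom(u)=\{\check{\emptyset}\}$ and $u(\check{\emptyset})=P$, the bounded-quantifier rule gives $\val{\ph(u,v,w)}=P\Thenjj\val{\psi}$, where $\val{\psi}$ is the value of the matrix of $\ph$ with $u,v,w,\check{\emptyset}$ substituted for $x_1,x_2,x_3,t$; using that $\Iff$ is common to all three conditionals, $\val{\psi}=\big(P\And(Q\Or R)\big)\Iff\big((P\And Q)\Or(P\And R)\big)$. Put $a=P\And(Q\Or R)$ and $b=(P\And Q)\Or(P\And R)$; both lie below $P$, so $a\And b$ and $(a\Or b)^{\perp}$ commute with $P$, and combining this with the orthomodular identity $P=(a\Or b)\Or\big((a\Or b)^{\perp}\And P\big)$ and Proposition~\ref{th:logic} one checks that $P\le a\Iff b$ would force $a\Or b=a\And b$, i.e.\ $a=b$. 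As $a\neq b$ by the choice of $P,Q,R$, we have $P\not\le a\Iff b$, and condition (E) for material conditionals then gives $P\Thenjj(a\Iff b)\neq1$. Thus $\val{\ph(u,v,w)}\neq1$ for the given $j$ (indeed for every $j=S,C,R$), contradicting the hypothesis; hence $\cQ$ is distributive, i.e.\ Boolean.

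The main obstacle is selecting the witnessing formula. The most natural candidate, the linearity axiom $x\le y\Or y\le x$ of the real order, detects non-commutativity only for $j=R$ (for $j=S,C$ its value at the obvious reals collapses to $1$); likewise, the $\Or$-over-$\And$ form of the distributive law, or bounding the quantifier by a ``generic'' set, collapses the critical factor through the absorption $P\le P\Or(\cdot)$. Bounding by $x_1$ in the $\And$-over-$\Or$ form is exactly what makes the argument run uniformly in $j$. If one prefers to stay inside $\R^{(\cQ)}$ as suggested in the text, one may instead take $u,v,w$ to be the reals corresponding under the Takeuti correspondence to $P^{\perp},Q^{\perp},R^{\perp}$, using Theorem~\ref{th:RQ}(i) together with $E^{P^{\perp}}(r)=P$ for rational $r\in[0,1)$; the factors for $r\notin[0,1)$ are then trivially $1$.
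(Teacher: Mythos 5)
Your argument is correct, but it takes a genuinely different route from the paper. The paper does not argue by contraposition: it fixes two arbitrary projections $P,Q\in\cQ$, passes to the corresponding quantum reals $\tP,\tQ\in\R^{(\cQ)}$ via the Takeuti correspondence, computes the atomic values $\val{\check{0}\in\tP}=I-P$, $\val{\check{0}\notin\tP}=P$ (and likewise for $Q$) from Proposition \ref{th:QBorel}, and then evaluates the ZFC-provable $\De_0$-tautology $z\in x \Iff [(z\in x\And z\in y)\Or(z\in x\And z\notin y)]$; the hypothesis forces $P=(P\And Q)\Or(P\And Q^{\perp})$, i.e.\ $P\commutes Q$ for every pair, whence $\cQ$ is Boolean by the commutation criterion. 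You instead detect failure of distributivity: you use ad hoc rank-two elements $u=\{(\check{\emptyset},P)\}$, etc., a bounded-quantifier form of the distributive law, and the fact that $b=(P\And Q)\Or(P\And R)\le a=P\And(Q\Or R)\le P$, so that $P\le a\Iff b$ forces $a=b$ by a Foulis--Holland-type computation (Proposition \ref{th:logic}), and then condition (E) of Hardegree's material conditionals gives $P\Thenjj(a\Iff b)\neq 1$ uniformly in $j=S,C,R$. Both proofs are sound; your version is more elementary in that it bypasses the real-number machinery of Section 6 entirely (no Takeuti correspondence, no Proposition \ref{th:QBorel}), at the cost of invoking property (E) to control the one occurrence of $\Thenjj$ introduced by the bounded quantifier, whereas the paper's witnessing formula contains no conditional at all and so is manifestly $j$-independent, and it directly exhibits commutativity of every pair rather than inferring Booleanness from distributivity of the ortholattice. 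Your closing alternative (using the reals $\widetilde{P^{\perp}}$, etc.) essentially reconnects your construction to the paper's choice of parameters.
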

\begin{Proof}
Let $P,Q\in\cQ$. 
Then, by the Takeuti correspondence we have $\tP,\tQ$ in $\RQ$.
By Proposition \ref{th:QBorel}, we have 
$\val{\check{0}\in\tilde \tP}=\tP(\check{0})=\val{\tP\le \tilde{0}}=E^{P}(0)=I-P$,  $\val{\check{0}\not\in\tilde \tP}
=P$,  $\val{\check{0}\in\tilde \tQ}=I-Q$, and $\val{\check{0}\not\in\tilde \tQ}=Q$.
Since 
\[
z\in x \Iff [(z\in x \And z\in y)\Or(z\in x \And z\not\in y)]
\] 
is provable in ZFC, by assumption we have
\[
\val{\check{0}\in \tP}= (\val{\check{0}\in \tP} \And \val{\check{0}\in \tQ})\Or(\val{\check{0}\in \tP} 
\And \val{\check{0}\not\in \tQ}),
\]
so that $P=(P\And Q)\Or(P\And Q^{\perp})$, and $P\commutes Q$.
Since  $P,Q\in\cQ$ were arbitrary, we conclude $\cQ$ is Boolean. 
\end{Proof}

\section{Order relations on quantum reals}
\label{se:7}

\newcommand{\les}{\preccurlyeq}
\newcommand{\ges}{\succcurlyeq}

Since the real numbers are defined as the upper segment of Dedekind cuts
of rational numbers whose lower segment has no end point,
the order relation between two quantum reals $u,v\in\RQ$ is defined as
\[
u\le v:=(\forall r\in v)[r\in u].
\]
Let $\cM=\cQ''$.
For any self-adjoint operators $X,Y\af \cM$ we write $X\les Y$ iff 
$E^{Y}(\la)\le E^{X}(\la)$ for all $\la\in\R$.  The relation is called the 
{\em spectral order}.  
This order is originally introduced by Olson \cite{Ols71} for bounded operators; 
for recent results for unbounded operators see \cite{PS12}.
With the spectral order the set $\overline{\cM}_{SA}$ is a conditionally complete
lattice, but it is not a vector lattice; in contrast to the fact that the usual linear order 
$\le$ of self-adjoint operators is a lattice if and only if $\cM$ is abelian.  
The following facts about the spectral order are known \cite{Ols71,PS12}:
\begin{enumerate}[(i)]\itemsep=0in
\item The spectral order coincides with the usual linear order on projections
and mutually commuting operators.
\item For any $0\le X,Y\af\cM_{SA}$, we have $X\les Y$ if and only if $X^{n}\le Y^{n}$
for all $n\in\N$.
\eenum

\begin{Proposition}
For any $X,Y\af\cM_{SA}$ and $j=S,C,R$, we have $\vval{\tX \le \tY}_j=1$ if and only if
 $X\les Y$.
\end{Proposition}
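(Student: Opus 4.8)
The plan is to unfold the definition of the quantum order relation, collapse it to a countable family of conditions on spectral projections indexed by the rationals, and then bridge from $\Q$ to $\R$ by right-continuity of the spectral resolution.

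First I would expand $\vval{\tX\le\tY}_j$. By the definition $u\le v:=(\forall r\in v)(r\in u)$ together with the truth-value clause for bounded universal quantifiers, and using $\dom(\tY)=\dom(\ck\Q)$,
\[
\vval{\tX\le\tY}_j=\Inf_{r\in\Q}\bigl(\tY(\ck r)\Thenjj\vval{\ck r\in\tX}_j\bigr).
\]
Since $\tX,\tY\in\R^{(\cQ)}$ we have $\tY(\ck r)=E^{Y}(r)$, and Theorem~\ref{th:RQ}(i) gives $\vval{\ck r\in\tX}_j=\tX(\ck r)=E^{X}(r)$ for every $r\in\Q$ and every $j=S,C,R$. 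Hence
\[
\vval{\tX\le\tY}_j=\Inf_{r\in\Q}\bigl(E^{Y}(r)\Thenjj E^{X}(r)\bigr).
\]

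Next I would use that an infimum of projections equals $1$ iff every term equals $1$, together with the minimum implicative condition (E), which holds for all three conditionals: $E^{Y}(r)\Thenjj E^{X}(r)=1$ iff $E^{Y}(r)\le E^{X}(r)$. Therefore $\vval{\tX\le\tY}_j=1$ iff $E^{Y}(r)\le E^{X}(r)$ for all $r\in\Q$; note that this already shows the value-$1$ locus does not depend on the choice of $j$ (the genuine dependence on the conditional asserted in the title is to surface only for truth values strictly below $1$).

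Finally I would close the gap between $\Q$ and $\R$. One direction is trivial since $\Q\subseteq\R$. For the converse, fix $\la\in\R$ and a sequence $r_{n}\in\Q$ with $r_{n}\downarrow\la$. Since $E^{X}(\la)=\Inf_{\la<r\in\Q}E^{X}(r)=\Inf_{n}E^{X}(r_{n})$ — this is exactly the relation defining the resolution of identity $E^{\tX}(\la)$ attached to $\tX$, equivalently right-continuity of the spectral family of $X$ — and likewise $E^{Y}(\la)=\Inf_{n}E^{Y}(r_{n})$, the inequalities $E^{Y}(r_{n})\le E^{X}(r_{n})$ pass to the meet, giving $E^{Y}(\la)\le E^{X}(\la)$ for all $\la\in\R$, i.e. $X\les Y$. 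Conversely $X\les Y$ restricts to $\Q$, completing the equivalence. The argument is essentially routine; the only points requiring any care are the $\Q$-to-$\R$ passage via right-continuity and the observation that condition (E) is precisely what trivializes the dependence on $j$ in the $\vval{\cdot}_j=1$ case.
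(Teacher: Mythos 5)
Your proof is correct and follows essentially the same route as the paper: unfold $\vval{\tX\le\tY}_j$ to $\Inf_{r\in\Q}\bigl(E^{Y}(r)\Thenjj E^{X}(r)\bigr)$ and invoke condition (E), which holds for all three conditionals. The only difference is that you explicitly supply the passage from $\Q$ to all of $\R$ via $E^{X}(\la)=\Inf_{\la<r\in\Q}E^{X}(r)$, a step the paper leaves implicit in its final sentence.
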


\begin{proof}
We have
\[
\vval{\tX \le \tY}_j=\val{(\forall r\in \tY)[r\in \tX]}
=\Inf_{r\in\scriptsize\dom(\tY)}\tY(r)\Thenjj \val{r\in\tX}
=\Inf_{r\in\Q}E^{Y}(r)\Thenjj E^{X}(r)
\]
Thus, the assertion follows from the fact that $E^{Y}(r)\le E^{X}(r)$
if and only if $E^{Y}(r)\Thenjj E^{X}(r)=1$.
\end{proof}

To clarify the operational meaning of the truth value $\val{\tX\le \tY}$, in what follows
we shall confine our attention to the case where $\cH$ is finite dimensional.

Let $X=\sum_{k=1}^{n}x_n E^{X}(\{x_n\})$ and $Y=\sum_{k=1}^{m}y_m E^{X}(\{x_n\})$
be the spectral decomposition of $X$ and $Y$, where $x_1<\cdots<x_n$ and $y_1<\cdots<y_n$.
Then, we have 
\beqas
E^{X}(x)&=&\sum_{k:x_k\le x} E^{X}(\{x_k\}),\\
E^{Y}(y)&=&\sum_{k:y_k\le y} E^{Y}(\{y_k\}).\\
\eeqas

We define the joint probability distribution $P^{X,Y}_{\psi}(x,y)$ 
representing the joint probability of obtaining the outcomes $Y=y$ and $X=x$ 
from the successive projective measurements of $Y$ and $X$, where the $X$ 
measurement follows immediately after the $Y$ measurement on the same system
prepared in the state $\psi$ just before the $Y$ measurement (see Figure \ref{fig:one}).
Then, it is well-know that $P^{X,Y}_{\psi}(x,y)$ is determined by
\[
P^{X,Y}_{\psi}(x,y)=\| E^{X}(\{x\})E^{Y}(\{y\})\psi\|^2.
\]
Analogously, we define the joint probability distribution $P^{Y,X}_{\psi}(y,x)$ obtained
by the projective $X$ measurement and the immediately following $Y$ measurement
(see Figure \ref{fig:one}).  Then,
we have
\[
P^{Y,X}_{\psi}(y,x)=\| E^{Y}(\{y\})E^{X}(\{x\})\psi\|^2.
\]

\begin{figure}[h]
  \begin{center}
\includegraphics[width=12cm]{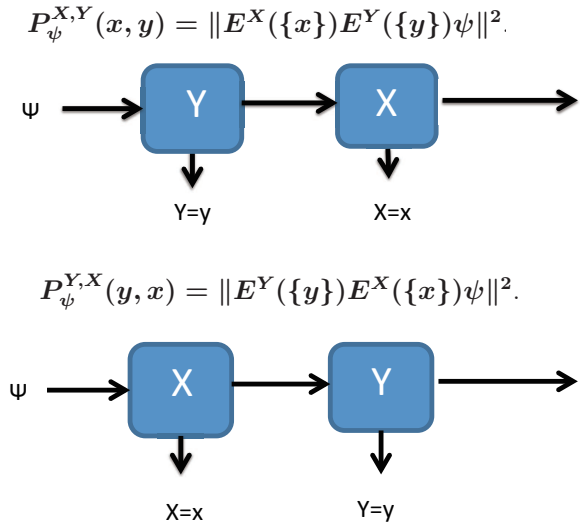}
\caption{Successive projective measurements}
 \label{fig:one}
  \end{center}
\end{figure}

Then, we have the following.

\begin{Theorem}
Let $X$ and $Y$ be observables on a finite dimensional Hilbert space $\cH$
and $\psi$ be a state in $\cH$.   Then, we have the following.
\begin{enumerate}[\rm (i)]\itemsep=0in
\item $\psi\in\cR(\vval{\tX\le \tY}_S)$ if and only if 
$P^{X,Y}_{\psi}(x,y)=0$ for any $x,y\in\R$ such that $x>y$.
\item $\psi\in\cR(\vval{\tX\le \tY}_C)$ if and only if 
$P^{Y,X}_{\psi}(y,x)=0$ for any $x,y\in\R$ such that $x>y$.
\item $\psi\in\cR(\vval{\tX\le \tY}_R)$ if and only if 
$P^{Y,X}_{\psi}(y,x)= P^{X,Y}_{\psi}(x,y)=0$ for any
$x,y\in\R$ such that $x>y$.
\eenum
\end{Theorem}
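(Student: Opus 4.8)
The strategy is to reduce each of the three statements to a computation of $\Inf_{r\in\Q} E^{Y}(r)\Thenjj E^{X}(r)$ using the characterizations of the conditionals from Theorem~\ref{th:com}, and then translate the resulting operator identities into vanishing statements about the two joint distributions. I would begin from the formula already derived in the Proposition above, namely $\vval{\tX\le \tY}_j=\Inf_{r\in\Q} E^{Y}(r)\Thenjj E^{X}(r)$, so that $\psi\in\cR(\vval{\tX\le\tY}_j)$ holds if and only if $\psi\in\cR(E^{Y}(r)\Thenjj E^{X}(r))$ for every $r\in\Q$. For (i) I apply Theorem~\ref{th:com}(i) with $P=E^{Y}(r)$ and $Q=E^{X}(r)$: the condition becomes $E^{X}(r)^{\perp}E^{Y}(r)\psi=0$ for all rational $r$, i.e. $(I-E^{X}(r))E^{Y}(r)\psi=0$. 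For (ii) I apply Theorem~\ref{th:com}(ii): the condition becomes $E^{Y}(r)E^{X}(r)^{\perp}\psi=0$, i.e. $E^{Y}(r)(I-E^{X}(r))\psi=0$. For (iii) I apply Theorem~\ref{th:com}(iii), which gives the conjunction of both conditions. So after these substitutions it remains to show that the stated $P^{X,Y}$-condition (resp. $P^{Y,X}$-condition) is equivalent to the corresponding operator equation holding for all $r\in\Q$.

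The bridge between the two formulations is the finite-dimensionality: since $\cH$ is finite dimensional, each spectral measure is supported on finitely many points, so for every real $r$ we have $E^{X}(r)=E^{X}((-\infty,r])=\sum_{k:x_k\le r}E^{X}(\{x_k\})$, and it suffices to let $r$ range over the (finitely many) values $x_1,\dots,x_n$ and $y_1,\dots,y_m$ — more precisely over a finite set separating consecutive eigenvalues. I would first observe that $(I-E^{X}(r))E^{Y}(r)\psi=0$ for all $r$ is equivalent, by telescoping over the spectral projections, to $E^{X}(\{x\})^{\perp}$ annihilating $E^{Y}(\{y\})\psi$ whenever $x>y$; concretely, $(I-E^{X}(y))E^{Y}(\{y\})\psi=\sum_{x>y}E^{X}(\{x\})E^{Y}(\{y\})\psi$, and since the summands are mutually orthogonal, this vanishes iff $E^{X}(\{x\})E^{Y}(\{y\})\psi=0$ for every $x>y$, i.e. iff $\|E^{X}(\{x\})E^{Y}(\{y\})\psi\|^2=P^{X,Y}_{\psi}(x,y)=0$ for all $x>y$. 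That proves (i). Statement (ii) is the same argument with the roles of the two operators in the product reversed: $E^{Y}(r)(I-E^{X}(r))\psi=0$ for all $r$ unwinds to $E^{Y}(\{y\})E^{X}(\{x\})\psi=0$ for all $x>y$, i.e. $P^{Y,X}_{\psi}(y,x)=0$ for all $x>y$. Statement (iii) is then immediate by intersecting (i) and (iii)'s operator conditions, since Theorem~\ref{th:com}(iii) gives exactly the conjunction.

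The main obstacle is the telescoping step: one has to be careful that "$(I-E^{X}(r))E^{Y}(r)\psi=0$ for all $r\in\Q$" genuinely captures "$E^{X}(\{x\})E^{Y}(\{y\})\psi=0$ for all pairs $x>y$" and not merely for pairs of the special form needed. The clean way is to fix $y=y_\ell$ and choose $r$ just below the next $X$-eigenvalue above $y_\ell$ (or simply $r=y_\ell$ itself if no $X$-eigenvalue lies in a gap), use that $E^{Y}(r)\psi\succeq E^{Y}(\{y_\ell\})\psi$ in the sense that $E^{Y}(\{y_\ell\})\psi$ is in the range of $E^Y(r)$, and exploit orthogonality of the family $\{E^{X}(\{x_k\})\}_k$ to conclude each term vanishes separately; one also notes that $E^X(\{x\})E^Y(\{y\})\psi=0$ for $x\le y$ is automatically consistent and imposes nothing. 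Once the finite-support bookkeeping is set up, everything else is routine: the equivalence with the norm-squared expressions $P^{X,Y}_{\psi}$ and $P^{Y,X}_{\psi}$ is just the definition of those joint distributions recalled before Figure~\ref{fig:one}.
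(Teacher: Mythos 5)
Your overall route is the paper's: you reduce, via the order formula $\vval{\tX\le\tY}_j=\Inf_{r\in\Q}\,(E^{Y}(r)\Thenjj E^{X}(r))$ and the range characterizations of the three conditionals, to the conditions $E^{X}(r)^{\perp}E^{Y}(r)\psi=0$ for all $r$ (case $S$), $E^{Y}(r)E^{X}(r)^{\perp}\psi=0$ (case $C$), and their conjunction (case $R$), and then translate these into the vanishing of $P^{X,Y}_{\psi}$, resp.\ $P^{Y,X}_{\psi}$, on the region $x>y$, treating (ii) as the mirror of (i) and (iii) as the conjunction; the paper does exactly this, proving (i) in detail and calling the rest routine.

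However, the one step that carries the weight --- passing from the cumulative condition $E^{X}(\la)^{\perp}E^{Y}(\la)\psi=0$ for all $\la$ to the pointwise condition $E^{X}(y_{\ell})^{\perp}E^{Y}(\{y_{\ell}\})\psi=0$ --- is not established by what you write. Your ``clean way'' invokes the fact that $E^{Y}(\{y_{\ell}\})\psi$ lies in the range of $E^{Y}(r)$; this is true but does not help: the hypothesis controls only the specific vector $E^{Y}(r)\psi$, and a spectral component $E^{Y}(\{y_{\ell}\})\psi$ of a vector belonging to $\cR(E^{X}(r))$ need not itself belong to $\cR(E^{X}(r))$, because $E^{Y}(\{y_{\ell}\})$ and $E^{X}(r)$ need not commute. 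The missing ingredient is to difference the hypothesis at two consecutive cuts: $E^{Y}(\{y_{\ell}\})\psi=E^{Y}(y_{\ell})\psi-E^{Y}(y_{\ell-1})\psi$, where $E^{Y}(y_{\ell})\psi\in\cR(E^{X}(y_{\ell}))$ by the hypothesis at $r=y_{\ell}$ and $E^{Y}(y_{\ell-1})\psi\in\cR(E^{X}(y_{\ell-1}))\subseteq\cR(E^{X}(y_{\ell}))$ by the hypothesis at $r=y_{\ell-1}$ together with the monotonicity of $\la\mapsto E^{X}(\la)$ (with rational cuts chosen in the gaps between eigenvalues, as you note); this differencing is exactly what the paper's induction over $y_{1}<\dots<y_{m}$ accomplishes. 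Once that is in place, your orthogonality decomposition $(1-E^{X}(y))E^{Y}(\{y\})\psi=\sum_{x>y}E^{X}(\{x\})E^{Y}(\{y\})\psi$ does finish the forward direction of (i), the converse is as easy as you say, (ii) follows by the same differencing applied to the complements $1-E^{X}(r)$ (equivalently, by applying (i) to the pair $-Y,-X$), and (iii) is the intersection of the two conditions (you write ``(i) and (iii)'' where you mean ``(i) and (ii)''). So the gap is local and repairable, but as written the key implication is asserted rather than proved.
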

\bproof
Let $\psi\in\cR(\vval{\tX\le\tY}_S)$.  From Theorem \ref{th:com} we have
$E^{X}(\la)^\perp E^{Y}(\la)\psi=0$ for any $\la\in\R$.
Now we shall show the relation
\beql{r}
E^{X}(\la)^\perp E^{Y}(\{\la\})\psi= 0
\eeq
for any $\la\in\R$.
If $\la$ is not an eigenvalue of $Y$, we have $E^{Y}(\{\la\})=0$
and relation \eq{r} follows.  Suppose $\la=y_k$.  If $k=1$, then 
$E^{Y}(\la)=E^{Y}(\{\la\})$ and hence relation \eq{r} follows.
By induction we assume $E^{X}(y_j)^\perp E^{Y}(\{y_j\})\psi= 0$ for all $j<k$.
Since $E^{X}(y_k)^{\perp}E^{X}(y_j)^\perp=E^{X}(y_k)^{\perp}$, we have
$E^{X}(y_k)^\perp E^{Y}(\{y_j\})\psi= 0$ for all $j<k$.
Thus, we have $E^{X}(y_k)^\perp E^{Y}(y_{k-1})\psi=\sum_{j=1}^{k-1}
E^{X}(y_k)^\perp E^{Y}(\{y_j\})\psi= 0$.
It follows that $E^{X}(\la)^\perp E^{Y}(\{\la\})\psi=E^{X}(\la)^\perp E^{Y}(\la)\psi
-E^{X}(y_k)^\perp E^{Y}(\{y_j\})\psi=0$.
Thus, relation \eq{r} holds for any $\la\in\R$.
Thus, if $x>y$ then we have 
$P^{X,Y}_{\psi}(x,y)=\|E^{X}(\{x\}) E^{Y}(\{y\})\psi\|^2= 0$.
Conversely, suppose that the last equation holds.  
Then, we have $E^{X}(\{x\})E^{Y}(\{y\})\psi=0$ for all
$x>y$, so that it easily follows that 
$E^{X}(\la)^{\perp}E^{Y}(\la)\psi=0$ for every $\la\in\R$.
Thus, assertion (i) follows from Theorem \ref{th:com}.
The rest of the assertions follow routinely.
\eproof

Note that $P^{X,Y}_{\psi}(x,y)=0$ for any $x,y\in\R$ such that $x>y$ if and only
if $\sum_{x\le y}P^{X,Y}_{\psi}(x,y)=1$ if and only if the outcome of the 
$X$-measurement is less than or equal to the outcome of the $Y$-measurement 
in a successive $(Y,X)$-measurement with probability 1.
Similarly, $P^{Y,X}_{\psi}(y,x)=0$ for any $x,y\in\R$ such that $x>y$ if and only if
$\sum_{x\le y}P^{Y,X}_{\psi}(y,x)=1$
if and only if
the outcome of the $X$-measurement is less than or equal to the outcome of the $Y$-measurement 
in a successive $(X,Y)$-measurement with probability 1.

\section{Conclusion}

In quantum logic there are at least three candidates for conditional operation,
called the Sasaki conditional, the contrapositive Sasaki conditional,
and the relevance conditional.  In this paper, we have attempted to
develop quantum set theory based on quantum logics with those three 
conditionals, each of which defines different quantum logical truth
value assignment.  We have shown that those three models 
satisfy the transfer principle of the same form to determine the quantum 
logical truth values of theorems of the ZFC set theory.
We also show that the reals in the model and the truth values of their equality 
are the same for those models.
Interestingly, however, we have revealed  that the order relation between 
quantum reals significantly depends on the underlying conditionals.  
In particular, we have completely characterized the operational meanings of 
those order relations in terms of joint probability obtained by the successive
projective measurements of arbitrary two observables.
Those characterizations clearly show their individual features 
and will play a fundamental role in future applications to quantum physics.


\end{document}